\newtheorem{theorem}{Theorem}
\newtheorem{lemma}{Lemma}
\newtheorem{definition}{Definition}
\newtheorem{corollary}{Corollary}
\newtheorem{remark}{Remark}
\newcommand{\nd}{{\rm and }}
\newcommand{\mR}{\mathbb{R}}
\newcommand{\mN}{\mathbb{N}}
\newcommand{\mE}{\mathbb{E}}
\newcommand{\mZ}{\mathbb{Z}}
\newcommand{\mS}{\mathbb{S}}
\newcommand{\mB}{\mathbb{B}}
\newcommand{\cM}{\mathcal{M}}
\newcommand{\cH}{\mathcal{H}}
\newcommand{\cP}{\mathcal{P}}
\newcommand{\cS}{\mathcal{S}}
\newcommand{\cL}{\mathcal{L}}
\newcommand{\ux}{\underline{x}}
\newcommand{\uxi}{\underline{\xi}}
\newcommand{\px}{\partial_x}
\newcommand{\upx}{\partial_{\underline{x}}}
\begin{document}
\title{$q$-deformed harmonic and Clifford analysis and the $q$-Hermite and Laguerre polynomials}

\author{K.\ Coulembier\thanks{Corresponding author} \thanks{Ph.D. Fellow of the Research Foundation - Flanders (FWO), E-mail: {\tt Coulembier@cage.ugent.be}},  F.\ Sommen\thanks{E-mail: {\tt fs@cage.ugent.be}}}

\date{\small{Clifford Research Group -- Department of Mathematical Analysis}\\
\small{Faculty of Engineering -- Ghent University\\ Krijgslaan 281, 9000 Gent,
Belgium}}

\maketitle

\begin{abstract}
We define a $q$-deformation of the Dirac operator, inspired by the one dimensional $q$-derivative. This implies a $q$-deformation of the partial derivatives. By taking the square of this Dirac operator we find a $q$-deformation of the Laplace operator. This allows to construct $q$-deformed Schr\"odinger equations in higher dimensions. The equivalence of these Schr\"odinger equations with those defined on $q$-Euclidean space in quantum variables is shown. We also define the $m$-dimensional $q$-Clifford-Hermite polynomials and show their connection with the $q$-Laguerre polynomials. These polynomials are orthogonal with respect to an $m$-dimensional $q$-integration, which is related to integration on $q$-Euclidean space. The $q$-Laguerre polynomials are the eigenvectors of an $su_q(1|1)$-representation.
\end{abstract}

\textbf{MSC 2000 :}  15A66, 33D50, 17B37,  33D45\\
\noindent
\textbf{Keywords :}   Dirac operator, $q$-orthogonal polynomials, $q$-isotropic Schr\"odinger equation, quantum algebra

\newpage
\section{Introduction}

Quantum algebras are $q$-deformed versions of universal enveloping algebras of Lie algebras, the latter are recovered as the deformation parameter $q$ goes to unity. The study of quantum algebras leads to the use of mathematical tools of $q$-analysis, see \cite{MR0708496, MR1052153}. In \cite{JACKSON} Jackson originally introduced the $q$-analogues of differentiation, integration and special functions in the context of $q$-hypergeometric series (also known as basic hypergeometric series). In particular, there are connections between representations of quantum algebras and $q$-special functions (\cite{MR1121848}) and $q$-calculus (\cite{MR128259}). Also in the framework of $q$-harmonic analysis of this paper, we will obtain an $su_q(1|1)$-representation for which the $q$-Laguerre polynomials (see \cite{MR0708496}) are eigenvectors.

In \cite{MR1288667} $q$-analysis is used to solve the $SO_q(m)$-invariant Schr\"odinger equation in quantum Euclidean space, see \cite{MR1239953}. The results in \cite{MR2316312} about $q$-difference equations can be used to solve more general $SO_q(m)$-invariant Schr\"odinger equations in quantum Euclidean space. Also the objects of $q$-harmonic analysis developed in this article can be used to study quantum Euclidean space.

The interest for quantum algebras in physics was partly triggered by the introduction of the $q$-deformed harmonic oscillator (see \cite{Bonatsos} for an overview). The first approaches however, lacked any dynamical content behind the hamiltonian. In \cite{MR128259} an overview is given of different realizations of the $q$-Heisenberg algebra, using the $q$-derivative, leading to the $q$-harmonic oscillator. In \cite{MR2455812} a procedure for general $q$-deformed quantum mechanics was constructed using the $q$-derivative. Until now the higher dimensional $q$-deformed isotropic oscillator is only defined in quantum Euclidean space or in undeformed space by an unnatural separation of the radial part.

The $q$-harmonic oscillators lead to the $q$-Hermite polynomials see e.g. \cite{MR1448659}. In \cite{MR0708496, MR0599807} it was shown that the $q$-Hermite polynomials are orthogonal with respect to $q$-integration and have annihilation and creation operators using the $q$-derivative. Because all the different types of the $q$-Hermite polynomials satisfy many properties that are analogues of properties of the Hermite polynomials, see e.g. \cite{MR1448659, MR0599807, MR2392899}, they are interesting objects of study themselves.

In this study we define a theory of $q$-deformed derivatives in higher dimensions and a $q$-deformed Laplace operator acting on functions with commuting variables. Since we use Clifford analysis (\cite{MR697564, MR1169463}) for this, we define a $q$-deformed Dirac operator with its square a $q$-Laplace operator. As the undeformed $SO(m)$-invariant harmonic operators generate the Lie algebra $\mathfrak{sl}_2(\mR)$ we now find a $q$-deformation of this algebra. This leads to a Howe dual pair (\cite{MR0986027}) with a quantum algebra, $\left(SO(m),\mathfrak{sl}_2(\mR)_q\right)$. 

Because the $q$-Laplace operator is scalar, it can be expressed without Clifford algebras. However, the Clifford-approach to this $q$-Laplace operator is more natural. The resulting $q$-Laplace operator can be used to put an existing $q$-deformation of the isotropic Schr\"odinger equation in undeformed space (\cite{MR2316312, MR1447846, Papp}) in a complete setting. This equation has its origin in quantum Euclidean space, see \cite{MR1288667, MR1239953, MR1097174}. Using the $q$-Laplace operator, the angular and radial part are reunited in a complete Schr\"odinger equation in undeformed space. This quantum system has the same energy spectrum as the Schr\"odinger equation in quantum Euclidean space. 

Using the $q$-deformed Dirac operator we can define a $q$-deformation of the Clifford-Hermite polynomials. These are higher dimensional generalizations of the one dimensional Hermite polynomials, see \cite{MR926831}. Similar to the undeformed case there is a connection between the $q$-Clifford Hermite polynomials and the one dimensional $q$-Laguerre polynomials which were introduced in \cite{MR0486697, MR0618759}. Once again, this justifies the choice of our $q$-Dirac operator. Using this construction we obtain realizations of $\mathfrak{su}(1|1)_q$ acting on $\mR[t]$ for which the $q$-Laguerre polynomials are the eigenvectors. This is a concrete generalization of the occurrence of the Laguerre polynomials as formal eigenvectors in the $l^2(\mZ_+)$ representation space for $\mathfrak{su}(1|1)$, see e.g. \cite{MR1447890}.

The paper is organized as follows. First we repeat some facts about quantum numbers and derivatives and give a short introduction to Clifford analysis. From a list of axioms we derive a unique $q$-Dirac operator, which leads to a $q$-Laplace operator. We show an important connection with the $SO(m)_q$-invariant Laplace operator in $q$-Euclidean space. Then we construct an integration which leads to $q$-Cauchy formulas. Finally we define the $q$-Clifford-Hermite polynomials and prove their most important properties.  

\section{Preliminaries}
\label{preliminaries}

We give a short introduction to quantum numbers ($q$-numbers), $q$-derivatives and $q$-integration, see \cite{MR0708496, MR1052153, JACKSON, Bonatsos}. For $u$ a number or operator, and $q$ the deformation parameter, we define (where it exists) the $q$-deformation of $u$ by
\begin{eqnarray*}
[u]_q=\frac{q^u-1}{q-1}.
\end{eqnarray*}

It is clear that $\lim_{q\to 1}[u]_q=u$. In this paper we assume $q\in\mR^+$. The $q$-derivative of a function $f(t)$ is defined by
\begin{eqnarray}
\label{qder}
\partial^q_t (f(t))&=&\frac{f(qt)-f(t)}{(q-1)t}.
\end{eqnarray}

For this to exist, the function has to be defined in $t$ and $qt$ and has to be differentiable in the origin. From the definition we find
\begin{equation*}
\partial^q_t (t^k)=\frac{q^k-1}{q-1}t^{k-1}=[k]_qt^{k-1}
\end{equation*}

and the Leibniz rule
\begin{eqnarray}
\label{qderiv}
\partial^q_t t=qt \partial^q_t+1.
\end{eqnarray}
This is a special case of the following two Leibniz rules
\begin{eqnarray}
\label{Leibniz}
\partial_t^q(f_1(t)f_2(t))&=&\partial_t^q(f_1(t))f_2(t)+f_1(qt)\partial_t^q(f_2(t))\\
\label{Leibniz2}
&=&\partial_t^q(f_1(t))f_2(qt)+f_1(t)\partial_t^q(f_2(t)).
\end{eqnarray}

For $q<1$, the $q$-integration on an interval $[0,a]$ with $a\in \mR$ is given by
\begin{eqnarray}
\label{1Dqint}
\int_{0}^af(t)\,d_qt&=&(1-q)a\sum_{k=0}^\infty f(aq^k)q^k.
\end{eqnarray}

More general intervals are defined by $\int_a^b=\int_0^b-\int_0^a$ and satisfy the important property
\begin{eqnarray}
\label{partiele}
\int_{a}^b\left(\partial_t^q f\right)(t)\,d_qt&=&f(b)-f(a).
\end{eqnarray}

The $q$-factorial of an integer $k$ is given by $[k]_q!=[k]_q[k-1]_q\cdots [1]_q.$ This leads to the introduction of the $q$-exponential
\begin{eqnarray}
\label{exp1}
E_q(t)&=&\sum_{j=0}^\infty\frac{t^j}{[j]_q!}.
\end{eqnarray}
In order to find its inverse we define a second $q$-exponential by
\begin{eqnarray}
\label{exp2}
e_q(t)=E_{q^{-1}}(t)&=&\sum_{j=0}^\infty q^{\frac{1}{2}j(j-1)}\frac{t^j}{[j]_q!}.
\end{eqnarray}
Now $E_q(t)e_q(-t)=1$, see \cite{MR1052153, MR1121848, Bonatsos}. It is easily calculated that
\begin{eqnarray}
\label{aflexp}
\partial_t^qE_q(t)=E_q(t)&, &\partial_t^qe_q(t)=e_q(qt).
\end{eqnarray}
The series $E_q(t)$ converges absolutely and uniformly everywhere if $q>1$ and for $|t|<\frac{1}{1-q}$ if $q<1$, see \cite{MR0708496}. The $q$-binomial coefficients are defined by
\begin{eqnarray*}
\binom{n}{k}_q&=&\frac{[n]_q!}{[n-k]_q![k]_q!}.
\end{eqnarray*}

We can also define the $q$-Gamma function for $q<1$
\begin{eqnarray*}
\Gamma_q(t)&=&\frac{\prod_{k=1}^\infty(1-q^k)}{\prod_{k=0}^\infty(1-q^{t+k})}(1-q)^{1-t},
\end{eqnarray*}

with the property that $\Gamma_q(t+1)=[t]_q\Gamma_q(t)$, see \cite{MR1052153}. The function $\Gamma_q$ admits the following integral representation
\begin{eqnarray}
\label{intGamma}
\Gamma_q(z)&=&\int_0^{\frac{1}{1-q}}d_qt\,t^{z-1}e_{q}(-qt),
\end{eqnarray}
see \cite{MR0708496, MR1121848}. Sometimes we will encounter expressions which we will write as $q^2$-deformations, for example $[2u]_q=(q+1)[u]_{q^2}$, therefore we fix the notation $Q=q^2$.

Now we briefly recall the basic notions of Clifford analysis. For more details we refer the reader to \cite{MR697564, MR1169463}. Denote by $\mR_{0,m}$ the Clifford algebra generated by an orthonormal basis $(e_1,\cdots, e_m)$ for $\mR^m$ with multiplication rules
\begin{eqnarray}
\label{Clifford}
e_ie_j+e_je_i&=&-2\delta_{ij}
\end{eqnarray}

for $1\le i,j\le m$. The algebra generated by these Clifford numbers and the $m$ commuting variables $x_j$, which commute with $e_i$, $1\le i\le m$, is the algebra of Clifford valued polynomials $\cP=\mR[x_1,\cdots, x_m]\otimes\mR_{0,m}$. The vector variable is identified with the first order Clifford polynomial of the form $\ux=\sum_{j=1}^me_j{x_j}$. Using (\ref{Clifford}) we find that the square of this vector variable is scalar valued, $\ux^2=-\sum_{j=1}^mx_j^2=-r^2$. The corresponding vector derivative in the vector variable $\ux $ is the Dirac operator,
\begin{equation*}
\label{diracoperator}
\upx=-\sum_{j=1}^me_j\partial_{x_j}.
\end{equation*}

The square of the Dirac operator is again scalar, $\upx^2=-\Delta$, with $\Delta$ the Laplace operator. Using the Clifford multiplication rules (\ref{Clifford}) we can calculate
\begin{eqnarray}
\label{anticomxpx}
\{\ux,\upx\}=\upx \ux+\ux\upx&=&2\mE+m,
\end{eqnarray}
with $\mE=\sum_{j=1}^mx_j\partial_{x_j}$ the Euler operator. In particular we find $\upx (\ux)=m$ and
\begin{eqnarray}
\label{comxxpx}
\upx \ux^2&=&\ux^2\upx +2 \ux.
\end{eqnarray}

We will use the notation $f(\ux)=f(x_1,\cdots,x_m)$. Clifford analysis deals with the function theory of solutions of $\upx f(\ux)=0$, called monogenic functions, in particular monogenic polynomials of degree $k$. 

\begin{definition}
An element $F \in \cP$ is a spherical monogenic of degree $k$ if it satisfies
\begin{eqnarray*}
\upx F =0 &\mbox{and}& \mE F = kF.
\end{eqnarray*}

The space of all spherical monogenics of degree $k$ is denoted by $\cM_k$.
\end{definition}

In the same way we can define the space of spherical harmonics of degree $k$, $\cH_k$, as the null solutions of the Laplace operator, clearly $\cM_k\subset\cH_k$. We have the following well-known decomposition of the space of polynomials. 

\begin{lemma}[Fischer decomposition I]
The vector space $\cP_k$ decomposes as
\begin{equation*}
\cP_k = \bigoplus_{i=0}^{\lfloor k/2 \rfloor} \ux^{2i} \cH_{k-2i}.
\end{equation*}
This decomposition is unique, hence $\sum_{i}\ux^{2i}H_{k-2i}=0$, with ($H_{j}\in\cH_{j}$) implies $H_{k-2i}=0$ for every $i$.
\label{fischerdecomp}
\end{lemma}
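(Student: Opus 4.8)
The plan is to prove this by induction on $k$. The key tool is the operator identity \eqref{comxxpx}, namely $\upx \ux^2 = \ux^2 \upx + 2\ux$, together with the fact that $\ux^2 = -r^2$ is scalar, so multiplication by $\ux^2$ raises degree by $2$ and maps $\cP_{k-2}$ injectively into $\cP_k$ (injectivity because $\cP$ has no zero divisors of this type — $\ux^2 = -\sum x_j^2$ is a nonzero polynomial and $\cP$ is a polynomial ring tensored with $\mR_{0,m}$). First I would establish the crucial sub-claim: for any $P \in \cP_k$, one can write $P = H_k + \ux^2 R$ with $H_k \in \cH_k$ and $R \in \cP_{k-2}$, and this splitting is \emph{unique}. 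Granting the sub-claim, the full decomposition follows immediately: apply the sub-claim to $P$, then apply the induction hypothesis to $R \in \cP_{k-2}$ to expand it as $\bigoplus_{i=0}^{\lfloor (k-2)/2 \rfloor} \ux^{2i}\cH_{k-2-2i}$, and multiply through by $\ux^2$ to get the tail $\bigoplus_{i=1}^{\lfloor k/2 \rfloor}\ux^{2i}\cH_{k-2i}$; the base cases $k = 0, 1$ are trivial since $\cP_0 = \cH_0$ and $\cP_1 = \cH_1$ (degree-$0$ and degree-$1$ polynomials are automatically harmonic). Uniqueness of the whole sum then propagates down by the same induction, using injectivity of multiplication by $\ux^2$.

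For the sub-claim I would use a dimension/rank argument on the linear maps between finite-dimensional spaces. Consider the Laplacian $\Delta : \cP_k \to \cP_{k-2}$ and the multiplication map $\ux^2 : \cP_{k-2} \to \cP_k$. By definition $\cH_k = \ker(\Delta|_{\cP_k})$. The statement $\cP_k = \cH_k \oplus \ux^2\cP_{k-2}$ is equivalent to showing that $\Delta$ restricted to the subspace $\ux^2 \cP_{k-2}$ is a bijection onto $\cP_{k-2}$; once that is known, every $P$ decomposes as $(P - \ux^2 R) + \ux^2 R$ where $R \in \cP_{k-2}$ is the unique element with $\Delta(\ux^2 R) = \Delta P$, and the first summand lies in $\ker\Delta = \cH_k$, while the sum is direct because $\ux^2\cP_{k-2}\cap\cH_k$ maps injectively under $\Delta$ (it is both in the kernel and in a space on which $\Delta$ is injective). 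So the heart of the matter is: $\Delta \circ (\ux^2 \cdot) : \cP_{k-2}\to\cP_{k-2}$ is invertible. To see this, I would iterate the commutator relation. From $\upx^2 = -\Delta$ and \eqref{comxxpx} one computes $\Delta(\ux^2 S) = \ux^2 \Delta S + (\text{lower operator})\, S$; more precisely, using \eqref{anticomxpx} and \eqref{comxxpx} together, on a homogeneous $S$ of degree $j = k-2$ one gets an identity of the schematic form $\Delta(\ux^2 S) = \ux^2\Delta S + c_j S$ with $c_j = 2(m + 2j) \neq 0$ (for $m \geq 1$, $j \geq 0$ and $q$ irrelevant here — this is the undeformed statement). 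Then on the filtration of $\cP_{k-2}$ by powers of $\ux^2$ times harmonics, the map $S \mapsto \Delta(\ux^2 S)$ is upper-triangular with nonzero diagonal entries $c_j$, hence invertible.

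The main obstacle I anticipate is making the operator identity $\Delta(\ux^2 S) = \ux^2 \Delta S + 2(m+2j)S$ fully rigorous for $S$ homogeneous of degree $j$ — one must be careful that this holds for \emph{all} homogeneous polynomials, not just harmonic ones, and derive it cleanly from the given relations \eqref{anticomxpx} and \eqref{comxxpx} rather than re-deriving everything by brute-force coordinate computation. Concretely, writing $\Delta = -\upx^2$, one has $\upx^2 \ux^2 = \upx(\ux^2\upx + 2\ux) = \ux^2\upx^2 + 2\ux\upx + 2\upx\ux$ and then $\upx\ux + \ux\upx = 2\mE + m$ from \eqref{anticomxpx}, so $\upx^2\ux^2 = \ux^2\upx^2 + 2(2\mE + m) - 2\ux\upx + (\text{reorganize})$; once one resolves the bookkeeping, evaluating on degree-$j$ homogeneous $S$ where $\mE S = jS$ gives the claimed constant. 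Everything else (the inductions, the triangularity, the uniqueness) is then routine linear algebra over the finite-dimensional graded pieces $\cP_k$.
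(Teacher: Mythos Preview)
The paper does not supply a proof of this lemma; it is stated as a well-known fact from classical Clifford analysis (with implicit reference to \cite{MR697564, MR1169463}). Your inductive plan is one of the standard arguments and is essentially correct.

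Two clarifications will tighten the write-up. First, the operator identity you flag as the main obstacle in fact comes out cleanly: from \eqref{comxxpx} one has $\upx^2\ux^2 = (\ux^2\upx + 2\ux)\upx + 2\upx\ux = \ux^2\upx^2 + 2(\ux\upx + \upx\ux) = \ux^2\upx^2 + 2(2\mE + m)$ by \eqref{anticomxpx}, so on homogeneous $S$ of degree $j$ one gets exactly $\Delta(\ux^2 S) = \ux^2\Delta S - 2(2j+m)S$ with no leftover terms (your sign on $c_j$ is off, but this is immaterial). Second, once you invoke the inductive Fischer decomposition of $\cP_{k-2}$, the map $T = \Delta\circ(\ux^2\cdot)$ is actually \emph{diagonal}, not just upper-triangular: $T(\ux^{2i}H) = \Delta(\ux^{2(i+1)}H)$ is a nonzero scalar multiple of $\ux^{2i}H$ for $H\in\cH_{k-2-2i}$, by iterating the identity above (cf.\ the $q=1$ case of \eqref{1dirac1}--\eqref{1dirac2}). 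Be aware that the other route you hint at --- writing $T = c_{k-2}\,\mathrm{Id} + \ux^2\Delta$ and hoping $\ux^2\Delta$ is nilpotent --- does \emph{not} work: $\ux^2\Delta$ acts by the nonzero scalar $-2i(m+2p+2i-2)$ on $\ux^{2i}\cH_p$ for $i\ge 1$, so invertibility of $T$ genuinely relies on the induction hypothesis.

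A shorter alternative, standard in the literature, bypasses the triangularity analysis: equip $\mR[x_1,\dots,x_m]$ with the Fischer inner product $\langle P,Q\rangle = (P(\partial)Q)(0)$, for which multiplication by $r^2$ is adjoint to $\Delta$; then $\cP_k = \cH_k \oplus r^2\cP_{k-2}$ is immediately an orthogonal decomposition, and the full statement follows by the same induction.
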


Since $\Delta$ is scalar, we can replace $\cP$ with $\mR[x_1,\cdots,x_m]$ in the previous lemma. The decomposition can be refined to

\begin{lemma}[Fischer decomposition II]
The vector space $\cP_k$ decomposes as
\begin{equation*}
\cP_k = \bigoplus_{i=0}^{k} \ux^{i} \cM_{k-i}.
\end{equation*}
This decomposition is unique, hence $\sum_{j}\ux^{k-j}M_{j}=0$ (with $M_{j}\in\cM_{j}$) implies $M_{j}=0$ for every $j$.
\label{cliffordfischerdecomp}
\end{lemma}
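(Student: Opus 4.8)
The plan is to obtain Fischer decomposition II from Fischer decomposition I (Lemma~\ref{fischerdecomp}) by refining each space of spherical harmonics into two monogenic pieces. The crucial auxiliary claim is that for every $\ell\ge 0$
\[
\cH_\ell=\cM_\ell\oplus\ux\,\cM_{\ell-1},
\]
with the convention $\cM_{-1}=0$ (so the case $\ell=0$ is trivial). To prove this, I would take an arbitrary $H\in\cH_\ell$ and observe that $\upx H$ has degree $\ell-1$ and satisfies $\upx(\upx H)=\upx^2H=-\Delta H=0$, hence $\upx H\in\cM_{\ell-1}$. From the anticommutator relation $\{\ux,\upx\}=2\mE+m$ of (\ref{anticomxpx}) one gets, for any homogeneous monogenic $P$ of degree $d$, the identity $\upx\ux P=(2d+m)P$; applying it to $P=\upx H$ gives $\upx\ux(\upx H)=(2\ell+m-2)\,\upx H$. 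Since $2\ell+m-2\neq0$ for all $m\ge1$ and $\ell\ge1$, the element
\[
M:=H-\frac{1}{2\ell+m-2}\,\ux\,\upx H
\]
has degree $\ell$ and is annihilated by $\upx$, so $M\in\cM_\ell$, while $H=M+\ux\bigl(\tfrac{1}{2\ell+m-2}\,\upx H\bigr)$ exhibits the desired splitting. The reverse inclusions are immediate: $\cM_\ell\subset\cH_\ell$, and for $M_{\ell-1}\in\cM_{\ell-1}$ one has $\Delta(\ux M_{\ell-1})=-\upx^2(\ux M_{\ell-1})=-\upx\bigl((2\ell+m-2)M_{\ell-1}\bigr)=0$. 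Directness comes from the same constant: if $\ux M_{\ell-1}\in\cM_\ell$, then $0=\upx\ux M_{\ell-1}=(2\ell+m-2)M_{\ell-1}$, forcing $M_{\ell-1}=0$.

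With this claim I would substitute $\cH_{k-2i}=\cM_{k-2i}\oplus\ux\,\cM_{k-2i-1}$ into Lemma~\ref{fischerdecomp}:
\[
\cP_k=\bigoplus_{i=0}^{\lfloor k/2\rfloor}\ux^{2i}\cH_{k-2i}=\bigoplus_{i=0}^{\lfloor k/2\rfloor}\Bigl(\ux^{2i}\cM_{k-2i}\oplus\ux^{2i+1}\cM_{k-2i-1}\Bigr).
\]
Multiplication by $\ux$ is injective on $\cP$ (if $\ux P=0$ then $\ux^2P=-r^2P=0$, and multiplication by $r^2=\sum_jx_j^2$ is injective on the free $\mR[x_1,\dots,x_m]$-module $\cP$), so applying $\ux^{2i}$ preserves the internal direct sum inside each summand, and the whole right-hand side remains a direct sum by the directness in Lemma~\ref{fischerdecomp}. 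Finally I would reindex: as $i$ ranges over $0,\dots,\lfloor k/2\rfloor$, the exponents $2i$ and $2i+1$ run exactly once over $0,1,\dots,k$ (a term with $2i+1>k$ carries $\cM_{-1}=0$ and drops out), and the power $\ux^{j}$ is always matched with $\cM_{k-j}$. This gives $\cP_k=\bigoplus_{j=0}^{k}\ux^{j}\cM_{k-j}$, and the claimed uniqueness is precisely this directness: $\sum_j\ux^{k-j}M_j=0$ with $M_j\in\cM_j$ forces every $M_j=0$. (Alternatively, the whole statement can be run as an induction on $k$ using the reformulation $\cP_k=\cH_k\oplus\ux^2\cP_{k-2}$ of Lemma~\ref{fischerdecomp} together with the same splitting $\cH_k=\cM_k\oplus\ux\cM_{k-1}$.)

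The degree bookkeeping and the relation $\upx\ux P=(2\deg P+m)P$ for monogenic $P$ (a direct consequence of (\ref{anticomxpx}), cf.\ also (\ref{comxxpx})) are routine. The one point that needs genuine care is the non-vanishing of the normalizing constant $2\ell+m-2$, and the fact that the directness of Fischer decomposition I, combined with injectivity of multiplication by $\ux$, truly transfers to the refined decomposition; I expect this transfer of directness to be the only place a careless argument could slip, so I would state the injectivity of $\ux\cdot$ explicitly rather than leave it implicit.
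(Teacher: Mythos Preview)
Your argument is correct. The paper itself does not supply a proof of this lemma: it is quoted as a standard fact of Clifford analysis with references to \cite{MR697564, MR1169463}, and in fact the paper deduces the splitting $H_k=M_k+\ux M_{k-1}$ (equation~\eqref{decompHk}) \emph{from} Lemma~\ref{cliffordfischerdecomp} together with \eqref{1dirac1}--\eqref{1dirac2}, rather than the other way around. Your route---first establishing $\cH_\ell=\cM_\ell\oplus\ux\cM_{\ell-1}$ directly from the anticommutator \eqref{anticomxpx}, then feeding this into Fischer~I---is the standard textbook derivation, and all the steps (the computation $\upx\ux P=(2\deg P+m)P$ for monogenic $P$, the non-vanishing of $2\ell+m-2$ for $\ell\ge1$, the injectivity of multiplication by $\ux$, and the transfer of directness) are handled carefully and correctly.
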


Commutation rules \eqref{anticomxpx} and \eqref{comxxpx} yield
\begin{eqnarray}
\label{1dirac1}
\upx \ux^{2l}M_k&=&2l\ux^{2l-1}M_k\\
\label{1dirac2}
\upx \ux^{2l+1}M_k&=&(2l+2k+m)\ux^{2l}M_k
\end{eqnarray}

These equations together with lemma \ref{cliffordfischerdecomp} imply that every (scalar) $H_k$ can be decomposed as
\begin{eqnarray}
\label{decompHk}
H_k&=&M_k+\ux M_{k-1}.
\end{eqnarray}

The operators $\upx$ and $\ux$ generate a finite-dimensional Lie super-algebra isomorphic to $\mathfrak{osp}(1|2)$. The even subalgebra is generated by $\upx^2$, $\ux^2$ and $\mE+m/2$ and is isomorphic to the Lie algebra $\mathfrak{sl}_2(\mR)$, see \cite{MR0986027, DBE1}. The commutation relations of the Lie super-algebra are given by
\[
\begin{array}{lllllll}
 \left[\upx^2/2, \ux^2/2\right] &=&\mE+m/2 &\;&\{\ux,\ux\} &=&2\ux^2 \\
\left[\upx^2/2, \mE+m/2\right] &=&2\upx^2/2 &\;&\{\upx, \upx \} &=&2\upx^2 \\
 \left[\ux^2/2, \mE + m/2 \right] &=&-2\ux^2/2&\;&\{\upx, \ux\} &=&2\mE+m\\
\end{array}
\]
and
\[
\begin{array}{lllllll}
\left[\ux, \ux^2\right] &=&0 &\;& \left[\upx, \ux^2\right] &=&2\ux\\
\left[\ux, \upx^2 \right] &=&-2\upx &\;& \left[\upx, \upx^2\right] &=&0\\
\left[\ux, \mE + m/2\right] &=&-\ux &\;& \left[\upx, \mE + m/2 \right] &=&\upx.\\
\end{array}
\]
An important feature in harmonic and Clifford analysis is the occurrence of Howe dual pairs, see \cite{MR0986027}. The generators of the Lie algebra $\mathfrak{sl}_2(\mR)$ are $SO(m)$-invariant. These operators acting on the module $\oplus_{j}r^{2j}\cH_k$ give an infinite-dimensional irreducible representation of $\mathfrak{sl}_2(\mR)$. The blocks $r^{2j}\cH_k'$, with $\cH_k'$ the scalar spherical harmonics, are the irreducible pieces of $\mR[x_1,\cdots,x_m]$ under the action of $SO(m)$. This can be refined to the Howe dual pair $\left(\mbox{Spin}(m),\mathfrak{osp}(1|2)\right)$ (see \cite{MR697564, MR1169463}), with Spin$(m)$ the universal cover of $SO(m)$.

These Howe dual pairs return in different generalizations of harmonic and Clifford analysis. In Dunkl harmonic analysis (see \cite{OSS}) we have the pair $\left(G,\mathfrak{sl}_2(\mR)\right)$ with $G$ a Coxeter group. In super  harmonic analysis (see \cite{DBE1}) we find the Howe dual pair $\left(SO(m)\times Sp(2n),\mathfrak{sl}_2(\mR)\right)$. The Howe dual pair for hermitian Clifford analysis can be found in \cite{herm}. By defining $q$-deformed Clifford analysis we will obtain Howe dual pairs with the quantum algebras $\mathfrak{sl}(\mR)_q$ and $\mathfrak{osp}(1|2)_q$.

The Euler operator $\mE=r\partial_r$ represents the radial part in $\ux\upx$, the angular part is given by the Gamma operator $\Gamma$,
\begin{eqnarray}
\label{pxEGamma}
\ux\upx&=&\mE+\Gamma.
\end{eqnarray}
By using $\ux=r\uxi$ with $\uxi^2=-1$, this equation can also be written as $\upx=-\underline{\xi}(\partial_r+\frac{1}{r}\Gamma)$. While the Euler operator is scalar, the Gamma operator is a bivectorial operator, $\Gamma=-\sum_{i<j}e_ie_j(x_i\partial_{x_j}-x_j\partial_{x_i})$. Using $\mE\ux=\ux\mE+\ux$ and (\ref{anticomxpx}) we obtain the commutation relations for the Gamma operator,
\begin{eqnarray}
\label{1Gamma1}
\Gamma \ux&=&\ux(m-1-\Gamma)\\
\label{1Gamma2}
\Gamma \ux^{2}&=&\ux^{2}\Gamma.
\end{eqnarray}

We will also need the main anti-involution on the Clifford algebra $\mR_{0,m}$, defined by
\begin{eqnarray*}
\overline{e_i} &=&-e_i\\
\overline{a b} &=& \overline{b} \overline{a}, \quad \mbox{for all } a,b \in \mR_{0,m}.
\end{eqnarray*}

For Clifford valued functions on the unit sphere there is an inner product
\begin{eqnarray*}
\langle f|g\rangle=\int_{\mS^{m-1}}d\xi\,\,\,\left[\,\overline{f}\,g\,\right]_0,
\end{eqnarray*}
 
with $\overline{\cdot}$ the main anti-involution and $[\cdot]_0:\mR_{0,m}\to\mR$, the projection onto the scalar part. For two spherical harmonics of degree $k\not=l$,
\begin{eqnarray}
\label{HH}
\int_{\mS^{m-1}}d\xi\,\,\,H_kH_l&=&0
\end{eqnarray}
holds. In particular, we will consider a fixed orthonormal basis of spherical monogenics $M_k^{(p)}$,
\begin{eqnarray}
\label{orthbasis}
\int_{\mS^{m-1}}d\xi\,\,\,\left[\overline{M_k^{(p)}}M_l^{(r)}\right]_0&=&\delta_{kl}\delta_{pr}.
\end{eqnarray}

\section{Definition of the operators}

\subsection{The $q$-Dirac operator}

Our aim is to obtain a $q$-deformed version of the vector derivative, or Dirac operator $\upx$ which we will denote by $\upx^q$. First we derive 4 axioms such an operator should satisfy. Inspired by $\partial_t^q(t)=1=[1]_q$ and formula (\ref{anticomxpx}) we impose $\upx^q(\ux)=[m]_q$. We also need a good $q$-deformed Leibniz rule based on (\ref{qderiv}). We deform commutation relation (\ref{comxxpx}) in stead of \eqref{anticomxpx} because $\ux^2$ is scalar. Therefore we can elegantly extend $\partial_t^qt^2=q^2t^2\partial_t^q+(q+1)t$ to
\begin{eqnarray*}
\upx^q\ux^2=q^2\ux^2\upx^q+(q+1)\ux.
\end{eqnarray*}

To obtain a $q$-deformation of the Laplace operator, $(\upx^q)^2$ has to be a scalar operator. For the last axiom we use the Fischer decomposition in lemma \ref{cliffordfischerdecomp} to find that a basis for the polynomials of degree one is given by
\begin{equation*}
\ux,\, x_1e_2+x_2e_1,\,\cdots\,,\,x_1e_m+x_me_1.
\end{equation*}
It can be shown  that all monogenic functions are Taylor series in the $x_je_1+x_1e_j$, $j\not= 1$ (see \cite{MR697564, MR1169463}). This is a generalization of the fact that holomorphic functions (null solutions of the Cauchy-Riemann operator $\partial_{\overline{z}})$ are Taylor series in $z$ and not in $\overline{z}$. Because $\upx^q$ should be a $q$-deformation of the derivative with respect to $\ux$ we do not want it to mix up with the derivation with respect to $x_1e_2+x_2e_1$. This means $\upx^q$ should satisfy
\begin{eqnarray*}
\upx^q f&=&0
\end{eqnarray*}

when $\upx f=0$. Summarizing, $\upx^q$ should satisfy the following $4$ axioms,
\begin{eqnarray*}
(A1)\;\; & &\;\;\;\; \upx^q(\ux)=[m]_q\\
(A2)\;\; & &\;\;\;\; \upx^q\ux^2=q^2\ux^2\upx^q+(q+1)\ux\\
(A3)\;\; & &\;\;\;\; (\upx^q)^2\,\mbox{is scalar}\\
(A4)\;\; & &\;\;\;\; \upx^qM_k=0.
\end{eqnarray*}

We will show that these axioms uniquely define the $q$-Dirac operator on $\cP$.
\begin{lemma}
\label{scalar}
A linear operator on $\cP$ satisfying $(A2)$ and $(A3)$ also satisfies the property that
\begin{eqnarray*}
\upx^q\ux+q^2\ux\upx^q
\end{eqnarray*}

is a scalar operator.
\end{lemma}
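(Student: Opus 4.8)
The plan is to show that $D := \upx^q\ux + q^2\ux\upx^q$ both commutes with $\ux^2$ (up to a controlled lower-order term that cancels) and commutes with $\upx^q$, and then conclude scalarity from the fact that the only Clifford-algebra-valued operators commuting with multiplication by the generic vector variable are the scalars. The key will be to exploit $(A2)$ and $(A3)$ to get two independent commutation identities for $D$.

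First I would compute $\ux D + D\ux$ using $(A2)$. Multiplying $(A2)$ on the left by nothing and grouping, $\upx^q\ux^2 = q^2\ux^2\upx^q + (q+1)\ux$; writing $\ux^2 = \ux\cdot\ux$ and using this twice lets me express $\upx^q\ux\cdot\ux$ and $\ux\cdot\upx^q\ux$ in terms of $\ux^2\upx^q$ and $\ux$. Concretely, from $(A2)$ one gets $(\upx^q\ux)\ux = q^2\ux(\ux\upx^q) + (q+1)\ux$, so $D\ux = (\upx^q\ux)\ux + q^2\ux(\upx^q\ux) = q^2\ux(\ux\upx^q) + (q+1)\ux + q^2\ux(\upx^q\ux) = q^2\ux\big(\ux\upx^q + \upx^q\ux\big) + (q+1)\ux = q^2\ux D + (q+1)\ux$. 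Hence $D\ux - q^2\ux D = (q+1)\ux$, i.e. $D$ satisfies the same commutation relation with $\ux$ that, say, $2\mE + m$ satisfies in the classical case — a ``$q$-Euler-type'' relation. In particular $[D,\ux]$ lies in $\ux\cdot(\text{stuff})$, which already forces $D$ to have a very restricted form on each homogeneous component.

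Next I would use $(A3)$. Since $(\upx^q)^2$ is scalar (hence central), $\upx^q$ commutes with $(\upx^q)^2$, and more to the point $D\upx^q = \upx^q\ux\upx^q + q^2\ux(\upx^q)^2 = \upx^q\ux\upx^q + q^2(\upx^q)^2\ux$ while $\upx^q D = (\upx^q)^2\ux + q^2\upx^q\ux\upx^q$. Comparing, $D\upx^q - \upx^q D = (q^2-1)\upx^q\ux\upx^q - (q^2-1)(\upx^q)^2\ux = (q^2-1)\upx^q\big(\ux\upx^q - \upx^q\ux\big)$... I would need to be a little careful here: the natural combination that appears is $\ux\upx^q + q^2\upx^q\ux$ type objects rather than commutators, so I would instead directly verify, using only $(A2)$ (to push $\upx^q$ past $\ux^2$) and $(A3)$ (to treat $(\upx^q)^2$ as central), that $D\upx^q = \upx^q D$, i.e. $D$ commutes with $\upx^q$. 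The honest mechanism is: write $D = \upx^q\ux + q^2\ux\upx^q$ and compute $\upx^q D$ and $D\upx^q$, in each case moving the outer $\upx^q$ to sit next to $\ux^2 = -r^2$ via $(A2)$, using scalarity of $(\upx^q)^2$ to commute it freely, and checking the resulting scalar-coefficient polynomials in $\ux,\upx^q$ agree.

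Finally, combining the two: $D$ commutes with $\upx^q$, and the relation $D\ux - q^2\ux D = (q+1)\ux$ means $D$ and $\ux$ ``$q$-commute'' up to the vector $\ux$ itself. I would then argue on homogeneous components $\ux^i\cM_{k-i}$ from Lemma~\ref{cliffordfischerdecomp}: on each such block $D$ must act as a scalar (the $q$-Euler relation pins down the scalar on the $\ux$-powers, and commutation with $\upx^q$, which by $(A4)$ kills $\cM_{k-i}$ and by $(A2)$ maps $\ux$-powers down, forces consistency of these scalars across blocks and in the Clifford variables). The cleanest formulation: any $\mR_{0,m}$-linear-combination operator built from $\ux$ and $\upx^q$ that commutes with $\upx^q$ and satisfies the displayed $q$-Euler relation with $\ux$ cannot contain any genuinely Clifford (non-scalar) part, because such a part would be detected by acting on a suitable $M_k$ and its $\ux$-translates and would violate one of the two relations.

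\medskip

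\textbf{Main obstacle.} The delicate point is the second commutation identity — verifying $[D,\upx^q]=0$ using only $(A2)$ and $(A3)$, since $(A2)$ only controls how $\upx^q$ moves past $\ux^2$, not past a single $\ux$. One must get the bookkeeping exactly right so that the odd powers of $\ux$ are handled via $(A2)$ applied after pairing, with the central element $(\upx^q)^2$ absorbing the leftover. Once both identities are in hand, the passage to ``scalar'' is the kind of argument that is routine in (Clifford) harmonic analysis but still needs the Fischer decomposition (Lemma~\ref{cliffordfischerdecomp}) and axiom $(A4)$ to rule out bivectorial or higher-grade contributions; I expect that to be short.
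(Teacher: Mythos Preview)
Your approach has several genuine gaps, and it misses a two-line argument.

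\medskip
\textbf{Algebraic error in Step 1.} You write $D\ux = q^2\ux(\ux\upx^q+\upx^q\ux)+(q+1)\ux = q^2\ux D+(q+1)\ux$. The last equality is wrong: $\ux\upx^q+\upx^q\ux \neq D=\upx^q\ux+q^2\ux\upx^q$ unless $q^2=1$. A correct computation gives
\[
D\ux - q^2\ux D = (q+1)\ux + q^2(1-q^2)\,\ux^2\upx^q,
\]
so the extra term does not vanish (this is exactly the $\mathfrak{osp}(1|2)_q$ relation $[E,\ux]_{q^2}$ recorded later in the paper).

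\medskip
\textbf{Conceptual error in Step 2.} You argue ``$(\upx^q)^2$ is scalar (hence central)'' and then use $\ux(\upx^q)^2=(\upx^q)^2\ux$. ``Scalar'' here means the operator carries no Clifford part (it sends scalar functions to scalar functions and commutes with right Clifford multiplication); it does \emph{not} mean central in the operator algebra. The ordinary Laplacian $\Delta$ is scalar but certainly does not commute with multiplication by $\ux$. In fact $[D,\upx^q]\neq 0$; it is the dual of the relation above.

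\medskip
\textbf{Illegitimate hypothesis in Step 3.} The lemma assumes only $(A2)$ and $(A3)$. Your final argument on the Fischer blocks $\ux^i\cM_{k-i}$ invokes $(A4)$ (that $\upx^q$ kills monogenics), which is not available here.

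\medskip
\textbf{The intended proof.} Apply $(A2)$ twice to compute
\[
(\upx^q)^2\ux^2 = q^4\ux^2(\upx^q)^2 + (q+1)\bigl(\upx^q\ux+q^2\ux\upx^q\bigr),
\]
so that $D=\dfrac{1}{q+1}\bigl((\upx^q)^2\ux^2-q^4\ux^2(\upx^q)^2\bigr)$. Since $\ux^2=-r^2$ is scalar and $(\upx^q)^2$ is scalar by $(A3)$, both products on the right are compositions of scalar operators, hence scalar; so is their difference. That is the entire argument --- no commutation relations with a single $\ux$ or $\upx^q$, and no Fischer decomposition, are needed.
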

\begin{proof}
We calculate $(\upx^q)^2\ux^2$ using $(A2)$,
\begin{eqnarray*}
(\upx^q)^2\ux^2&=&q^2\upx^q\ux^2\upx^q+(q+1)\upx^q\ux\\
&=&q^4\ux^2(\upx^q)^2+q^2(q+1)\ux\upx^q+(q+1)\upx^q\ux.
\end{eqnarray*}

Rearranging terms yields
\begin{eqnarray}
\label{EE}
(\upx^q)^2\ux^2-q^4\ux^2(\upx^q)^2&=&(q+1)(\upx^q\ux+q^2\ux\upx^q).
\end{eqnarray}
Because $\ux^2$ and $(\upx^q)^2$ are scalar we obtain the lemma.
\end{proof}

\begin{lemma}
\label{pxxM}
For a linear operator on $\cP$ satisfying $(A1)-(A4)$, the following relation holds,
\begin{eqnarray*}
\upx^q\ux M_k&=&[m+2k]_qM_k.
\end{eqnarray*}

\end{lemma}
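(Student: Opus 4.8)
The plan is to exploit the scalar operator produced by Lemma \ref{scalar} together with axiom $(A4)$ and the known action of $\upx$ on $\ux M_k$. Write $S = \upx^q\ux + q^2\ux\upx^q$; by Lemma \ref{scalar} this is a scalar operator, so it commutes with $\ux$ and acts diagonally (after restriction to homogeneous components) on each piece of the Fischer decomposition. Apply $S$ to $M_k \in \cM_k$. By $(A4)$ we have $\upx^q M_k = 0$, hence $S M_k = \upx^q \ux M_k + q^2 \ux\upx^q M_k = \upx^q\ux M_k$. So the lemma amounts to showing that $S$ acts on $\cM_k$ as multiplication by $[m+2k]_q$.

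The next step is to pin down the scalar by which $S$ acts on $\cM_k$. First I would determine it on $\cM_0 = \mR_{0,m}$ (the constants): for a constant $c$, $\upx^q(\ux c) = \upx^q(\ux) c = [m]_q c$ by $(A1)$ (using that $\upx^q$ is $\mR_{0,m}$-linear on the right, which is part of what it means to deform the Dirac operator), and $\upx^q c = 0$ since $c \in \cM_0$; thus $S$ acts as $[m]_q = [m+0]_q$ on $\cM_0$, matching the formula for $k=0$. To get the general $k$, I would use the $q$-deformed Fischer-type structure: since $\upx^q\ux M_{k-1} \in \cP_{k-1}$ and, by $(A4)$ applied after we know more, one can relate $S$ on $\cM_k$ to $S$ on $\cM_{k-1}$ via the fact that a spherical monogenic of degree $k$ sits inside $\cP_k$ whose lower Fischer components are controlled. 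Concretely, I expect the cleanest route is an inductive one: assume $S$ acts as $[m+2(k-1)]_q$ on $\cM_{k-1}$, consider $\ux M_{k-1} \in \cP_k$, decompose it via Lemma \ref{cliffordfischerdecomp} (Fischer decomposition II) and compare the action of $S = \upx^q\ux + q^2\ux\upx^q$ on both sides, using $(A2)$ to move $\upx^q$ past the relevant powers of $\ux$ and $(A4)$ to kill the monogenic terms, reading off a recursion $[m+2k]_q$-style in the coefficients.

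The hard part will be controlling how $\upx^q$ interacts with the decomposition $\ux M_{k-1}$ when the top piece is not itself monogenic — i.e. justifying that $\upx^q$ behaves on $\ux \cM_{k-1}$ analogously to how the undeformed $\upx$ does in \eqref{1dirac2}, but with a $q$-shift turning the factor $(2l+2k+m)$ into a bracket $[\,\cdot\,]_q$. The axioms $(A2)$ and $(A3)$ give us commutation of $\upx^q$ with $\ux^2$ and scalarity, but not directly with $\ux$ acting on an odd-degree monogenic; Lemma \ref{scalar} is precisely the tool that bridges this, since it tells us $S = \upx^q\ux + q^2\ux\upx^q$ is scalar even though neither summand is. So the real content is: (i) $S$ scalar $\Rightarrow$ $S$ acts by a constant $\lambda_k$ on each irreducible $SO(m)$-type inside $\cM_k$; (ii) $(A4)$ collapses $SM_k$ to $\upx^q\ux M_k$; (iii) a degree/recursion argument (seeded by the $\cM_0$ computation and propagated using $(A1)$, $(A2)$) identifies $\lambda_k = [m+2k]_q$. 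I would double-check the recursion by verifying it reduces to \eqref{anticomxpx}, namely $\{\ux,\upx\}M_k = (2k+m)M_k$, in the limit $q \to 1$.
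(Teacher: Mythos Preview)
Your scaffolding matches the paper's: set $S=\upx^q\ux+q^2\ux\upx^q$, invoke Lemma~\ref{scalar} to know $S$ is scalar, use $(A4)$ to collapse $SM_k$ to $\upx^q\ux M_k$, seed the induction at $k=0$ with $(A1)$, and compute $S(\ux M_k)=[m+2k+2]_q\,\ux M_k$ from $(A2)$, $(A4)$ and the inductive hypothesis. The gap is the bridge from this last identity to the value of $S$ on $\cM_{k+1}$. Your proposed route, ``decompose $\ux M_{k-1}$ via Fischer~II and compare'', does nothing: $\ux M_{k-1}$ is already a single Fischer component, so there is nothing to decompose. And the claim ``$S$ scalar $\Rightarrow$ $S$ acts by a constant $\lambda_k$ on each irreducible $SO(m)$-type'' is not available at this point: Lemma~\ref{scalar} gives scalarity in the \emph{Clifford} sense (preserving the $k$-vector grading of $\mR_{0,m}$), not $SO(m)$-equivariance; the latter is established only later, once the closed formula of Definition~\ref{defpx} is in hand. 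Even granting equivariance, $\cM_{k+1}$ and $\ux\cM_k$ sit in different Fischer summands of $\cP_{k+1}$, and $S$ does \emph{not} act by a single scalar on all of $\cP_{k+1}$ (check $\ux^2\cM_{k-1}$), so knowing its eigenvalue on $\ux\cM_k$ does not by itself pin down the eigenvalue on $\cM_{k+1}$.

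The paper closes the gap by a different, Clifford-algebraic move. Having $S(\ux M_k)=[m+2k+2]_q\,\ux M_k$, take a \emph{scalar} spherical harmonic $H_{k+1}$; then $\upx H_{k+1}\in\cM_k$, so the identity applies with $\ux M_k=\ux\upx H_{k+1}$. By \eqref{pxEGamma}, $\ux\upx H_{k+1}=(k+1)H_{k+1}+\Gamma H_{k+1}$, where the first term is scalar and the second is bivector-valued. Now the actual content of ``$S$ is scalar'' is used: $S$ preserves the Clifford $k$-vector grading, so the equation $S(\ux\upx H_{k+1})=[m+2k+2]_q\,\ux\upx H_{k+1}$ holds separately on the scalar part, giving $SH_{k+1}=[m+2k+2]_q\,H_{k+1}$. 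Right-multiplying by Clifford constants extends this from scalar $H_{k+1}$ to arbitrary $M_{k+1}\in\cM_{k+1}\subset\cH_{k+1}$, and combining with $SM_{k+1}=\upx^q\ux M_{k+1}$ finishes the induction. The missing idea in your sketch is precisely this scalar/bivector separation via $\ux\upx=\mE+\Gamma$.
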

\begin{proof}
We know from $(A1)$ that this holds for $k=0$.  Now we assume that $\upx^q\ux M_{k}=[m+2k]_qM_k$ holds and prove that it also holds for $k+1$. Using $(A4)$ yields
\begin{eqnarray}
\label{tussthm}
\upx^q\ux M_{k+1}&=&(\upx^q\ux+q^2\ux\upx^q) M_{k+1}.
\end{eqnarray}
We use $(A2)$, $(A4)$ and the induction step to calculate
\begin{eqnarray*}
(\upx^q\ux+q^2\ux\upx^q) \ux M_k&=&(q+1+q^2[m+2k]_q)\ux M_k\\
&=&[m+2k+2]_q\ux M_k.
\end{eqnarray*}

Let $H_{k+1}$ be an arbitrary scalar spherical harmonic of degree $k+1$, this means $\ux\upx H_{k+1}\in\ux\cM_k$ and we can substitute $\ux\upx H_{k+1}$ for $\ux M_k$ in the equation above. Equation (\ref{pxEGamma}) implies
\begin{eqnarray*}
\ux\upx H_{k+1}=(k+1)H_{k+1}+\Gamma H_{k+1},
\end{eqnarray*}

so the scalar part of $\ux\upx H_{k+1}$ is proportional to $H_{k+1}$. Since $(\upx^q\ux+q^2\ux\upx^q) $ is a scalar operator (lemma \ref{scalar}) the equation above holds separately for both $H_{k+1}$ and $\Gamma H_{k+1}$. So for every scalar $H_{k+1}$
\begin{eqnarray*}
(\upx^q\ux+q^2\ux\upx^q)H_{k+1}&=&[m+2k+2]_qH_{k+1}.
\end{eqnarray*}

This equation can be multiplied with elements of the Clifford algebra on the right hand side, so it also holds for $M_{k+1}$. Combining this with equation (\ref{tussthm}) yields
\begin{eqnarray*}
\upx^q\ux M_{k+1}&=&[m+2k+2]_qM_{k+1},
\end{eqnarray*}

so the lemma is proven by induction.
\end{proof}

\begin{theorem}
\label{pxFischer}
There is at most one linear operator on $\cP$ satisfying $(A1)-(A4)$. The action on the Fischer decomposition (lemma \ref{cliffordfischerdecomp}) is given by
\begin{eqnarray*}
\upx^q\ux^{2l} M_k&=&[2l]_q\ux^{2l-1}M_k\\
\upx^q\ux^{2l+1}M_k&=&[2l+2k+m]_q\ux^{2l}M_k.
\end{eqnarray*}
\end{theorem}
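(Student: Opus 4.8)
The plan is to prove the two formulas by induction on $l$, using Lemma \ref{pxxM} as the base of the induction and axioms $(A2)$, $(A4)$ to climb from $l$ to $l+1$. The uniqueness claim is then immediate: by Lemma \ref{cliffordfischerdecomp} the monomials $\ux^i M_{k-i}$ span $\cP_k$, so an operator satisfying $(A1)$--$(A4)$ is completely determined by its action on each $\ux^j M_k$; since the two displayed formulas express that action purely in terms of $q$-numbers and the module generators, any two such operators agree on all of $\cP$.

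For the induction itself I would proceed as follows. First, the odd case with $l=0$ is exactly Lemma \ref{pxxM}, and the even case with $l=0$ reads $\upx^q M_k = 0$, which is $(A4)$ (with $[0]_q = 0$). Now assume both formulas hold for a given $l$. To get the even formula for $l+1$, apply $\upx^q$ to $\ux^{2l+2}M_k = \ux^2(\ux^{2l}M_k)$ and use $(A2)$:
\begin{eqnarray*}
\upx^q\ux^{2l+2}M_k &=& q^2\ux^2\upx^q\ux^{2l}M_k+(q+1)\ux\,\ux^{2l}M_k\\
&=& q^2[2l]_q\ux^{2l+1}M_k+(q+1)\ux^{2l+1}M_k,
\end{eqnarray*}
and then I would verify the $q$-number identity $q^2[2l]_q+(q+1)=[2l+2]_q$, which follows directly from $[u]_q=(q^u-1)/(q-1)$. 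Similarly, for the odd formula at $l+1$, write $\ux^{2l+3}M_k=\ux^2(\ux^{2l+1}M_k)$, apply $(A2)$ and the inductive odd formula:
\begin{eqnarray*}
\upx^q\ux^{2l+3}M_k &=& q^2\ux^2\upx^q\ux^{2l+1}M_k+(q+1)\ux^{2l+2}M_k\\
&=& q^2[2l+2k+m]_q\ux^{2l+2}M_k+(q+1)\ux^{2l+2}M_k,
\end{eqnarray*}
and check $q^2[2l+2k+m]_q+(q+1)=[2l+2k+2+m]_q$, again a one-line $q$-number computation. This closes the induction.

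There is no serious obstacle here; the statement is essentially a bookkeeping argument once Lemmas \ref{scalar} and \ref{pxxM} are in hand. The only point that requires a little care is making sure the base cases of the two interleaved inductions line up correctly — in particular that the $l=0$ odd case is genuinely Lemma \ref{pxxM} and not something needing $(A3)$ — and that the $q$-number identities $q^2[n]_q+(q+1)=[n+2]_q$ are applied with the right exponents. I would state these identities once as $q^2[n]_q + [2]_q = [n+2]_q$ (recalling $[2]_q = q+1$) and reuse them. Note that $(A3)$ and Lemma \ref{scalar} do not appear explicitly in this particular proof; they were needed upstream, in Lemma \ref{pxxM}, to guarantee that $\upx^q\ux+q^2\ux\upx^q$ acts as a scalar so that the step there could be carried out componentwise on $H_{k+1}$ and $\Gamma H_{k+1}$.
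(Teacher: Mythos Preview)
Your proposal is correct and is essentially the same argument as the paper's: the paper just phrases the induction as ``iterating $(A2)$'' to obtain the operator identity $\upx^q\ux^{2l}=[2l]_q\ux^{2l-1}+q^{2l}\ux^{2l}\upx^q$ once, and then applies it to $M_k$ (using $(A4)$) and to $\ux M_k$ (using Lemma~\ref{pxxM}), which amounts to the same two $q$-number identities you check.
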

\begin{proof}
Iterating $(A2)$ yields
\begin{eqnarray*}
\upx^q\ux^{2l}=[2l]_q\ux^{2l-1}+q^{2l}\ux^{2l}\upx^q.
\end{eqnarray*}

Together with lemma \ref{pxxM}, this proves the theorem.
\end{proof}

We introduce a closed expression for the operator which acts on $\cP$ as in theorem \ref{pxFischer}. This allows it to be defined on a function space larger than the polynomials.

\begin{definition}
\label{defpx}
The $q$-deformed Dirac operator is formally given by
\begin{eqnarray*}
\upx^q&=&\frac{1}{\ux}[\ux\upx]_q\\
&=&\frac{1}{\ux}[\mE+\Gamma]_q=\frac{\ux}{\ux^2}([\mE]_q+q^\mE[\Gamma]_q)\\
&=&-\uxi\left(\partial_r^q+\frac{1}{r}q^{r\partial_r}[\Gamma]_q\right).
\end{eqnarray*}
\end{definition}

\begin{remark}
It is important to note that $\mE$ and $\Gamma$ commute, so $q^{\mE+\Gamma}=q^{\mE}q^{\Gamma}$. This operator is clearly defined everywhere on functions in the space $\cP\otimes J$ with $J$ functions of $r$ on $\mR^+$.  This corresponds to the spaces mostly used in quantum Euclidean space (see e.g. \cite{MR1288667, MR1239953}).
\end{remark}

The operator $q^\mE$ can always be defined on $f(\ux)$ if $q\ux$ is in the domain of $f$. It is harder to define $q^\Gamma$. It can be defined locally on analytic functions. The Cauchy-Kowalewskaya theorem on the system
\begin{eqnarray*}
\partial_ug(\ux,u)=\Gamma_{\ux}g(\ux,u)& &g(\ux,0)=f(\ux),
\end{eqnarray*}

states that $g(\ux,u)$ is analytical when $f(\ux)$ is. Because $g$ is analytical,
\begin{eqnarray*}
q^\Gamma f(\ux)&=&\sum_{j=0}^\infty \frac{(\ln q)^j}{j!}\Gamma_{\ux}^jg(\ux,0)\\
&=&\sum_{j=0}^\infty \frac{(\ln q)^j}{j!}\left(\partial_u^jg\right)(\ux,0)\\
&=&g(\ux,\ln q).
\end{eqnarray*}

\begin{remark}
We could also consider functions which are only defined on $\partial \mB(R_i)$ for some $R_i\in\mR^+$ and are analytical on these $(m-1)$-dimensional manifolds. The operator $\Gamma_{\ux}$ is elliptic on these manifolds.
\end{remark}

All the functions we will encounter in this paper are polynomials times radial functions which pose no problem.
When we take the case $m=1$ we find that 
\begin{eqnarray*}
\upx^q&=&\frac{1}{e_1x_1}[e_1x_1(-e_1\partial_{x_1})]_q\\
&=&-\frac{e_1}{x_1}[x_1\partial_{x_1}]_q\\
&=&-e_1\partial_{x_1}^q,
\end{eqnarray*}

so the one dimensional case is a special case of this theory.

\begin{theorem}
The operator $\upx^q$ in definition \ref{defpx} is the unique linear operator on $\cP$ satisfying axioms $(A1)-(A4)$.
\end{theorem}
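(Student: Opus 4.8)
The plan is to verify that the closed-form operator $\upx^q$ given in Definition \ref{defpx} satisfies the four axioms $(A1)$--$(A4)$; uniqueness is then immediate from Theorem \ref{pxFischer}, which already establishes that at most one linear operator on $\cP$ can satisfy $(A1)$--$(A4)$. So the whole burden is existence, i.e. checking the axioms one at a time for the operator $\upx^q=\frac{1}{\ux}[\mE+\Gamma]_q$. Since $\mE$ and $\Gamma$ commute (see the Remark after Definition \ref{defpx}), the operator $[\mE+\Gamma]_q=\frac{q^{\mE+\Gamma}-1}{q-1}$ is well defined on $\cP$, and it is convenient throughout to record how $q^{\mE}$ and $q^{\Gamma}$ act on a basis element $\ux^j M_k$: from $\mE(\ux^j M_k)=(j+k)\ux^j M_k$ we get $q^{\mE}\ux^j M_k=q^{j+k}\ux^j M_k$, while the commutation relations \eqref{1Gamma1}--\eqref{1Gamma2} and $\Gamma M_k = -(?)$… more precisely $\Gamma M_k=0$ is false in general, but $\ux\upx M_k=0$ together with \eqref{pxEGamma} gives $\Gamma M_k=-kM_k$, and then \eqref{1Gamma1}--\eqref{1Gamma2} yield $\Gamma(\ux^{2l}M_k)=-k\,\ux^{2l}M_k$ and $\Gamma(\ux^{2l+1}M_k)=(m-1+k)\ux^{2l+1}M_k$. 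Hence $(\mE+\Gamma)(\ux^{2l}M_k)=2l\,\ux^{2l}M_k$ and $(\mE+\Gamma)(\ux^{2l+1}M_k)=(2l+1+2k+m-1)\ux^{2l+1}M_k=(2l+2k+m)\ux^{2l+1}M_k$, so $[\mE+\Gamma]_q$ acts as multiplication by $[2l]_q$ resp. $[2l+2k+m]_q$. Dividing by $\ux$ (which sends $\ux^{2l}M_k$ to $\ux^{2l-1}M_k$ and $\ux^{2l+1}M_k$ to $\ux^{2l}M_k$, using $\ux^2=-r^2$) reproduces exactly the action stated in Theorem \ref{pxFischer}.

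Given that computation, the axioms fall out. For $(A1)$: $\ux=\ux^1 M_0$ with $M_0=1$, $k=0$, $l=0$, so $\upx^q\ux=[0+0+m]_q\,\ux^0 M_0=[m]_q$. For $(A4)$: $M_k=\ux^0 M_k$ with $l=0$ in the even case, giving $\upx^q M_k=[0]_q\ux^{-1}M_k=0$. For $(A2)$: apply both sides to a basis element $\ux^j M_k$; using the Fischer decomposition (Lemma \ref{cliffordfischerdecomp}) it suffices to check it there. On $\ux^{2l}M_k$, the left side is $\upx^q\ux^{2l+2}M_k=[2l+2]_q\ux^{2l+1}M_k$, and the right side is $q^2\ux^2[2l]_q\ux^{2l-1}M_k+(q+1)\ux\cdot\ux^{2l}M_k=(q^2[2l]_q+(q+1))\ux^{2l+1}M_k$; one checks $q^2[2l]_q+(q+1)=q^2[2l]_q+[2]_q=[2l+2]_q$ from the elementary identity $[a+b]_q=q^b[a]_q+[b]_q$. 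The odd case $\ux^{2l+1}M_k$ is identical with $2l$ replaced by $2l+1$ and $m$ re-absorbed, again using the same $q$-number identity. Finally $(A3)$: from the eigenvalue description, $\upx^q$ maps $\ux^{2l}M_k\mapsto[2l]_q\ux^{2l-1}M_k$ and $\ux^{2l+1}M_k\mapsto[2l+2k+m]_q\ux^{2l}M_k$, so $(\upx^q)^2$ sends $\ux^{2l}M_k\mapsto[2l]_q[2l+2k+m-2]_q\ux^{2l-2}M_k$ and similarly in the odd case — in each case the output is a scalar multiple of $\ux^{j-2}M_k$, i.e. $(\upx^q)^2$ acts on each Fischer block $\ux^{\mathbb Z_{\ge0}}M_k$ by a diagonal scalar, hence $(\upx^q)^2$ is a scalar operator (it commutes with right Clifford multiplication and preserves the scalar/non-scalar grading since it does not introduce any $e_i$'s).

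I expect the only genuinely delicate point to be the careful bookkeeping in the action of $q^{\Gamma}$ — specifically confirming $\Gamma M_k=-kM_k$ and propagating it through \eqref{1Gamma1}--\eqref{1Gamma2} to all blocks $\ux^j\cM_k$, together with making the division by $\ux$ rigorous (one works with $\frac{\ux}{\ux^2}$ on the appropriate domain, or simply defines the action block-by-block via the Fischer decomposition and checks that the closed form agrees). Everything else is an exercise in the identity $[a+b]_q=q^b[a]_q+[b]_q$. Once $(A1)$--$(A4)$ are verified, Theorem \ref{pxFischer} gives uniqueness, completing the proof.
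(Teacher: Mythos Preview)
Your overall strategy coincides with the paper's: verify $(A1)$--$(A4)$ for the closed-form operator and invoke Theorem~\ref{pxFischer} for uniqueness. Your treatments of $(A1)$, $(A4)$ and $(A2)$ are fine (the paper proves $(A2)$ by a direct operator computation using \eqref{1Gamma2} rather than by testing on the Fischer basis, but your basis check is equally valid).

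The argument for $(A3)$, however, has a genuine gap. You observe that $(\upx^q)^2$ sends each $\ux^{j}M_k$ to a scalar multiple of $\ux^{j-2}M_k$ and conclude that ``$(\upx^q)^2$ is a scalar operator \ldots\ since it does not introduce any $e_i$'s''. But the Fischer basis elements $\ux^{j}M_k$ are themselves Clifford-valued, so acting by a numerical factor on each of them does \emph{not} by itself show that scalar polynomials are sent to scalar polynomials; for instance, left multiplication by $e_1$ also commutes with right Clifford multiplication yet is certainly not scalar. What is missing is the observation that the two relevant eigenvalues coincide: $(\upx^q)^2\ux^{2l}M_k=[2l]_q[2l+2k+m-2]_q\ux^{2l-2}M_k$ and $(\upx^q)^2\ux^{2l+1}M_{k-1}=[2l]_q[2l+2k+m-2]_q\ux^{2l-1}M_{k-1}$. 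Combining these with the scalar decomposition $H_k=M_k+\ux M_{k-1}$ of \eqref{decompHk} gives $(\upx^q)^2\ux^{2l}H_k=[2l]_q[2l+2k+m-2]_q\ux^{2l-2}H_k$, which is scalar; since the $\ux^{2l}H_k$ span all scalar polynomials by Lemma~\ref{fischerdecomp}, $(A3)$ follows. This coefficient-matching step is exactly what the paper supplies and is the one non-automatic ingredient in the existence proof.
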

\begin{proof}
Definition \ref{defpx} and equations \eqref{1dirac1} and \eqref{1dirac2} imply that $\upx^q$ satisfies the properties in theorem \ref{pxFischer}, so it is unique. We only need to show that $\upx^q$ satisfies the axioms $(A1)-(A4)$ to prove the existence. Axiom $(A1)$ is trivial, axiom $(A2)$ follows from formula \eqref{1Gamma2},
\begin{eqnarray*}
\upx^q\ux^2&=&\frac{1}{\ux}[\mE+\Gamma]_q\ux^2\\
&=&\ux^2\frac{1}{\ux}[\mE+2+\Gamma]_q\\
&=&\ux^2\frac{1}{\ux}(q+1+q^2[\mE+\Gamma]_q)\\
&=&q^2\ux^2\upx^q+(q+1)\ux.
\end{eqnarray*}

To prove axiom $(A3)$ we calculate
\begin{eqnarray*}
(\upx^q)^2\ux^{2l}M_k&=&[2l-2+2k+m]_q[2l]_q\ux^{2l-2}M_k\\
(\upx^q)^2\ux^{2l+1}M_{k-1}&=&[2l]_q[2l+2k-2+m]_q\ux^{2l-1}M_{k-1}.
\end{eqnarray*}

Since every scalar spherical harmonic can be decomposed as $H_k=M_k+\ux M_{k-1}$, we find that
\begin{eqnarray}
\label{laplFischer}
(\upx^q)^2\ux^{2l}H_k&=&[2l]_q[2l-2+2k+m]_q\ux^{2l-2}H_k.
\end{eqnarray}
Since the set $\{\ux^{2l}H_k\}$ spans all scalar polynomials (lemma \ref{fischerdecomp}), $(\upx^q)^2$ acting on every scalar polynomial is scalar. Axiom $(A4)$ follows immediately from the definition.
\end{proof}

The operator in definition \ref{defpx} is of the form 
\begin{eqnarray}
\label{partieleafg}
\upx^q=-e_i\sum_{i=1}^mD_i
\end{eqnarray}
where $D_i$ are scalar operators. Because of lemma \ref{fischerdecomp} it suffices to calculate the action on scalar polynomials $\ux^{2l}H_k$, with decomposition $H_k=M_k+\ux M_{k-1}$,
\begin{eqnarray*}
\upx^q \ux^{2l}H_k&=&[2l]_q\ux^{2l-1}H_k+q^{2l}\ux^{2l}[m+2k-2]_qM_{k-1}\\
&=&[2l]_q\ux^{2l-1}H_k+q^{2l}\ux^{2l}\frac{[m+2k-2]_q}{m+2k-2}\upx H_{k}
\end{eqnarray*}

which clearly is a vector. Since, by axiom $(A3)$, the square of $\upx^q$ is a scalar operator,
\begin{eqnarray*}
(\upx^q)^2&=&\sum_{i,j=1}^me_ie_jD_iD_j\\
&=&-\sum_{i=1}^mD_i^2+\sum_{i<j}e_ie_j(D_iD_j-D_jD_i)\\
\end{eqnarray*}

is scalar. Since the set $\{e_ie_j,i<j\}$ is linearly independent, the operators $D_i$ must all commute. We will call them the $q$-partial derivatives. The Dirac operator $\upx$ is invariant under the action of Spin$(m)$, the universal cover of $SO(m)$. How the spin group can be realized in Clifford analysis can be found in \cite{MR697564} and  \cite{MR1169463}. Because multiplication with $\ux$ is also Spin$(m)$-invariant, we find that $\upx^q$, as defined by definition \ref{defpx} is also Spin$(m)$-invariant.

\begin{lemma}
The $q$-Dirac operator in definition \ref{defpx} satisfies
\label{pxx}
\begin{eqnarray*}
\upx^q\ux&=&[\mE-\Gamma+m]_q.
\end{eqnarray*}
\end{lemma}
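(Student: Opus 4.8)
The plan is to establish the identity $\upx^q\ux=[\mE-\Gamma+m]_q$ by comparing how both sides act on the Fischer decomposition from Lemma \ref{cliffordfischerdecomp}, using Theorem \ref{pxFischer} for the left-hand side and the commutation relations \eqref{1Gamma1}, \eqref{1Gamma2} for the right-hand side. Since $\cP$ is spanned by elements of the form $\ux^{2l}M_k$ and $\ux^{2l+1}M_k$, it suffices to check the equality on both types of basis vectors.

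First I would treat the even case. On $\ux^{2l}M_k$ the Euler operator acts as $\mE = 2l+k$ and, by \eqref{1Gamma2} together with $\Gamma M_k = 0$ (since $M_k$ is monogenic, $\upx M_k=0$, and by \eqref{pxEGamma} $\ux\upx M_k = (\mE+\Gamma)M_k = 0$, so $\Gamma M_k = -kM_k$; being careful here: actually $\mE M_k = kM_k$ and $\ux\upx M_k=0$ forces $\Gamma M_k=-kM_k$), one computes $(\mE-\Gamma+m)\ux^{2l}M_k = (2l+k+k+m)\ux^{2l}M_k = (2l+2k+m)\ux^{2l}M_k$. On the other hand, from Theorem \ref{pxFischer}, $\upx^q\ux^{2l+1}M_k = [2l+2k+m]_q\ux^{2l}M_k$, which matches $[\,(\mE-\Gamma+m)\,]_q$ applied to $\ux^{2l}M_k$. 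For the odd case, on $\ux^{2l+1}M_k$ we have $\mE = 2l+1+k$, and using \eqref{1Gamma1} in the form $\Gamma\ux = \ux(m-1-\Gamma)$ one finds $\Gamma\ux^{2l+1}M_k = \ux^{2l+1}(m-1-\Gamma)M_k = \ux^{2l+1}(m-1+k)M_k$ (again using $\Gamma M_k = -kM_k$). Hence $(\mE-\Gamma+m)\ux^{2l+1}M_k = (2l+1+k-(m-1+k)+m)\ux^{2l+1}M_k = (2l+2)\ux^{2l+1}M_k$, so $[\mE-\Gamma+m]_q$ acts as $[2l+2]_q$. Comparing with Theorem \ref{pxFischer}, $\upx^q\ux\cdot\ux^{2l+1}M_k = \upx^q\ux^{2l+2}M_k = [2l+2]_q\ux^{2l+1}M_k$, which again matches.

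Alternatively — and perhaps more cleanly — one can work directly from Definition \ref{defpx}: $\upx^q\ux = \frac{1}{\ux}[\mE+\Gamma]_q\ux$, and then push $\ux$ through $[\mE+\Gamma]_q$ to the left using the relations $\mE\ux = \ux(\mE+1)$ and $\Gamma\ux = \ux(m-1-\Gamma)$, which together give $(\mE+\Gamma)\ux = \ux(\mE-\Gamma+m)$. Since $q^{\mE+\Gamma}$ is built as a power series (or functional calculus) in $\mE+\Gamma$, conjugation by $\ux$ sends it to $q^{\mE-\Gamma+m}$, whence $[\mE+\Gamma]_q\ux = \ux[\mE-\Gamma+m]_q$ and the $\frac{1}{\ux}$ cancels. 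The one subtlety to address is that $\mE$ and $\Gamma$ commute but $\mE-\Gamma+m$ and $\mE+\Gamma$ do not commute with each other as conjugates — however, the computation only requires the single intertwining relation $(\mE+\Gamma)\ux=\ux(\mE-\Gamma+m)$ iterated, which is unproblematic.

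The main obstacle is essentially bookkeeping: making sure the sign on $\Gamma$ flips correctly and that the constant shift by $m$ (rather than $m-1$ or $m+1$) comes out right, since \eqref{1Gamma1} carries an $m-1$ while \eqref{anticomxpx} supplies the extra $+1$ through the Euler term. I would double-check the final constant against the $l=k=0$ case, where the identity must reduce to $\upx^q(\ux)=[m]_q$, i.e. axiom $(A1)$, which it does since $\mE\ux=\ux$, $\Gamma\ux=0$ give $(\mE-\Gamma+m)\ux = (1-0+m)\ux$... wait, this suggests $[m+1]_q$, so in fact I must be careful: $\upx^q(\ux)$ corresponds to the operator $\upx^q\ux$ applied to the constant $1=\ux^0M_0$, on which $\mE=0$ and $\Gamma=0$, giving $[m]_q$ as required. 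Verifying this boundary case pins down the formula and confirms the approach.
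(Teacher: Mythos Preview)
Your proposal is correct. The paper's proof is precisely your ``alternative'' approach: it writes $\upx^q\ux=\frac{1}{\ux}[\mE+\Gamma]_q\ux$ from Definition~\ref{defpx} and then uses $\mE\ux=\ux(\mE+1)$ together with \eqref{1Gamma1} to obtain $[\mE+\Gamma]_q\ux=\ux[\mE+1+m-1-\Gamma]_q$, after which the $\frac{1}{\ux}$ cancels.

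Your primary approach via the Fischer decomposition is also valid: the eigenvalue computations for $\mE$ and $\Gamma$ on $\ux^{2l}M_k$ and $\ux^{2l+1}M_k$ are correct (in particular $\Gamma M_k=-kM_k$ is right), and the comparison with Theorem~\ref{pxFischer} goes through. The tradeoff is that this verifies the identity only on $\cP$, whereas the paper's one-line operator argument holds wherever $\upx^q$ is defined (e.g.\ on $\cP\otimes J$ or analytic functions, as discussed after Definition~\ref{defpx}); it is also considerably shorter. Your final sanity check on $1=\ux^0 M_0$ is the right way to read the operator identity and correctly recovers axiom~$(A1)$.
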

\begin{proof}
We use commutation rule (\ref{1Gamma1}) to calculate
\begin{eqnarray*}
\upx^q\ux=\frac{1}{\ux}[\mE+\Gamma]_q\ux=[\mE+1+m-1-\Gamma]_q.
\end{eqnarray*}
\end{proof}

\begin{lemma}
\label{pxr}
For $f$ a scalar function of $r$, we have the following Leibniz rule
\begin{eqnarray*}
\upx^qf(r)=\upx^q(f(r))+f(qr)\upx^q=\frac{f(qr)-f(r)}{(q-1)\ux}+f(qr)\upx^q.
\end{eqnarray*}
\end{lemma}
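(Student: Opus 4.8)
The plan is to work with the closed form $\upx^q=\frac{1}{\ux}[\mE+\Gamma]_q$ from Definition \ref{defpx} and to track how its three building blocks $\frac{1}{\ux}$, $\mE$ and $\Gamma$ interact with multiplication by a scalar radial function $f(r)$. The relevant structural facts are: (i) since $\Gamma=-\sum_{i<j}e_ie_j(x_i\partial_{x_j}-x_j\partial_{x_i})$ is assembled from angular momentum operators, each of which is a derivation annihilating $f(r)$, the operator $\Gamma$ commutes with multiplication by $f(r)$; hence so does $q^{\Gamma}$, and in particular $[\Gamma]_q f(r)=0$; (ii) $q^{\mE}=q^{r\partial_r}$ implements the radial dilation, so as operators $q^{\mE}f(r)=f(qr)q^{\mE}$; (iii) $\mE$ and $\Gamma$ commute (noted in the Remark after Definition \ref{defpx}), so $q^{\mE+\Gamma}=q^{\mE}q^{\Gamma}$ and, combining (i)--(ii), $q^{\mE+\Gamma}f(r)=f(qr)q^{\mE+\Gamma}$ as operators; (iv) $\frac{1}{\ux}=-\uxi/r$ commutes with multiplication by any scalar radial function, and more generally scalars commute past $\uxi$.

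First I would settle the case $g\equiv 1$: by (i),
\[
\upx^q f(r)=\frac{1}{\ux}[\mE+\Gamma]_q f(r)=\frac{1}{\ux}[\mE]_q f(r)=\frac{1}{\ux}\,\frac{q^{\mE}-1}{q-1}\,f(r)=\frac{f(qr)-f(r)}{(q-1)\ux},
\]
which identifies the function $\upx^q(f(r))$ with $\frac{f(qr)-f(r)}{(q-1)\ux}$ and disposes of the second equality in the statement; using $\frac{1}{\ux}=-\uxi/r$ this is also $-\uxi\,\partial_r^q f(r)$.

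For the Leibniz identity itself I would compute, for arbitrary $g$, using (iii),
\[
[\mE+\Gamma]_q(f(r)g)=\frac{q^{\mE+\Gamma}(f(r)g)-f(r)g}{q-1}=\frac{f(qr)\,q^{\mE+\Gamma}g-f(r)g}{q-1},
\]
and then split the numerator as $f(qr)\bigl(q^{\mE+\Gamma}g-g\bigr)+\bigl(f(qr)-f(r)\bigr)g$ to get
\[
[\mE+\Gamma]_q(f(r)g)=f(qr)\,[\mE+\Gamma]_q g+\frac{f(qr)-f(r)}{q-1}\,g .
\]
Applying $\frac{1}{\ux}$ on the left and pulling the scalar $f(qr)$ through $\frac{1}{\ux}$ by (iv) yields
\[
\upx^q\bigl(f(r)g\bigr)=\frac{f(qr)-f(r)}{(q-1)\ux}\,g+f(qr)\,\upx^q g,
\]
which is the claim. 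Alternatively one can argue directly from the polar form $\upx^q=-\uxi\bigl(\partial_r^q+\frac1r q^{r\partial_r}[\Gamma]_q\bigr)$: apply the one-dimensional $q$-Leibniz rule \eqref{Leibniz} to $\partial_r^q(f(r)g)$, use (i)--(ii) on the $\frac1r q^{r\partial_r}[\Gamma]_q$ term, and finally commute the scalar $f(qr)$ past $-\uxi$.

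I do not foresee a genuine obstacle; the one point needing care is the bookkeeping of non-commutativity. The scalar $f(qr)$ commutes with $\uxi$ and with $\frac{1}{\ux}$ (purely scalar-radial interactions), but of course it does not commute with $\partial_r^q$ or with $q^{r\partial_r}$ — which is precisely why the dilated argument $f(qr)$, rather than $f(r)$, is forced to appear in front of $\upx^q$ on the right-hand side.
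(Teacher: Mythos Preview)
Your proof is correct and follows essentially the same approach as the paper's: both use Definition \ref{defpx}, the commutation of $\Gamma$ with radial functions, and the dilation property $q^{\mE}f(r)=f(qr)q^{\mE}$. The only cosmetic difference is that the paper first splits $[\mE+\Gamma]_q=[\mE]_q+q^{\mE}[\Gamma]_q$ and manipulates the two pieces separately, whereas you work directly with $q^{\mE+\Gamma}$ as a single dilation-type operator; your organization is arguably a bit cleaner, but the substance is identical.
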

\begin{proof}
Because $\Gamma$ commutes with $r$ we find
\begin{eqnarray*}
\upx^qf(r)&=&\frac{1}{\ux}([\mE]_qf(r)+q^\mE f(r)[\Gamma]_q)\\
&=&\frac{1}{\ux}([\mE]_qf(r)- f(qr)[\mE]_q)+\frac{1}{\ux}f(qr)[\mE]_q+\frac{1}{\ux}f(qr)q^\mE[\Gamma]_q\\
&=&\frac{1}{\ux}\frac{1}{q-1}(f(qr)q^\mE-f(r)- f(qr)q^\mE+f(qr))+f(qr)\upx^q\\
&=&\frac{f(qr)-f(r)}{(q-1)\ux}+f(qr)\upx^q.
\end{eqnarray*}
\end{proof}

\subsection{The $q$-Laplace operator}
As in the undeformed case we define the $q$-Laplace operator as minus the square of the $q$-Dirac operator.
\begin{definition}
\label{defLapl}
The $q$-deformed Laplace operator on analytic functions is given by
\begin{eqnarray*}
\Delta_q=-(\upx^q)^2.
\end{eqnarray*}
\end{definition}

Because $\upx^q\upx^q(M_k+\ux M_{k-1})=0$ we find that the spherical harmonics are the polynomial null solutions of the $q$-Laplace operator. The undeformed Laplace operator can be decomposed into its radial and angular part,
\begin{eqnarray*}
r^2\Delta&=&\mE(m-2+\mE)+\Gamma(m-2-\Gamma).
\end{eqnarray*}

The angular part is the Laplace-Beltrami operator
\begin{eqnarray}
\label{defLB}
\Delta_{LB}&=&\Gamma(m-2-\Gamma),
\end{eqnarray}
which is clearly scalar although it is defined here using the Clifford valued Gamma operator. We will also derive such a decomposition for the $q$-Laplace operator. It turns out that the angular part of the $q$-Laplace operator will be given by
\begin{definition}
\label{defqLB}
The $q$-Laplace-Beltrami operator on analytic functions is defined as
\begin{eqnarray*}
\Delta_{LB}^q&=&[\Gamma]_q[m-2-\Gamma]_q.
\end{eqnarray*}
\end{definition}

This operator is scalar, which is not obvious at first sight. This is a consequence of the decomposition of the $q$-Laplace operator in theorem \ref{decompLapl}. Property \eqref{1Gamma2} of the Gamma operator implies that the $q$-Laplace Beltrami operator commutes with radial functions. 
\begin{theorem}
\label{decompLapl}
The $q$-Laplace operator can be decomposed as
\begin{eqnarray*}
\Delta_q&=&q^{m-1}(\partial_r^q)^2+[m-1]_q\frac{1}{r}\partial_r^q+\frac{1}{r^2}q^\mE\Delta_{LB}^q,\\
r^2\Delta_q&=&[\mE]_q[m-2+\mE]_q+q^\mE[\Gamma]_q[m-2-\Gamma]_q.
\end{eqnarray*}
\end{theorem}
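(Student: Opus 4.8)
The plan is to reduce the two formulas to elementary identities between $q$-numbers in the commuting operators $\mE$ and $\Gamma$. By Definition \ref{defLapl} and Definition \ref{defpx} we have $\Delta_q=-(\upx^q)^2$ with $\upx^q=\frac{1}{\ux}[\mE+\Gamma]_q$, so the first step is to compute $(\upx^q)^2$ by commuting the $q$-number $[\mE+\Gamma]_q$ past $\frac{1}{\ux}$. Since $\ux^2=-r^2$ is scalar and commutes with $\Gamma$ by \eqref{1Gamma2}, while \eqref{1Gamma1} gives $\Gamma\,\ux^{-1}=\ux^{-1}(m-1-\Gamma)$, and $\mE\,\ux^{-1}=\ux^{-1}(\mE-1)$ because $\ux^{-1}$ lowers the degree by one, one gets $q^{\mE+\Gamma}\ux^{-1}=q^\mE q^\Gamma\ux^{-1}=\ux^{-1}q^{\mE+m-2-\Gamma}$, hence $[\mE+\Gamma]_q\,\ux^{-1}=\ux^{-1}[\mE+m-2-\Gamma]_q$. (Equivalently, one may start from Lemma \ref{pxx}: $\upx^q\ux^{-1}=(\upx^q\ux)\ux^{-2}=[\mE-\Gamma+m]_q\ux^{-2}=-\frac{1}{r^2}[\mE+m-2-\Gamma]_q$, using $\mE\,r^{-2}=r^{-2}(\mE-2)$.) Either way, since $\frac{1}{\ux}\frac{1}{\ux}=-\frac{1}{r^2}$ and the two factors below are functions of the commuting pair $(\mE,\Gamma)$,
\[
r^2\Delta_q=-r^2(\upx^q)^2=[\mE+m-2-\Gamma]_q\,[\mE+\Gamma]_q.
\]

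The core of the argument is then the $q$-number identity
\[
[a+c-b]_q\,[a+b]_q=[a]_q\,[a+c]_q+q^a\,[b]_q\,[c-b]_q,
\]
valid for commuting $a,b$ and a scalar $c$; it is the $q$-analogue of the classical splitting $r^2\Delta=\mE(\mE+m-2)+\Gamma(m-2-\Gamma)$. Applying it with $a=\mE$, $b=\Gamma$, $c=m-2$, and using Definition \ref{defqLB} for $\Delta_{LB}^q=[\Gamma]_q[m-2-\Gamma]_q$, yields the second displayed formula $r^2\Delta_q=[\mE]_q[m-2+\mE]_q+q^\mE\Delta_{LB}^q$. I would verify the identity by the routine expansion: clear denominators by multiplying through by $(q-1)^2$ and write each $[\,\cdot\,]_q$ as $(q^{\cdot}-1)/(q-1)$; both sides then collapse to the same Laurent polynomial in $q^a,q^b,q^c$.

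For the first formula I would split the radial $q$-number via the basic addition rule $[x+y]_q=q^x[y]_q+[x]_q$ in the form $[m-2+\mE]_q=[(m-1)+(\mE-1)]_q=q^{m-1}[\mE-1]_q+[m-1]_q$, giving $[\mE]_q[m-2+\mE]_q=q^{m-1}[\mE]_q[\mE-1]_q+[m-1]_q[\mE]_q$. It then remains to identify the purely radial operators: for any $f$ one has $[\mE]_q f=\frac{q^{r\partial_r}f-f}{q-1}=\frac{f(qr)-f(r)}{q-1}=r\,\partial_r^q f$, and likewise $[\mE]_q[\mE-1]_q=r^2(\partial_r^q)^2$ (both checked on $r^k$ and extended by linearity and analyticity). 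Multiplying $r^2\Delta_q$ on the left by $\frac{1}{r^2}$ then produces the first formula. Finally, $\Delta_q$ is scalar by the preceding result and $(\partial_r^q)^2$, $\frac1r\partial_r^q$, $q^\mE$ are visibly scalar, so the identity forces $q^\mE\Delta_{LB}^q$, and hence $\Delta_{LB}^q$, to be scalar, which is exactly the assertion announced before the statement. As a cross-check one can confirm both formulas on the basis $\ux^{2l}H_k$ of $\cP_k$ via \eqref{laplFischer}.

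I expect the only genuinely delicate point to be the commutation $[\mE+\Gamma]_q\,\ux^{-1}=\ux^{-1}[\mE+m-2-\Gamma]_q$, i.e.\ getting the shift by exactly $m-2$ (not $m$ or $m-1$), since $\mE$ and $\Gamma$ interact with $\ux^{-1}$ through different mechanisms (the bivectorial relation \eqref{1Gamma1} versus a plain degree count); everything else is bookkeeping, with the displayed $q$-number identity being where the actual content of the $q$-deformed Laplacian decomposition resides.
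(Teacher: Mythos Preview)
Your proof is correct and follows essentially the same route as the paper: both reduce $r^2\Delta_q$ to $[\mE+m-2-\Gamma]_q[\mE+\Gamma]_q$ via the commutation of $\mE,\Gamma$ past $\ux^{\pm1}$ (the paper uses Lemma~\ref{pxx} for the second factor, you use the equivalent direct commutation), and then separate radial and angular parts. Your packaging of the paper's four-line expansion into the single $q$-identity $[a+c-b]_q[a+b]_q=[a]_q[a+c]_q+q^a[b]_q[c-b]_q$ is a tidy improvement but not a genuinely different argument.
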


\begin{proof}
We calculate using definition \ref{defpx}, lemma \ref{pxx} and formula \eqref{1Gamma1}
\begin{eqnarray*}
r^2\Delta_q&=&\ux[\mE+\Gamma]_q[\mE-\Gamma+m]_q\frac{1}{\ux}\\
&=&[\mE+m-2-\Gamma]_q[\mE+\Gamma]_q\\
&=&\left([\mE+m-2]_q+q^{\mE+m-2}[-\Gamma]_q\right)\left([\mE]_q+q^\mE[\Gamma]_q \right)\\
&=&[\mE+m-2]_q[\mE]_q+q^\mE \left([\Gamma]_q[\mE+m-2-\Gamma]_q+q^{m-2}[-\Gamma]_q[\mE]_q\right)\\
&=&[\mE+m-2]_q[\mE]_q+q^\mE [\Gamma]_q\left([\mE+m-2-\Gamma]_q+q^{m-2}\frac{q^{-\Gamma}(1-q^\Gamma)}{q^\Gamma-1}[\mE]_q\right)\\
&=&[\mE+m-2]_q[\mE]_q+q^\mE [\Gamma]_q[m-2-\Gamma]_q.
\end{eqnarray*}

This leads to the second expression, the first one can be found from
\begin{eqnarray*}
\frac{1}{r^2}[\mE]_q[m-2+\mE]_q&=&\frac{1}{r}\partial_r^q([m-2]_q+q^{m-2}r\partial_r^q)\\
&=&[m-2]_q\frac{1}{r}\partial_r^q+q^{m-2}\frac{1}{r}\partial_r^q+q^{m-1}(\partial_r^q)^2.
\end{eqnarray*}
\end{proof}

\begin{remark}
In \cite{DBS4} a theory of Clifford analysis in superspace was developed by constructing a Dirac operator which satisfies $\px^2=\Delta$ with $\Delta$ the well-known orthosymplectic super Laplace operator. Using definition \ref{defpx} we can also construct a theory of $q$-deformed Clifford and harmonic analysis in superspace.
\end{remark}

The decomposition of the $q$-Laplace operator in theorem \ref{decompLapl} can be used to calculate the action on the product of a radial function and a spherical harmonic.

\begin{lemma}
\label{laplHk}
For $f$ a function of $r$ and $H_k$ a spherical harmonic of degree $k$, the following holds
\begin{eqnarray*}
\Delta_qf(r)H_k&=&H_k\left[q^{m-1+2k}(\partial_r^q)^2+[m-1+2k]_q\frac{1}{r}\partial_r^q\right]f(r).
\end{eqnarray*}
\end{lemma}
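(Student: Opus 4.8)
Here is my plan for proving Lemma \ref{laplHk}.

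The plan is to reduce the statement to the monomials $f(r)=r^s$ and then to a single identity between $q$-numbers. Since $\Delta_q$, $\partial_r^q$ and $(\partial_r^q)^2$ are linear and an analytic radial function is (locally) a power series in $r$, it suffices to treat $f(r)=r^s$; moreover, since all operators in sight are scalar, it suffices to treat scalar $H_k$. I would start from the radial--angular decomposition of Theorem \ref{decompLapl}, namely $r^2\Delta_q=[\mE]_q[m-2+\mE]_q+q^\mE[\Gamma]_q[m-2-\Gamma]_q$, and evaluate each summand on $r^sH_k$.

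For the angular summand I would use the decomposition \eqref{decompHk}, writing $H_k=M_k+\ux M_{k-1}$ with $M_j\in\cM_j$. Formula \eqref{pxEGamma} together with $\upx M_j=0$ gives $\Gamma M_k=-kM_k$, and then \eqref{1Gamma1} gives $\Gamma\,\ux M_{k-1}=(m+k-2)\,\ux M_{k-1}$. The key point is that the two eigenvalues $-k$ and $m+k-2$ of $\Gamma$ on the summands of $H_k$ both yield $[\lambda]_q[m-2-\lambda]_q=[-k]_q[m+k-2]_q$, so $[\Gamma]_q[m-2-\Gamma]_q$ acts on every scalar spherical harmonic of degree $k$ by multiplication by $[-k]_q[m+k-2]_q$. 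As $\Gamma$ commutes with radial functions by \eqref{1Gamma2}, and $r^sH_k$ is homogeneous of degree $s+k$ so that $q^\mE$ multiplies it by $q^{s+k}$, the angular summand contributes $q^{s+k}[-k]_q[m+k-2]_q\,r^sH_k$. For the radial summand, $\mE(r^sH_k)=(s+k)r^sH_k$ shows that $[\mE]_q[m-2+\mE]_q$ multiplies $r^sH_k$ by $[s+k]_q[m-2+s+k]_q$. Hence
\begin{eqnarray*}
r^2\Delta_q(r^sH_k)&=&\big([s+k]_q[m-2+s+k]_q+q^{s+k}[-k]_q[m+k-2]_q\big)r^sH_k.
\end{eqnarray*}
On the other hand, applying the operator on the right-hand side of the lemma to $f(r)=r^s$ and using $[a]_q+q^a[b]_q=[a+b]_q$ gives $H_k\,[s]_q\big(q^{m-1+2k}[s-1]_q+[m-1+2k]_q\big)r^{s-2}=H_k\,[s]_q[m+2k+s-2]_q\,r^{s-2}$. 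Comparing with the previous display divided by $r^2$, the lemma becomes the $q$-number identity
\begin{eqnarray*}
[s+k]_q[m-2+s+k]_q+q^{s+k}[-k]_q[m+k-2]_q&=&[s]_q[m+2k+s-2]_q.
\end{eqnarray*}

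I would verify this identity directly: writing $q^{s+k}[-k]_q=-q^s[k]_q$, expanding $[s+k]_q=[s]_q+q^s[k]_q$ and $[m-2+s+k]_q=[s]_q+q^s[m+k-2]_q$, and collecting terms, the left-hand side becomes $[s]_q^2+q^s[s]_q\big([m+k-2]_q+[k]_q+(q-1)[k]_q[m+k-2]_q\big)$; since $1+(q-1)[k]_q=q^k$, the bracket equals $[m+k-2]_q+q^{m+k-2}[k]_q=[m+2k-2]_q$, and this produces exactly $[s]_q[m+2k+s-2]_q$. I expect the only real obstacle to be bookkeeping: keeping straight how $\mE$, $q^\mE$ and $\Gamma$ act on the homogeneous pieces $M_k$ and $\ux M_{k-1}$, and carrying out the elementary but slightly fiddly $q$-number algebra of the last step; everything else is a direct substitution into Theorem \ref{decompLapl}.
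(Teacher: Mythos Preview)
Your proposal is correct and follows essentially the same route as the paper: both arguments plug into the decomposition of Theorem \ref{decompLapl}, identify the scalar by which $\Delta_{LB}^q=[\Gamma]_q[m-2-\Gamma]_q$ acts on $H_k$, and then reduce the claim to a $q$-number identity.

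There are two small stylistic differences worth noting. First, the paper obtains the Laplace--Beltrami eigenvalue on $H_k$ in one line from $\Delta_qH_k=0$ together with Theorem \ref{decompLapl}, which immediately gives $q^\mE\Delta_{LB}^qH_k=-[k]_q[m-2+k]_qH_k$; you instead split $H_k=M_k+\ux M_{k-1}$ and check directly that the two $\Gamma$-eigenvalues $-k$ and $m+k-2$ produce the same value of $[\Gamma]_q[m-2-\Gamma]_q$. Both are fine; the paper's shortcut is quicker, while your computation makes explicit why $\Delta_{LB}^q$ is scalar on $\cH_k$. Second, the paper keeps $f$ arbitrary and performs the $q$-number manipulation at the level of operators in $\mE$ (arriving at $\frac{1}{r^2}[\mE]_q[m-2+\mE+2k]_q$ acting on $f(r)$), whereas you reduce to monomials $f(r)=r^s$ and verify a scalar identity. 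These are equivalent: your identity is exactly the paper's operator identity evaluated on $r^s$, and the operator form avoids appealing to linearity over power series.
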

\begin{proof}
Since $\Delta_q H_k=0$, theorem \ref{decompLapl} yields 
\begin{eqnarray*}
q^\mE\Delta_{LB}H_k=-[k]_q[m-2+k]_qH_k.
\end{eqnarray*}

We use this to calculate
\begin{eqnarray*}
\Delta_qf(r)H_k&=&H_k\left[\frac{1}{r^2}[\mE+k]_q[m-2+\mE+k]_q-\frac{1}{r^2}q^{\mE}[k]_q[m-2+k]_q\right]f(r)\\
&=&H_k\frac{1}{r^2}\left[[\mE]_q[m-2+\mE+k]_q+q^\mE[k]_q[m-2+\mE+k]_q-q^{\mE}[k]_q[m-2+k]_q\right]f(r)\\
&=&H_k\frac{1}{r^2}\left[[\mE]_q[m-2+\mE+k]_q+q^\mE[k]_qq^{m-2+k}[\mE]_q\right]f(r)\\
&=&H_k\frac{1}{r^2}[\mE]_q[m-2+\mE+2k]_qf(r).\\
\end{eqnarray*}

This is the usual action of $\Delta_q$ on $f(r)$, with substitution $m\to m+2k$.
\end{proof}

It is inelegant that scalar operators like the $q$-deformation of the Laplace and Laplace-Beltrami operator are defined only using Clifford algebras. Therefore we derive purely scalar expressions for $\Delta_{LB}^q$ and $\Delta_q$.
\begin{lemma}
\label{LBqLB}
The $q$-Laplace-Beltrami on analytic functions is given by
\begin{eqnarray*}
\Delta_{LB}^q&=&[\frac{m}{2}-1-\sqrt{(\frac{m}{2}-1)^2-\Delta_{LB}}]_q[\frac{m}{2}-1+\sqrt{(\frac{m}{2}-1)^2-\Delta_{LB}}]_q.
\end{eqnarray*}
\end{lemma}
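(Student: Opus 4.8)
The plan is to diagonalise both sides simultaneously on spherical harmonics, reducing everything to the known eigenvalues of $\Gamma$. First recall that on a scalar spherical harmonic $H_k$ the undeformed Laplace--Beltrami operator \eqref{defLB} acts as multiplication by $-k(m-2+k)$. Indeed, $\Delta_{LB}$ commutes with radial functions by \eqref{1Gamma2}; on $\cM_k$ formula \eqref{pxEGamma} together with $\upx M_k=0$ gives $\Gamma M_k=-kM_k$, and on $\ux\cM_{k-1}$ formula \eqref{1Gamma1} gives $\Gamma(\ux M_{k-1})=(m+k-2)\ux M_{k-1}$. Hence, by the decomposition \eqref{decompHk}, $\Gamma$ has the two eigenvalues $-k$ and $m+k-2$ on $\cH_k$, and in either case $\Gamma(m-2-\Gamma)H_k=-k(m-2+k)H_k$, as claimed. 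Consequently $(\tfrac m2-1)^2-\Delta_{LB}$ acts on $\cH_k$ as the scalar $(\tfrac m2-1+k)^2\ge 0$, so one may define $\sqrt{(\tfrac m2-1)^2-\Delta_{LB}}$ by letting it act as multiplication by $\tfrac m2-1+k$ on the degree-$k$ spherical-harmonic component of an analytic function (for $m\ge 2$; in general one takes the nonnegative root). The choice of branch will be immaterial below, since the two factors are scalars and commute.

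With this in hand I evaluate the right-hand side of the asserted identity on $\cH_k$. The operators $\tfrac m2-1\mp\sqrt{(\tfrac m2-1)^2-\Delta_{LB}}$ act on $\cH_k$ as the two numbers $\{-k,\,m-2+k\}$ (in an order depending on the chosen branch), so the product $[\tfrac m2-1-\sqrt{(\tfrac m2-1)^2-\Delta_{LB}}]_q\,[\tfrac m2-1+\sqrt{(\tfrac m2-1)^2-\Delta_{LB}}]_q$ acts on $\cH_k$ as $[-k]_q[m-2+k]_q$. On the other hand, $\Delta_{LB}^q=[\Gamma]_q[m-2-\Gamma]_q$ from Definition~\ref{defqLB}, and using the eigenvalues of $\Gamma$ found above it acts on $\cM_k$ as $[-k]_q[m-2+k]_q$ and on $\ux\cM_{k-1}$ as $[m+k-2]_q[-k]_q$, which is the same scalar. (This is precisely why $\Delta_{LB}^q$ is scalar, as promised after Definition~\ref{defqLB}, and is consistent with Theorem~\ref{decompLapl}.) Thus $\Delta_{LB}^q$ and the claimed expression agree on every $\cH_k$.

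Finally, both operators commute with radial functions, and by Lemma~\ref{fischerdecomp} — together with the fact that on each sphere an analytic function expands in spherical harmonics — every function in the class under consideration is a (locally convergent) sum of products of a radial function with a spherical harmonic. Since the two operators agree on each such building block, they coincide. The only point requiring genuine care is the meaning of the operator square root; this is handled exactly as the operator $q^\Gamma$ in the discussion following Definition~\ref{defpx}, namely via the spherical-harmonic expansion on the analytic functions one works with (or, on a manifold $\partial\mB(R_i)$, via the ellipticity of $\Gamma_{\ux}$), and I expect this—rather than the purely algebraic eigenvalue bookkeeping—to be the main thing to pin down.
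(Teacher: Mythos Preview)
Your argument is correct, but it follows a genuinely different route from the paper's. You diagonalise both sides on the common eigenspaces $\cH_k$ of $\Delta_{LB}$, reducing the identity to the scalar check $[-k]_q[m-2+k]_q=[-k]_q[m-2+k]_q$, and then pass to general analytic functions by spherical-harmonic expansion. The paper instead works purely at the level of operator power series: writing $[a-X]_q[a+X]_q=\dfrac{q^{2a}-2q^{a}\cosh(X\ln q)+1}{(q-1)^2}$ with $a=\tfrac{m}{2}-1$ and $X=\sqrt{(\tfrac{m}{2}-1)^2-\Delta_{LB}}=\sqrt{(\Gamma-a)^2}$, the evenness of $\cosh$ gives $\cosh(\sqrt{(\Gamma-a)^2}\ln q)=\cosh((\Gamma-a)\ln q)$, so the square root drops out formally and the expression simplifies directly to $[\Gamma]_q[m-2-\Gamma]_q$. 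The paper's approach makes transparent \emph{why} the square root is harmless as a formal operator (only even powers survive), without appealing to the spectrum of $\Gamma$; your spectral approach is more elementary and, as a bonus, verifies explicitly that $\Delta_{LB}^q$ acts as the same scalar on both pieces $\cM_k$ and $\ux\cM_{k-1}$ of $\cH_k$, which is the scalarity claim following Definition~\ref{defqLB}.
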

\begin{proof}
It is not a priori clear that the right hand side is well defined. In the case $q=1$ we find
\begin{eqnarray*}
\left(\frac{m}{2}-1\right)^2-\left(\left(\frac{m}{2}-1\right)^2-\Delta_{LB}\right),
\end{eqnarray*}

so there does not really appear a square root of the Laplace-Beltrami operator, which would be ill-defined. The same thing happens in the $q$-deformed case. The right hand side is defined by a series expansion, so it is equal to the series expansion of
\begin{eqnarray*}
\frac{q^{m-2}-q^{\frac{m}{2}-1}2\cosh(\ln q\sqrt{(\frac{m}{2}-1)^2-\Delta_{LB}})+1}{(q-1)^2}.
\end{eqnarray*}

Using equation (\ref{defLB}), we calculate
\begin{eqnarray*}
\cosh(\ln q\sqrt{(\frac{m}{2}-1)^2-\Delta_{LB}})&=&\sum_{l=0}^\infty\frac{(\ln q\sqrt{(\Gamma-\frac{m}{2}+1)^2})^{2l}}{(2l)!}\\
&=&\sum_{l=0}^\infty\frac{(\ln q(\Gamma-\frac{m}{2}+1))^{2l}}{(2l)!}\\
&=&\cosh(\ln q(\Gamma-\frac{m}{2}+1)).
\end{eqnarray*}

This means the expression on the right hand side is equal to
\begin{eqnarray*}
\frac{q^{m-2}-q^{\frac{m}{2}-1}(q^{\Gamma-\frac{m}{2}+1}+q^{\frac{m}{2}-1-\Gamma})+1}{(q-1)^2}&=&\frac{q^{m-2}-q^\Gamma-q^{m-2-\Gamma}+1}{(q-1)^2},
\end{eqnarray*}

which is the $q$-Laplace-Beltrami operator in definition \ref{defqLB}.
\end{proof}

Similarly we can prove the following scalar expressions for the $q$-Laplace operator.
\begin{theorem}
\label{scalqLapl}
The $q$-Laplace operator on analytical functions is given by
\begin{eqnarray*}
\Delta_q&=&\frac{1}{r^2}\left[\mE+\frac{m}{2}-1+\sqrt{(\mE+\frac{m}{2}-1)^2-r^2\Delta}\right]_q\left[\mE+\frac{m}{2}-1-\sqrt{(\mE+\frac{m}{2}-1)^2-r^2\Delta}\right]_q\\
&=&\frac{1}{r^2}\left[\mE+\frac{m}{2}-1+\sqrt{(\frac{m}{2}-1)^2-\Delta_{LB}}\right]_q\left[\mE+\frac{m}{2}-1-\sqrt{(\frac{m}{2}-1)^2-\Delta_{LB}}\right]_q.
\end{eqnarray*}
\end{theorem}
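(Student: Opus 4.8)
The plan is to reduce the statement to the factored form of $r^2\Delta_q$ that already appears inside the proof of Theorem~\ref{decompLapl}, namely $r^2\Delta_q=[\mE+m-2-\Gamma]_q[\mE+\Gamma]_q$, and then to recognize the two displayed right-hand sides as two ways of rewriting the product $[u]_q[v]_q$ of the commuting $q$-numbers built from $u=\mE+\Gamma$ and $v=\mE+m-2-\Gamma$. First I would record the elementary facts that $u$ and $v$ commute (both are functions of the commuting operators $\mE$ and $\Gamma$), that $u+v=2\mE+m-2$ and $u-v=2\Gamma-(m-2)$, and that the product $uv$ is in fact scalar: expanding and using $\mE\Gamma=\Gamma\mE$ gives $uv=\mE(m-2+\mE)+\Gamma(m-2-\Gamma)=\mE(m-2+\mE)+\Delta_{LB}=r^2\Delta$.

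Next, put $a=\tfrac12(u+v)=\mE+\tfrac m2-1$ and let $b=\sqrt{a^2-uv}$, defined --- exactly as in Lemma~\ref{LBqLB} --- only through the even series in the scalar operator $a^2-uv=\tfrac14(u-v)^2$. Since $uv=r^2\Delta$ one has $a^2-uv=(\mE+\tfrac m2-1)^2-r^2\Delta$, while the identity $(\mE+\tfrac m2-1)^2-\mE(m-2+\mE)=(\tfrac m2-1)^2$ shows this same operator equals $(\tfrac m2-1)^2-\Delta_{LB}$; these two expressions for the radicand are precisely what produce the two displayed forms, so it remains to prove the single operator identity $[u]_q[v]_q=[a+b]_q[a-b]_q$.

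Expanding the product on the right (and using that $a$ and $b$ commute) gives
\[
[a+b]_q[a-b]_q=\frac{q^{2a}-q^a\bigl(q^{b}+q^{-b}\bigr)+1}{(q-1)^2},
\]
in which only the even combination $q^{b}+q^{-b}$ appears, so the square root never really occurs --- this is the same mechanism as in the proof of Lemma~\ref{LBqLB}. Writing $q^{b}+q^{-b}=\sum_{l\ge0}\tfrac{2(\ln q)^{2l}}{(2l)!}(b^2)^l$ with $b^2=(\Gamma-\tfrac m2+1)^2$ gives $q^{b}+q^{-b}=2\cosh\bigl(\ln q\,(\Gamma-\tfrac m2+1)\bigr)=q^{(u-v)/2}+q^{(v-u)/2}$. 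Since $q^{a}q^{(u-v)/2}=q^{u}$, $q^{a}q^{(v-u)/2}=q^{v}$ and $q^{2a}=q^{u+v}$, the numerator collapses to $q^{u+v}-q^{u}-q^{v}+1=(q^{u}-1)(q^{v}-1)=(q-1)^2[u]_q[v]_q$. Combined with $r^2\Delta_q=[u]_q[v]_q$ from Theorem~\ref{decompLapl}, this establishes both formulas.

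As for the main obstacle: there is no serious analytic difficulty, only the bookkeeping point --- already settled in Lemma~\ref{LBqLB} --- that the square-root expressions are legitimate because the radicand $a^2-uv$ is a perfect square of the operator polynomial $\Gamma-\tfrac m2+1$ and $b$ enters the final answer solely through the even function $q^{b}+q^{-b}$. The remaining work is a short commuting-operator computation; the only thing to keep verifying along the way is that every operator in sight ($\mE$, $\Gamma$, and functions of them) genuinely commutes, which it does.
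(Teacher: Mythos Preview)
Your proof is correct and follows exactly the route the paper intends: the paper does not spell out a proof of Theorem~\ref{scalqLapl} but simply says ``Similarly we can prove\ldots'' after Lemma~\ref{LBqLB}, and your argument is precisely that analogue --- starting from the factorization $r^2\Delta_q=[\mE+m-2-\Gamma]_q[\mE+\Gamma]_q$ in the proof of Theorem~\ref{decompLapl}, then rewriting $[u]_q[v]_q$ as $[a+b]_q[a-b]_q$ via the $\cosh$ mechanism of Lemma~\ref{LBqLB}. Your identification of the radicand in both forms (through $uv=r^2\Delta$ and $(\mE+\tfrac m2-1)^2-\mE(m-2+\mE)=(\tfrac m2-1)^2$) is correct, and the bookkeeping about commutativity and the disappearance of the square root is handled exactly as in the paper.
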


As we will see the $q$-Laplace operator is related to a fundamental object in quantum Euclidean space, without connection to Clifford analysis. It is remarkable that it is defined more elegantly using Clifford algebras (which disappear in the resulting operator) in theorem \ref{decompLapl}, than without Clifford algebras, in theorem \ref{scalqLapl}.

\subsection{A $q$-deformed version of $\mathfrak{sl}_2(\mR)$ and $\mathfrak{osp}(1|2)$}

In classical harmonic analysis the $SO(m)$-invariant operators $r^2/2$, $\Delta/2$ and $\mE+\frac{m}{2}$ generate the Lie algebra $\mathfrak{sl}_2(\mR)$, see \cite{MR0986027}. These operators also generate as an associative algebra the universal enveloping algebra of $\mathfrak{sl}_2(\mR)$. By $q$-deforming this to $U_q(\mathfrak{sl}_2(\mR))$ we take one of the two dually related ways to $q$-deform a Lie-algebra. We define
\begin{equation}
\label{defE}
E=\frac{q+1}{4}\left(\upx^q\ux+q^2\ux\upx^q\right)=\frac{q+1}{4}\left([\mE+m-\Gamma]_q+q^2[\mE+\Gamma]_q\right).
\end{equation}

This operator is scalar, see lemma \ref{scalar}. We will use the notations $\{ A,B\}_c=AB+cBA$ and $[A,B]_c=AB-cBA$.  Rewriting equation (\ref{EE}) and a straightforward calculation lead to
\begin{eqnarray*}
\left[ \Delta_q/2, r^2/2 \right]_{q^4} &=& E\\
\left[ \Delta_q/2, E\right]_{q^2} &=&  \frac{[4]_q[2]_q}{4}\Delta_q/2 \\
\left[ E,r^2/2 \right]_{q^2} &=& \frac{[4]_q[2]_q}{4} r^2/2.
\end{eqnarray*}

So $r^2/2$, $\Delta_q/2$ and $E$ form a $q$-deformed version of $\mathfrak{sl}_2(\mR)$. This corresponds to the $\mathfrak{su}(1|1)_q$ quantum algebra in \cite{MR1146380}, which is defined by operators $L_1,L_{-1}$ and $L_0$, satisfying
\begin{eqnarray*}
q^{-2}L_1L_{-1}-q^2L_1{-1}L_1 &=& q\frac{[4]_q}{[2]_q}L_0\\
q^{-1}L_1L_0-qL_0L_1 &=&  L_1\\
q^{-1}L_0L_{-1}-qL_{-1}L_0 &=&  L_{-1}.
\end{eqnarray*}

This algebra is obtained from the identification $L_1=\frac{q\Delta_q}{[2]_q}$, $L_{-1}=\frac{qr^2}{[2]_q}$ and $L_0=\frac{4q}{[4]_q[2]_q}E$. Now we prove that these generators of $\mathfrak{sl}_2(\mR)_q$ are still $SO(m)$-invariant. Therefore, a deformation of the Howe dual pair $\left(SO(m),\mathfrak{sl}_2(\mR)\right)$ to $\left(SO(m),\mathfrak{sl}_2(\mR)_q\right)$ is obtained.

\begin{lemma}
The operators $r^2$, $\Delta_q$ and $E$ are $SO(m)$-invariant.
\end{lemma}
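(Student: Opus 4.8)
The plan is to show each of the three operators $r^2$, $\Delta_q$ and $E$ commutes with the $SO(m)$-action on $\cP$, by reducing everything to already-established $SO(m)$-invariance of the classical objects $\ux$ (equivalently $r^2$), $\upx$, $\mE$ and $\Gamma$. For $r^2$ this is immediate: $r^2 = -\ux^2$ and the $SO(m)$-action fixes $\ux^2$ (it is a scalar). For $\Delta_q$, recall from Definition \ref{defLapl} and Definition \ref{defpx} that $\Delta_q = -(\upx^q)^2$ with $\upx^q = \frac{1}{\ux}[\mE+\Gamma]_q$. Since the Euler operator $\mE$ and the Gamma operator $\Gamma$ are both $SO(m)$-invariant (the Euler operator manifestly so, the Gamma operator because it is built from the angular momentum operators $x_i\partial_{x_j} - x_j\partial_{x_i}$, which span the Lie algebra $\mathfrak{so}(m)$ and are invariant under the adjoint/rotation action in the standard way used in Clifford analysis), any operator expressible as a (convergent) power series in $\mE$ and $\Gamma$ — in particular $[\mE+\Gamma]_q = \frac{q^{\mE+\Gamma}-1}{q-1}$ — is $SO(m)$-invariant. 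Combining with the invariance of multiplication by $\ux$ (hence of $\frac{1}{\ux}$ on the relevant graded pieces, where it is the partial inverse appearing in the Fischer decomposition), $\upx^q$ is $SO(m)$-invariant, and therefore so is its square $\Delta_q$.

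For $E$, I would use the expression from \eqref{defE}, namely $E = \frac{q+1}{4}\bigl([\mE+m-\Gamma]_q + q^2[\mE+\Gamma]_q\bigr)$. Each summand is again a power series in the invariant operators $\mE$ and $\Gamma$ (the constant shift by $m$ causes no trouble), so $E$ is $SO(m)$-invariant by the same argument. Alternatively, one can argue directly from $E = \frac{q+1}{4}(\upx^q\ux + q^2\ux\upx^q)$: both $\upx^q$ (just shown) and multiplication by $\ux$ are $SO(m)$-invariant, so any polynomial combination of them is invariant, and $E$ is such a combination.

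The only slightly delicate point — and the one I would want to state carefully — is what "$SO(m)$-invariant" means for unbounded operators defined by series in $\mE$ and $\Gamma$, and why invariance of $\mE$, $\Gamma$ transfers to invariance of $[\mE+\Gamma]_q$ and $[\mE+m-\Gamma]_q$. The clean way is to work blockwise: the action of $SO(m)$ on $\cP$ preserves each space $\ux^{2l}\cH_k$ (equivalently each $\ux^i \cM_{k-i}$), on which $\mE$ and $\Gamma$ act (on the Clifford-valued pieces $\mathcal{M}_j$, $\Gamma$ acts as the scalar $-j$ after passing through the relation \eqref{1Gamma1}, and on scalar $\cH_k$ the Laplace–Beltrami eigenvalue pins it down), so $q^{\mE+\Gamma}$ acts as an honest operator commuting with the (finite-dimensional, hence unproblematic) $SO(m)$-action on each block. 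Since $r^2$, $\Delta_q$ and $E$ all preserve these blocks and act within them by operators that are polynomials in the block-scalars coming from $\mE$ and $\Gamma$, each of them commutes with the $SO(m)$-action block by block, hence globally. I expect the write-up to be short: essentially "$r^2$, $\Delta_q$, $E$ are all built from the $SO(m)$-invariant operators $\ux$, $\mE$, $\Gamma$ (via Definition \ref{defpx} and \eqref{defE}), hence are $SO(m)$-invariant," with the main obstacle being merely the bookkeeping of justifying that a series in invariant operators is invariant.
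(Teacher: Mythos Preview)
Your argument conflates $\mathrm{Spin}(m)$-invariance with $SO(m)$-invariance. The operators $\ux$, $\upx$ and $\Gamma$ (and hence $\upx^q=\frac{1}{\ux}[\mE+\Gamma]_q$) are invariant under the $\mathrm{Spin}(m)$-action, in which both the coordinates $x_i$ and the Clifford generators $e_i$ are rotated simultaneously; the paper states this explicitly just before equation \eqref{partieleafg}. They are \emph{not} invariant under the naive $SO(m)$-action on the coordinates alone: for instance $\ux=\sum_j e_jx_j$ visibly changes if you rotate the $x_j$ but leave the $e_j$ fixed, and $\Gamma=-\sum_{i<j}e_ie_j(x_i\partial_{x_j}-x_j\partial_{x_i})$ requires the bivectors $e_ie_j$ to transform under the adjoint action to stay fixed. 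So the sentence ``both $\upx^q$ \dots\ and multiplication by $\ux$ are $SO(m)$-invariant'' is not correct as written, and your blockwise argument (which uses the Clifford-valued blocks $\ux^i\cM_{k-i}$) inherits the same issue. Your approach can be repaired by first arguing $\mathrm{Spin}(m)$-invariance and then observing that because $\Delta_q$ and $E$ are \emph{scalar} (axiom $(A3)$ and Lemma~\ref{scalar}), the $\mathrm{Spin}(m)$-action on their domain factors through $SO(m)$; but you have not made that step.

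The paper avoids this subtlety by taking a different route. For $\Delta_q$ it invokes Theorem~\ref{scalqLapl}, which rewrites $\Delta_q$ entirely in terms of the scalar, classically $SO(m)$-invariant operators $r^2$, $\mE$, $\Delta$ (equivalently $\Delta_{LB}$), with no Clifford ingredients at all; $SO(m)$-invariance is then immediate. For $E$ it does not use \eqref{defE} directly, but uses the relation $[\Delta_q/2,\,r^2/2]_{q^4}=E$ just established, so that $E$ inherits $SO(m)$-invariance from $r^2$ and $\Delta_q$. This is both shorter and sidesteps the $\mathrm{Spin}(m)$ versus $SO(m)$ bookkeeping entirely.
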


\begin{proof}
Since the undeformed operators are $SO(m)$-invariant, we find that $r^2$ is $SO(m)$-invariant and using theorem \ref{scalqLapl} that $\Delta_q$ is $SO(m)$-invariant. Because $E$ can be written as the $q^4$-commutator of $r^2$ and $\Delta_q$ it is also invariant.
\end{proof}

The module $\oplus_jr^{2j}H_k$ forms a lowest weight module for the representation of $\mathfrak{sl}_2(\mR)$ given by the action of $\Delta,$ $r^2$ and $\mE+m/2$. The lowest weight vector is $H_k$ with lowest weight $m/2+k$. The action of $\Delta_q, r^2$ and $E$ has the same structure but with $q$-deformed coefficients, so we also obtain a lowest weight module for $\mathfrak{sl}_2(\mR)_q$.

We can consider a larger algebra than $\mathfrak{sl}_2$, generated by $\upx$ and $\ux$. Then we find the Lie superalgebra $\mathfrak{osp}(1|2)$. Here we give the $q$-deformed commutation rules of the algebra generated by $\upx^q$ and $\ux$
\[
\{\ux,\ux\} = -2 r^2\qquad \{\upx^q,\ux\}_{q^2} =\frac{q+1}{2} E\qquad \{\upx^q,\upx^q\} = -2 \Delta_q
\]

\label{qcommu}
and
\[
\begin{array}{lllllll}
\left[\ux, r^2\right] &=&0 &\;& \left[\upx^q, r^2\right]_{q^2} &=&-(q+1)\ux\\
\left[\Delta_q, \ux \right]_{q^2} &=&-(q+1)\upx^q &\;& \left[\upx^q, \Delta_q\right] &=&0\\
\left[E, \ux\right]_{q^2} &=&\frac{(q+1)^2}{4}\ux-q^2(1-q^2)r^2\upx^q &\;& \left[\upx^q,E\right]_{q^2} &=&\frac{(q+1)^2}{4}\upx^q-q^2(1-q^2)\ux\Delta_q.\\
\end{array}
\]

As an illustration we calculate $\left[\Delta_q, \ux \right]_{q^2} $, using definition \ref{defpx}, lemma \ref{pxx}, the fact that $\mE$ and $\Gamma$ commute and axiom $(A2)$
\begin{eqnarray*}
(\upx^q)^2\ux&=&\frac{1}{\ux}\ux\upx^q\upx^q\ux=\frac{1}{\ux}\upx^q\ux\ux\upx^q\\
&=&\frac{1}{\ux}((q+1)\ux+q^2\ux^2\upx^q)\upx^q\\
&=&(q+1)\upx^q+q^2\ux(\upx^q)^2.\\
\end{eqnarray*}

Since $\upx^q$ is Spin$(m)$-invariant we also obtain the Howe dual pair $\left(\mbox{Spin}(m),\mathfrak{osp}(1|2)_q\right)$.

\section{$q$-analogues of the radial Schr\"odinger equation}

In \cite{MR1288667, MR1239953} the Schr\"odinger equation of the harmonic oscillator in the $m$-dimensional quantum Euclidean space was studied. The symmetry group of the construction is $SO_q(m)$, see \cite{MR1097174}. The Hopf algebra $Fun_q(SO(m))$ of functions on $SO_q(m)$ are power series in $T_{ij}$, with $T_{ij}(g)$ the matrix of the fundamental representation for $g\in SO(m)$. They satisfy $TCT^T=C$, for the metric $C$ and have commutation relations determined by the braid matrix $\hat{R}$, $\hat{R}^{ij}_{kl}T^k_s T^l_p=T^j_lT^i_k\hat{R}^{kl}_{sp}$. In the undeformed case $\hat{R}_{kl}^{ij}=\delta_{ik}\delta_{jl}$. The braid matrix can be written using projection operators as
\begin{eqnarray*}
\hat{R}&=&q\cP_S-q^{-1}\cP_A+q^{1-m}\cP_1.
\end{eqnarray*}

The braid matrix is connected to the metric by the relation $(\cP_1)^{ij}_{kl}=\frac{C^{ij}C_{kl}}{C^{pq}C_{pq}}$. The commutation relations for the variables and the derivatives are given by $(\cP_A)_{kl}^{ij}x^kx^l=0$ and $(\cP_A)^{ij}_{kl}\partial^k\partial^l=0$. The action of the derivatives is given by the Leibniz rule
\begin{eqnarray*}
\partial^ix^j&=&C^{ij} +q\hat{R}^{ij}_{kl}x^k\partial^l.
\end{eqnarray*}

The metric is used to define the generalized norm squared $x^2=x\cdot x=x^iC_{ij}x^j$ and the Laplace operator $\partial\cdot\partial =\partial^iC_{ij}\partial^j$, they are clearly $SO_q(m)$-invariant. The function space considered is freely generated by the $x^k$ modulo the $\cP_A$-commutation relations. The center is generated by $1$ and $x^2$ (see \cite{MR1288667, MR1239953, MR1097174}).

This allows to construct a $q$-deformed Hamiltonian with the corresponding Schr\"odinger equation
\begin{eqnarray}
\label{qgroupharm}
H\,\Psi=\left[-q^m\partial\cdot\partial+x\cdot x\right]\Psi&=&E\,\Psi,
\end{eqnarray}
which has an $SO_q(m)$ symmetry. In \cite{MR1288667} this equation was first solved by constructing creation and annihilation operators. Then it was shown that this equation could also be solved using an ansatz of the form
\begin{eqnarray*}
\Psi&=&S^I_k\, g(x^2),
\end{eqnarray*}

where $S^I_k$ is of degree $k$ and satisfies $\partial\cdot\partial S^I_k=0$, so it replaces the notion of a spherical harmonic. The Schr\"odinger equation (\ref{qgroupharm}) then led to the following equation (we use an unimportant different normalization of the energy)
\begin{eqnarray}
\label{qgroupharm2}
\left[-q^{m+2k}x^2(\partial_{x^2}^{q^2})^2-[\frac{m}{2}+k]_{q^2}\partial_{x^2}^{q^2} +\frac{x^2}{(q+1)^2}\right]g(x^2)&=&\frac{E}{q+1}g(x^2).
\end{eqnarray}
In this equation $x^2$ can be treated as a normal variable, so we take $r^2=x^2$ and $g$ has to satisfy a $q$-difference equation. By substituting $g(x^2)=f(r)$ and calculating
\begin{eqnarray*}
\partial_{x^2}^{q^2}g(x^2)&=&\frac{g(q^2x^2)-g(x^2)}{(q^2-1)x^2}\\
&=&\frac{1}{(q+1)r}\frac{f(qr)-f(r)}{(q-1)r}
\end{eqnarray*}

we find that equation (\ref{qgroupharm2}) leads to
\begin{eqnarray}
\label{Schro1}
\frac{1}{q+1}\left[-q^{m+2k-1}(\partial_r^q)^2-[m+2k-1]_q\frac{1}{r}\partial_{r}^{q} +r^2\right]f(r)&=&Ef(r).
\end{eqnarray}
This equation was studied in \cite{MR1447846} and \cite{Papp}. With the $q$-deformed Laplace operator in definition \ref{defLapl} it is possible to put this equation into a Schr\"odinger equation completely determined by $q$-analysis, without quantum variables. By lemma \ref{laplHk}, the equation (\ref{Schro1}) for $f(r)$ is equivalent with
\begin{eqnarray}
\label{Schro2}
\frac{1}{q+1}\left[-\Delta_q+r^2\right]f(r)H_k&=&Ef(r)H_k,
\end{eqnarray}
for $H_k$ an arbitrary spherical harmonic. So, the entire quantum system in $q$-Euclidean space can be replaced by the $q$-Schr\"odinger equation in undeformed space,
\begin{eqnarray*}
\frac{1}{q+1}\left[-\Delta_q+r^2\right]\Psi(\ux)&=&E\,\Psi(\ux).
\end{eqnarray*}

The dimension of the space of spherical harmonics does not depend on $q$, so $\dim \cS^I_k=\dim  \cH_k$, see \cite{MR1288667, MR1239953, MR1097174, MR1380862, MR1408120}. This means the energy eigenvalues and multiplicities of more general Schr\"odinger equations $\frac{1}{q+1}\left[-\Delta_q+V(r)\right]\Psi(\ux)=E\Psi(\ux)$ are equal to those of the corresponding Schr\"odinger equations in quantum Euclidean space. This spectrum can be found using separation of variables and the results in \cite{MR2316312}. As an example we consider the free particle $\Delta_q \psi(\ux)=-l^2\psi(\ux)$ as in \cite{MR1380862}. The $q$-difference equation
\begin{eqnarray*}
\left[q^{m+2k-1}(\partial_r^q)^2+[m+2k-1]_q\frac{1}{r}\partial_{r}^{q} \right] f(r)&=&-l^2f(r)
\end{eqnarray*}

with $f$ an even function is solved by
\begin{eqnarray*}
f(r)&=&\sum_{n=0}^\infty \frac{(-1)^n}{\Gamma_Q(n+1)\Gamma_Q(\frac{m}{2}+k+n)}\left(\frac{lr}{q+1}\right)^{2n}.
\end{eqnarray*}

This corresponds to the $q$-Bessel functions introduced by Jackson,
\begin{eqnarray*}
\frac{H_k(\ux)}{r^{k+\frac{m}{2}-1}}J^q_{\frac{m}{2}+k-1}(lr).
\end{eqnarray*}

The odd case leads to the $q$-Neumann functions, see \cite{MR1380862}.

\label{Schro}

\section{$q$-integration}
\subsection{One dimensional case}
The following lemma about one dimensional $q$-integration follows from straightforward calculations.
\begin{lemma}
For $q<1$, $k\in\mN$ and $a,b,c\in\mR$, the following relations holds,
\label{changevar}
\label{intnorm}
\label{intqinv}
\begin{eqnarray*}
(i)& &\int_{a^k}^{b^k}d_{q^k}t^k\,f(t)=[k]_q\int_a^bd_qt\,f(t)t^{k-1}\\
(ii)& &\int_{a}^bd_qt\, f(ct)=\frac{1}{c}\int_{ca}^{cb}d_qt\, f(t)\\
(iii)& &\int_{0}^{q^{-k}a}d_qt\,f(t)=\int_{0}^ad_qt\,f(t)+(1-q)a\sum_{i=1}^kf(aq^{-i})q^{-i}.
\end{eqnarray*}
\end{lemma}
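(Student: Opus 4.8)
The plan is to reduce each of the three identities to the defining series \eqref{1Dqint} followed by a reindexing of the resulting sum. Since $\int_a^b=\int_0^b-\int_0^a$ by definition, it suffices in every case to establish the identity for an interval of the form $[0,b]$ (resp.\ $[0,q^{-k}a]$) and then subtract the corresponding expression with $b$ replaced by $a$. The only arithmetic fact used throughout is $[k]_q(1-q)=1-q^k$, immediate from $[k]_q=(q^k-1)/(q-1)$.

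\textbf{Part (i).} Writing $s=t^k$ and applying \eqref{1Dqint} with deformation parameter $q^k$ gives $\int_0^{b^k}f(s^{1/k})\,d_{q^k}s=(1-q^k)b^k\sum_{j=0}^\infty f(bq^j)q^{kj}$. On the other hand, $[k]_q\int_0^b f(t)t^{k-1}\,d_qt=[k]_q(1-q)\,b\sum_{j=0}^\infty f(bq^j)(bq^j)^{k-1}q^j=(1-q^k)b^k\sum_{j=0}^\infty f(bq^j)q^{kj}$ after collecting powers of $q$ and using $[k]_q(1-q)=1-q^k$. The two sides coincide; subtracting the same identity with $b\to a$ yields (i).

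\textbf{Parts (ii) and (iii).} For (ii): $\int_0^b f(ct)\,d_qt=(1-q)b\sum_{j=0}^\infty f(cbq^j)q^j$, while $\frac1c\int_0^{cb}f(t)\,d_qt=\frac1c(1-q)(cb)\sum_{j=0}^\infty f(cbq^j)q^j$; these are equal, and subtracting the $a$-version gives (ii) (the endpoint of a $q$-integral may be any real number, so no sign condition on $c$ is needed). For (iii): expand $\int_0^{q^{-k}a}f(t)\,d_qt=(1-q)q^{-k}a\sum_{j=0}^\infty f(aq^{j-k})q^j$ and substitute $i=k-j$, so that $aq^{j-k}=aq^{-i}$ and $q^{-k+j}=q^{-i}$; the sum becomes $(1-q)a\sum_{i=-\infty}^{k}f(aq^{-i})q^{-i}$. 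Splitting the range into $i\le 0$ and $1\le i\le k$ and setting $\ell=-i$ in the first part gives $(1-q)a\sum_{\ell=0}^\infty f(aq^{\ell})q^{\ell}=\int_0^a f(t)\,d_qt$, while the second part is exactly the stated finite correction $(1-q)a\sum_{i=1}^k f(aq^{-i})q^{-i}$.

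The one point deserving a word of care is the legitimacy of the reindexing and of splitting the doubly infinite range in (iii), which rests on absolute convergence of $\sum_j f(aq^j)q^j$; this is implicit in the hypothesis that the $q$-integrals under consideration exist, and the extra term in (iii) is a finite sum, so no new convergence issue arises. Beyond this bookkeeping of exponents the three computations are entirely mechanical, which is why the lemma is presented as an immediate consequence of straightforward calculation; I do not expect any genuine obstacle.
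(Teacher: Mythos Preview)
Your proof is correct and is exactly what the paper has in mind: the lemma is stated without proof, with only the remark that it ``follows from straightforward calculations,'' and your reduction of each identity to the defining series \eqref{1Dqint} followed by a reindexing is precisely that calculation. There is nothing to add.
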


For the sequel we will need the $q$-integral of $e_Q(-t^2)$ with $Q=q^2$. First we need the following lemma.

\begin{lemma}
\label{nulpexp}
The zeroes of the exponentials defined in (\ref{exp1}) and (\ref{exp2}) are given by
\begin{eqnarray*}
E_{q}(\frac{q^{k+1}}{1-q})&=&0\quad \mbox{if } \,q>1\mbox{   and}\\
e_{q}(\frac{q^{-k}}{q-1})&=&0\quad \mbox{if }\, q<1
\end{eqnarray*}

for $k\in\mN$.
\end{lemma}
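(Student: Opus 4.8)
The plan is to derive closed infinite-product expansions for the two $q$-exponentials and then simply read off their zeros. It suffices to treat $e_q$ for $q<1$: the assertion about $E_q$ for $q>1$ follows from the defining relation \eqref{exp2}, which read with $q$ replaced by $q^{-1}$ states $E_q=e_{q^{-1}}$, and $q>1$ exactly when $q^{-1}<1$.

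The key step is a first-order $q$-functional equation. Feeding the formula $\partial_t^q e_q(t)=e_q(qt)$ from \eqref{aflexp} into the definition \eqref{qder} of $\partial_t^q$ gives $e_q(qt)-e_q(t)=(q-1)t\,e_q(qt)$, i.e.
\[
e_q(t)=\bigl(1+(1-q)t\bigr)e_q(qt),
\]
an identity of power series on the disc of convergence. Iterating it $n$ times yields
\[
e_q(t)=\prod_{j=0}^{n-1}\bigl(1+(1-q)q^{j}t\bigr)\cdot e_q(q^{n}t).
\]
For $q<1$ one has $q^{n}t\to0$, hence $e_q(q^{n}t)\to e_q(0)=1$ by continuity of the series at the origin, while the partial products converge absolutely and locally uniformly since $\sum_{j\ge0}(1-q)q^{j}=1$. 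Letting $n\to\infty$ gives the entire-function expansion $e_q(t)=\prod_{j\ge0}\bigl(1+(1-q)q^{j}t\bigr)$ for $q<1$.

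The zeros are now immediate: as $q\ne1$ the numbers $q^{j}$ are pairwise distinct, so the zero set of $e_q$ is exactly $\{-(1-q)^{-1}q^{-j}:j\ge0\}$, and $-(1-q)^{-1}q^{-j}=q^{-j}/(q-1)$ is the second formula. Applying the same expansion with parameter $q^{-1}<1$ to $E_q=e_{q^{-1}}$ gives $E_q(t)=\prod_{j\ge0}\bigl(1+(1-q^{-1})q^{-j}t\bigr)$, whose zeros are $-(1-q^{-1})^{-1}q^{j}=-q^{j+1}/(q-1)=q^{j+1}/(1-q)$, the first formula.

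There is no real analytic difficulty here; the only points needing care are the passage to the limit in the iteration — continuity of $e_q$ at $0$ and convergence of the infinite product, both valid precisely because $q<1$ — and the bookkeeping turning $-(1-q)^{-1}q^{-j}$ into the exact form quoted. It is worth noting that these zeros all lie on or outside the circle $|t|=(1-q)^{-1}$, so it is the product expansion rather than the defining series that exhibits them; one could alternatively invoke the $q$-binomial (Euler) identity, but the argument above is self-contained given \eqref{qder} and \eqref{aflexp}.
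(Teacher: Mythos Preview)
Your proof is correct. Both your argument and the paper's hinge on the same first-order functional equation obtained from \eqref{aflexp} and \eqref{qder}, iterate it, and use continuity at $0$; in that sense the approaches coincide. The packaging differs: you push the iteration to the limit and obtain the full Euler product $e_q(t)=\prod_{j\ge 0}\bigl(1+(1-q)q^{j}t\bigr)$ (for $q<1$), from which both existence and exhaustiveness of the zero set are read off at once, whereas the paper argues the two halves separately---first producing the zeros by forward iteration of $E_q(qu)=(1+(q-1)u)E_q(u)$, then excluding any further zero by iterating backwards to force $E_q(0)=0$. Your route costs the (easy) verification that the infinite product converges, but it yields the product expansion as a bonus; the paper's route is marginally more elementary. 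One small remark: for $q<1$ the series defining $e_q=E_{q^{-1}}$ is entire, so your phrase ``on the disc of convergence'' is harmless but unnecessary, and the final comment about the zeros lying outside $|t|=(1-q)^{-1}$ is true but not needed for the argument.
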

\begin{proof}
We start from the $q$-difference property (\ref{aflexp}) of the $q$-exponential
\begin{eqnarray*}
E_q(qu)&=&\left(1+(q-1)u\right) E_q(u).
\end{eqnarray*}

This implies that $E_q(qu)=0$ if and only if either $E_q(u)=0$ or $u=\frac{1}{1-q}$. So we obtain $E_q(\frac{q^{k+1}}{1-q})=0$ for all $k\in\mN$. We still have to prove that these are the only possible zeroes. If we assume $E_q(t)=0$ with $t\not= \frac{q^{k+1}}{1-q}$ then this would imply, since $\lim_{j\to\infty}q^{-j}t=0$, that $E_q(0)=0$. This is not the case as formula \eqref{exp1} implies $E_q(0)=1$. The second claim can be found immediately by making the substitution $q\to q^{-1}$.
\end{proof}

Using lemma \ref{intqinv}, lemma \ref{nulpexp} and integral representation (\ref{intGamma}) we can calculate the $q$-analogue of\\$\int_{\mR}dt\,t^{\nu-1}\exp(-t^2)$. This result can also be found in \cite{MR0708496}.
\begin{lemma}
For $q<1$ and with $\lambda_Q=\sqrt{\frac{1}{1-Q}}$, the following holds
\label{gammagaus}
\begin{eqnarray*}
\int_{-\lambda_Q}^{\lambda_Q} d_qt\,t^{\nu-1}\,e_Q(-t^2)&=&\frac{2}{q+1}Q^{\frac{\nu}{2}}\Gamma_Q(\frac{\nu}{2}).
\end{eqnarray*}
\end{lemma}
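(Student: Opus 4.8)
The plan is to reduce everything to the integral representation \eqref{intGamma} of the $q$-Gamma function by a change of variables. First I would rewrite $t^{\nu-1}e_Q(-t^2)$ as a function of $s=t^2$: since $e_Q(-t^2)$ depends only on $t^2$ and the integrand $t^{\nu-1}$ is multiplied against $d_qt$, I would like to invoke Lemma \ref{changevar}(i) with $k=2$, which gives $\int_{a^2}^{b^2}d_{q^2}s\,g(s)=[2]_q\int_a^bd_qt\,g(t^2)t$. Taking $g(s)=s^{(\nu-2)/2}e_Q(-s)$ turns $g(t^2)t = t^{\nu-2}e_Q(-t^2)\cdot t = t^{\nu-1}e_Q(-t^2)$, exactly the integrand we want; note $[2]_q=q+1$. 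Because the integrand is even in $t$ (as $\nu$ will effectively only enter through $t^{\nu-1}$ against the symmetric measure — more precisely the substitution $t\to -t$ via Lemma \ref{changevar}(ii) with $c=-1$ shows the two halves agree up to the parity of $t^{\nu-1}$), I would first write $\int_{-\lambda_Q}^{\lambda_Q}=2\int_0^{\lambda_Q}$, valid here since the relevant integrals converge and the odd part integrates to zero; then Lemma \ref{changevar}(i) converts $\int_0^{\lambda_Q}d_qt\,t^{\nu-1}e_Q(-t^2)$ into $\frac{1}{q+1}\int_0^{\lambda_Q^2}d_{q^2}s\,s^{\nu/2-1}e_Q(-s)$, and since $\lambda_Q^2=\frac{1}{1-Q}$ and $Q=q^2$, the upper limit is exactly $\frac{1}{1-Q}$.

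Next I would match this against \eqref{intGamma}. That formula reads $\Gamma_Q(z)=\int_0^{1/(1-Q)}d_Qt\,t^{z-1}e_Q(-Qt)$, which has $e_Q(-Qt)$ rather than $e_Q(-t)$. To fix the discrepancy I would rescale by $c=Q$ using Lemma \ref{changevar}(ii) (with $q$ there replaced by $Q$): $\int_0^{1/(1-Q)}d_Qs\,s^{\nu/2-1}e_Q(-s) = Q\int_0^{Q^{-1}/(1-Q)}d_Qu\,(Qu)^{\nu/2-1}e_Q(-Qu) = Q^{\nu/2}\int_0^{Q^{-1}/(1-Q)}d_Qu\,u^{\nu/2-1}e_Q(-Qu)$. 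The upper limit has moved from $\frac{1}{1-Q}$ to $\frac{Q^{-1}}{1-Q}=\frac{1}{Q(1-Q)}$, which differs from the limit in \eqref{intGamma} by a factor $Q^{-1}=Q^{-k}$ with $k=1$; here Lemma \ref{changevar}(iii) would let me compare the two, and the correction term there involves $e_Q(-Qu)$ evaluated at $u=\lambda q^{-1}$-type points where, by Lemma \ref{nulpexp} (the zeros of $e_Q$ for $Q<1$ sit exactly at $\frac{Q^{-k}}{Q-1}$, i.e. at the shifted endpoints), the summand vanishes. Hence the two integrals over $[0,Q^{-1}/(1-Q)]$ and $[0,1/(1-Q)]$ of $u^{\nu/2-1}e_Q(-Qu)$ agree, and the latter equals $\Gamma_Q(\nu/2)$.

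Assembling: $\int_{-\lambda_Q}^{\lambda_Q}d_qt\,t^{\nu-1}e_Q(-t^2) = 2\cdot\frac{1}{q+1}\cdot Q^{\nu/2}\cdot\Gamma_Q(\nu/2) = \frac{2}{q+1}Q^{\nu/2}\Gamma_Q(\nu/2)$, as claimed. The main obstacle I anticipate is bookkeeping the endpoints carefully: the change of variable in Lemma \ref{changevar}(ii) and the rescaling step both shift the upper limit of the $q^2$-integration away from $\frac{1}{1-Q}$, and one must invoke Lemma \ref{nulpexp} at precisely the right points to see that the extra finite sums contributed by Lemma \ref{changevar}(iii) are identically zero — this is the step where the specific location of the zeros of $e_Q$ is essential and where a sign or exponent slip would be easy to make. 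A secondary, more routine, point is justifying the splitting $\int_{-\lambda_Q}^{\lambda_Q}=2\int_0^{\lambda_Q}$ and the absolute convergence of all the series involved, which follows from the convergence statement for $e_q$ (equivalently $E_{q^{-1}}$) recorded after \eqref{aflexp} together with $|t|\le\lambda_Q$.
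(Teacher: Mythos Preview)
Your proposal is correct and follows precisely the approach the paper indicates: the paper does not write out a detailed proof but merely states that the result follows from Lemma~\ref{intqinv} (i.e.\ the change-of-variables items (i)--(iii)), Lemma~\ref{nulpexp}, and the integral representation~\eqref{intGamma}, which are exactly the three ingredients you assemble. Your bookkeeping of the endpoint shift via Lemma~\ref{changevar}(iii) and the vanishing of the extra term at $u=Q^{-1}/(1-Q)$ through $e_Q\bigl(\tfrac{1}{Q-1}\bigr)=0$ from Lemma~\ref{nulpexp} is the key step, and you have it right.
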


\subsection{$q$-integration in $\mR^m$}
One dimensional $q$-integration is defined in equation (\ref{1Dqint}). The aim of this section is to generalize this concept to higher dimensions, corresponding to the $q$-deformation of the vector derivative. This means we want analogues of equation (\ref{partiele}). In classical Clifford analysis, these are given by Cauchy-type formulae in higher dimensions, see \cite{MR697564, MR1169463, MR1249888, MR1012510}. In this section we will always assume $q<1$. Before we define $q$-integration in $\mR^m$ we repeat a well-known fact about the $\Gamma$-operator. For $f$ and $g$ two Clifford valued differentiable functions
\begin{eqnarray*}
\int_{\mS^{m-1}}d\xi\,\,\,\overline{(\Gamma f)}g&=&\int_{\mS^{m-1}}d\xi\,\,\,\overline{f}(\Gamma g).
\end{eqnarray*}

This equation together with the series expansion of $q^\Gamma$ yields

\begin{lemma}
\label{gammaint}
For $f$ and $g$ two Clifford valued analytic functions,
\begin{eqnarray*}
\int_{\mS^{m-1}}d\xi\,\,\,\overline{(q^\Gamma f)}\,g&=&\int_{\mS^{m-1}}d\xi\,\,\,\overline{f}\,(q^\Gamma g).
\end{eqnarray*}
\end{lemma}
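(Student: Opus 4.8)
The plan is to reduce the claim to the already-quoted self-adjointness of the Gamma operator on the sphere, combined with the fact that $q^\Gamma$ is given by a convergent power series in $\Gamma$. First I would recall that for analytic Clifford-valued functions $f$ and $g$ on $\mS^{m-1}$ the operator $\Gamma=\Gamma_{\ux}$ is symmetric with respect to the bilinear pairing $\langle f,g\rangle=\int_{\mS^{m-1}}d\xi\,\overline{f}\,g$, i.e.
\begin{eqnarray*}
\int_{\mS^{m-1}}d\xi\,\,\,\overline{(\Gamma f)}\,g&=&\int_{\mS^{m-1}}d\xi\,\,\,\overline{f}\,(\Gamma g),
\end{eqnarray*}
which is the equation stated immediately before the lemma. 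Iterating this identity $j$ times gives $\int_{\mS^{m-1}}d\xi\,\overline{(\Gamma^j f)}\,g=\int_{\mS^{m-1}}d\xi\,\overline{f}\,(\Gamma^j g)$ for every $j\in\mN$; one should note here that $\overline{\Gamma h}$ really does equal $\Gamma \overline{h}$ up to the sign bookkeeping of the main anti-involution, but since $\Gamma$ is a real bivector operator this works out cleanly, or one can simply carry the anti-involution through each integration by parts directly.

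**Passing to the exponential.** Next I would use the series expansion of $q^\Gamma$ derived in the text via the Cauchy–Kowalewskaya argument, namely $q^\Gamma f=\sum_{j=0}^\infty \frac{(\ln q)^j}{j!}\Gamma^j f$, valid on analytic functions. Substituting this into the left-hand side and interchanging the (absolutely convergent) sum with the integral over the compact sphere gives
\begin{eqnarray*}
\int_{\mS^{m-1}}d\xi\,\,\,\overline{(q^\Gamma f)}\,g&=&\sum_{j=0}^\infty \frac{(\ln q)^j}{j!}\int_{\mS^{m-1}}d\xi\,\,\,\overline{(\Gamma^j f)}\,g\\
&=&\sum_{j=0}^\infty \frac{(\ln q)^j}{j!}\int_{\mS^{m-1}}d\xi\,\,\,\overline{f}\,(\Gamma^j g)\\
&=&\int_{\mS^{m-1}}d\xi\,\,\,\overline{f}\,(q^\Gamma g),
\end{eqnarray*}
where the middle equality is the iterated symmetry relation and the last step again interchanges sum and integral. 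This is exactly the assertion of the lemma.

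**Main obstacle.** The only genuine subtlety is justifying the interchange of the infinite series with the spherical integration, i.e. a dominated-convergence argument: on the compact analytic manifold $\mS^{m-1}$ the partial sums $\sum_{j\le N}\frac{(\ln q)^j}{j!}\Gamma^j f$ converge uniformly to $q^\Gamma f = g(\ux,\ln q)$ (the value at $u=\ln q$ of the Cauchy–Kowalewskaya solution, which is analytic hence locally bounded on the sphere together with the tail estimates on $\Gamma^j f$), so the integral of the limit equals the limit of the integrals. Since the paper has already invoked precisely this series representation of $q^\Gamma$ on analytic functions, I would treat the convergence as inherited from that discussion and keep the proof to the two or three displayed lines above, exactly as the phrase "together with the series expansion of $q^\Gamma$" in the text suggests.
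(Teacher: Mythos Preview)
Your argument is correct and is exactly the route the paper takes: the text simply says the self-adjointness identity for $\Gamma$ ``together with the series expansion of $q^\Gamma$ yields'' the lemma, and you have spelled out precisely those two steps. The only extra content you add is the (appropriate) remark on uniform convergence to justify the sum--integral interchange, which the paper leaves implicit.
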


Now we define our $q$-integration on $\mR^m$. There have been made other approaches to generalize Jackson's $q$-integration to higher dimensions (see \cite{MR1239953, MR1408120, MR1303081, MR2035025}), but those are integrations over quantum variables, while we use a $q$-integration over commuting variables. One approach is based on Gaussian integration and the necessity for a Stokes theorem (\cite{MR1239953, MR1303081}).  Our approach is more closely related to integration over the quantum Euclidean sphere (\cite{MR1408120}), but as we will see, also satisfies Stokes theorem.
\begin{definition}
For every function $f$ on the ball with radius $R$, $\mB^m(R)$, for which the expression is finite, the $m$-dimensional $q$-integral is given by
\begin{eqnarray*}
\int_{\mB^m(R)}fd_qV(\ux)&=&\int_{\mS^{m-1}}d\xi\int_{0}^Rd_q r\, r^{m-1}f,
\end{eqnarray*}

with $d_qr$ the measure in (\ref{1Dqint}).
\end{definition}
We could also use the infinite Jackson $q$-integration (see \cite{MR0708496, MR1052153, JACKSON}) to construct $q$-integration on entire $\mR^m$ but we will not need it here.

\begin{remark}
The function $f$ only has to be defined on the spheres $\partial\mB^m(q^kR)$, $k\in\mN$ for this integral to be well defined. By considering all integrations on the balls $\mB^m(q^l)$ for $l\in\mZ$ we obtain a mapping of functions defined on $\{\partial\mB^m(q^k)|k\in\mZ\}\subset\mR^m$ into functions defined on the set of points $\{q^k|k\in\mZ\}$. 
\end{remark}

Applying lemma \ref{intnorm}$(ii)$ yields
\begin{eqnarray}
\label{intnorm2}
\int_{\mB^m(R)}d_qV(\ux)f(c\ux)&=&\frac{1}{c^m}\int_{\mB^m(cR)}d_qV(\ux)f(\ux).
\end{eqnarray}

Now we are ready to state and prove the Cauchy formula for the $q$-Dirac operator.
\begin{theorem}
\label{qCauchy}($q$-Cauchy formula)\\
For $f$ and $g$ two Clifford valued analytical functions on $\mB^m(R)$, the following relation holds,
\begin{eqnarray*}
\int_{\mB^m(R)}d_qV(\ux)\,\left[\overline{\left(q^{\Gamma}\upx^qf\right)}\,g(q\ux)-\overline{f}\,(\upx^qg)\right]&=&R^{m-2}\int_{\partial\mB^{m}(R)}d\xi\,\,\,\overline{f(\ux)}\,\ux \,g(\ux).
\end{eqnarray*}
\end{theorem}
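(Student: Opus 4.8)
The plan is to reduce the $m$-dimensional identity to the one-dimensional integration-by-parts formula \eqref{partiele} by separating radial and angular parts, exactly as in the definition of $\int_{\mB^m(R)}d_qV$. First I would decompose $f$ and $g$ via the Fischer decomposition (lemma \ref{fischerdecomp}) together with \eqref{decompHk} into finite (or convergent) sums of terms of the form $f_k(r)M_k^{(p)}$ and $r f_k(r)M_k^{(p)}$, using the fixed orthonormal basis of spherical monogenics \eqref{orthbasis}; by linearity it suffices to treat a single monogenic-times-radial block. On such blocks the $q$-Dirac operator acts explicitly: from definition \ref{defpx} and Theorem \ref{pxFischer} one has $\upx^q\big(f(r)M_k\big)$ expressible through $\partial_r^q f$, $q^{r\partial_r}f$ and the Clifford factors $\uxi$, $\ux M_k$, and $q^\Gamma$ acts diagonally on $M_k$ (resp.\ on $\ux M_k$) as multiplication by $q^{k}$ (resp.\ $q^{-k-m+1}$) by \eqref{1Gamma1}. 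So the left-hand side becomes a purely one-dimensional $q$-integral in $r$ against $\int_{\mS^{m-1}}d\xi$, and the angular integrals collapse by orthogonality \eqref{orthbasis}, \eqref{HH} (and lemma \ref{gammaint} to move $q^\Gamma$ across the inner product).

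Next I would match the radial integrand with a total $q$-derivative. The key one-dimensional identity needed is a Leibniz-type manipulation: for scalar radial functions $a(r),b(r)$,
\begin{eqnarray*}
\int_0^R d_qr\, r^{m-1}\Big[ q^{m-1}\big(\partial_r^q a\big)(r)\, b(qr) + a(r)\,\tfrac{1}{r}\big(\text{angular-type terms}\big)\Big]
\end{eqnarray*}
should reduce, after using lemma \ref{pxr} and the factor $r^{m-1}$, to $\int_0^R d_qr\,\partial_r^q\big(r^{m-2}\,(\text{bilinear in }a,b)\big)$, and then \eqref{partiele} evaluates it as $R^{m-2}$ times the boundary value, which is precisely $R^{m-2}\int_{\partial\mB^m(R)}d\xi\,\overline{f}\,\ux\,g$ after restoring the Clifford factors (note $\overline{\uxi}\,\uxi = 1$ supplies the needed sign/normalization, and $\ux=r\uxi$ on the sphere). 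The shift $g(q\ux)$ and the $q^\Gamma$ on the first slot are exactly what make the discrete Leibniz rule \eqref{Leibniz} close up without error terms — this is the $q$-analogue of the fact that in the classical Cauchy formula one integrates by parts once.

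The main obstacle I expect is bookkeeping of the $q$-powers: one must check that the exponents $q^{m-1}$, $q^{\mE}$, $q^{\Gamma}$ appearing in Theorem \ref{decompLapl}/definition \ref{defpx} combine correctly with the weight $r^{m-1}$ coming from $d_qV$ and with the dilation $r\mapsto qr$ hidden in $\partial_r^q$, so that the integrand is genuinely of the form $\partial_r^q(\text{something}\cdot r^{m-2})$ with no leftover terms. A secondary technical point is justifying termwise manipulation of the (possibly infinite) Fischer expansions and convergence of the Jackson sums on $\mB^m(R)$ — here the hypothesis that $f,g$ are analytic on $\mB^m(R)$ and the finiteness assumption in the definition of $\int_{\mB^m(R)}$ are what one invokes. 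Once the single-block computation is done, summing over $k$ and the basis index $p$ and reassembling via lemma \ref{fischerdecomp} and \eqref{decompHk} gives the stated formula.
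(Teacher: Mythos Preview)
Your strategy is workable in principle but differs from the paper's, and one of your simplifying claims is not quite right.

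The paper does \emph{not} decompose $f$ and $g$ via Fischer. It writes the boundary term as the total radial $q$-derivative $\partial_r^q\bigl[r^m\,\overline{f}\,\tfrac{1}{\ux}\,g\bigr]$, expands this via the two Leibniz rules \eqref{Leibniz}--\eqref{Leibniz2}, and then identifies the pieces with $\overline{f}\,(\upx^q g)$ and $\overline{(q^\Gamma\upx^q f)}\,g(q\ux)$ using only the operator formulas $\upx^q\ux=[\mE-\Gamma+m]_q$ (lemma \ref{pxx}), definition \ref{defpx}, and lemma \ref{gammaint} to transfer $q^{\Gamma}$ across $\int_{\mS^{m-1}}$. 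Everything is done for general $f,g$ at once; no case analysis on monogenic type, no infinite series to justify.

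Your route through Fischer decomposition can be made to work, but the step ``the angular integrals collapse by orthogonality \eqref{orthbasis}, \eqref{HH}'' is too optimistic. The integrand on both sides is the full Clifford value, not $[\cdot]_0$, so \eqref{orthbasis} does not apply. Equation \eqref{HH} does kill $\int_{\mS^{m-1}}\overline{M_k}\,M_l$ for $k\neq l$ componentwise, but terms of the type $\int_{\mS^{m-1}}\overline{M_k}\,\uxi\, M_l$ survive whenever $k=l+1$ (since $\ux M_l$ is harmonic of degree $l+1$), and these are precisely the couplings produced by $\upx^q$ and by the boundary factor $\ux$. So you cannot reduce to diagonal $(k,p)=(l,r)$ blocks; you must carry the Clifford constants $\int_{\mS^{m-1}}\overline{M_k^{(p)}}\,\uxi\,M_{k-1}^{(r)}$ through the radial computation and verify the identity for each cross pair. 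That is doable, but it is exactly the bookkeeping the paper's direct argument avoids. If you want to salvage your approach, drop the orthogonality shortcut and instead prove the identity for arbitrary block pairs $\bigl(a(r)\ux^{\epsilon}M_k,\,b(r)\ux^{\delta}M_l\bigr)$ with $\epsilon,\delta\in\{0,1\}$, treating the angular integral as a fixed Clifford scalar; alternatively, imitate the paper and look for the single total $q$-derivative in $r$ that produces the boundary term before decomposing.
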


\begin{proof}
First we use equations (\ref{partiele}), (\ref{Leibniz}) and (\ref{Leibniz2}) to calculate
\begin{eqnarray*}
\int_0^Rd_qr\,\partial_r^q\left[r^m\overline{f}\frac{1}{\ux}g\right]&=&-R^{m-1}\overline{f(R\uxi)}\,\uxi \,g(R\uxi)\\
\end{eqnarray*}
\[
=\int_0^Rd_qr\,[m]_qr^{m-1}\overline{f}\frac{1}{\ux}g+\int_0^Rd_qr\,q^mr^{m}\partial_r^q(\overline{f})\frac{1}{q\ux}g(q\ux)+\int_0^Rd_qr\,q^mr^{m}\overline{f}\partial_r^q(\frac{1}{\ux}g)\]\[
=\int_0^Rd_qr\,r^{m-1}\overline{f}\left( [m]_q+q^m[\mE]_q\right)\frac{1}{\ux}g+q^{m-1}\int_0^Rd_qr\,r^{m-1}([\mE]_q\overline{f})\frac{1}{\ux}g(q\ux).
\]

The above, lemma \ref{pxx} and lemma \ref{gammaint} lead to
\begin{eqnarray*}
\int_{\mB^m(R)}d_qV(\ux)\,\overline{f}\,(\upx^qg)&=&\int_{\mS^{m-1}}d\xi\int_{0}^Rd_qr\,r^{m-1}\overline{f}\left([m]_q+q^m[\mE]_q+[-\Gamma]_qq^{m+\mE}\right)\frac{1}{\ux}g\\
&=&-R^{m-1}\int_{\mS^{m-1}}d\xi\,\,\,\overline{f(R\uxi)}\,\uxi \,g(R\uxi)-q^{m-1}\int_{\mB^m(R)}d_qV(\ux)([\mE]_q\overline{f})\frac{1}{\ux}g(q\ux)\\
&+&q^{m}\int_{\mS^{m-1}}d\xi\int_{0}^Rd_qr\,r^{m-1}\overline{([-\Gamma]_qf)}\frac{1}{q\ux}g(q\ux)\\
&=&-R^{m-2}\int_{\partial\mB^{m}(R)}d\xi\,\,\,\overline{f(\ux)}\,\ux \,g(\ux)+q^{m-1}\int_{\mB^m(R)}d_qV(\ux)\overline{\frac{1}{\ux}\left(q^{-\Gamma}([\mE+\Gamma]_q)f\right)}g(q\ux)\\
&=&-R^{m-2}\int_{\partial\mB^{m}(R)}d\xi\,\,\,\overline{f(\ux)}\,\ux \,g(\ux)+\int_{\mB^m(R)}d_qV(\ux)\,\overline{\left(q^{\Gamma}(\upx^q)f\right)}\,g(q\ux).
\end{eqnarray*}
This concludes the proof.
\end{proof}
As a special case of this theorem we obtain the generalization of formula \eqref{partiele} to the $m$-dimensional case.
\begin{corollary}
\label{corint}
For $g$ a Clifford valued analytical function on $\mB^m(R)$, the following Cauchy-formula holds,
\begin{eqnarray*}
\int_{\mB^m(R)}d_qV(\ux)\,(\upx^qg)&=&-R^{m-2}\int_{\partial\mB^{m}(R)}d\xi\,\,\,\,\ux \,g(\ux).
\end{eqnarray*}
\end{corollary}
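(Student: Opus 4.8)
The plan is to obtain Corollary \ref{corint} as the special case of Theorem \ref{qCauchy} in which the left-hand factor $f$ is taken to be the constant function $1$. First I would observe that with $f \equiv 1$ we have $\overline{f} = 1$, and also $q^{\Gamma} f = f = 1$ since $\Gamma$ annihilates constants (the constant function is a spherical harmonic of degree $0$, and more directly $\Gamma_{\ux} 1 = 0$, so every term in the series expansion of $q^{\Gamma}$ beyond the zeroth vanishes). Consequently $q^{\Gamma} \upx^q f = \upx^q 1 = 0$ as well, because $1 \in \cM_0$ and axiom $(A4)$ (equivalently Theorem \ref{pxFischer} with $k=0$, $l=0$) gives $\upx^q 1 = 0$.

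With these substitutions the integrand $\overline{(q^{\Gamma}\upx^q f)}\,g(q\ux) - \overline{f}\,(\upx^q g)$ collapses to simply $-(\upx^q g)$, and the right-hand side $R^{m-2}\int_{\partial\mB^m(R)} d\xi\, \overline{f(\ux)}\,\ux\,g(\ux)$ collapses to $R^{m-2}\int_{\partial\mB^m(R)} d\xi\, \ux\,g(\ux)$. Rearranging the resulting identity
\begin{eqnarray*}
-\int_{\mB^m(R)} d_qV(\ux)\,(\upx^q g) &=& R^{m-2}\int_{\partial\mB^m(R)} d\xi\,\,\,\ux\,g(\ux)
\end{eqnarray*}
by multiplying through by $-1$ yields exactly the claimed formula. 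The only hypothesis to check is that $f \equiv 1$ is a Clifford valued analytic function on $\mB^m(R)$ for which all the integrals in Theorem \ref{qCauchy} are finite; this is immediate since constants are analytic and $g$ is assumed analytic on $\mB^m(R)$.

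There is essentially no obstacle here: the statement is a direct specialization, and the only point requiring a word of justification is that $\upx^q$ and $q^{\Gamma}$ both kill the constant function, which follows from axiom $(A4)$ and from $\Gamma 1 = 0$ respectively. I would therefore write the proof in two or three lines, simply recording the substitution $f \equiv 1$ into Theorem \ref{qCauchy} and noting these two vanishing facts.
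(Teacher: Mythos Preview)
Your proposal is correct and matches the paper's approach exactly: the paper simply states that Corollary \ref{corint} is a special case of Theorem \ref{qCauchy}, and that special case is precisely $f\equiv 1$, with the vanishing of $\upx^q 1$ and $q^{\Gamma}\upx^q 1$ following from $(A4)$ as you note.
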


When we take $g$ scalar, the formula in this corollary falls apart into formulas for the $q$-partial derivatives $D_i$ in formula (\ref{partieleafg}). In particular, for a function which vanishes on $\partial \mB^m(R)$, corollary \ref{corint} implies
\begin{eqnarray*}
\int_{\mB^m(R)}d_qV(\ux)\,D_ig&=&0.
\end{eqnarray*}

This shows the link with the Gaussian integration method in \cite{MR1239953} and \cite{MR1303081}. The $q$-partial derivatives $D_i$ take the place of the derivatives with respect to the quantum variables.

The term $q^\Gamma$ which appears in theorem \ref{qCauchy} is dropped when we consider the Laplace operator.
\begin{corollary}
For $f$ and $g$ two Clifford valued analytical functions on $\mB^m(R)$ with $g=0=\upx^qg$ on $\partial \mB^m(R)$,
\begin{eqnarray*}
\int_{\mB^m(R)}d_qV(\ux)\,f(q\ux)\,(\Delta_qg)&=&\int_{\mB^m(R)}d_qV(\ux)\,(\Delta_qf)\,g(q\ux).
\end{eqnarray*}
\end{corollary}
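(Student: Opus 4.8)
The plan is to derive this corollary as a double application of the $q$-Cauchy formula in Theorem~\ref{qCauchy}, together with the fact that $\Delta_q=-(\upx^q)^2$ (Definition~\ref{defLapl}). The key observation is that the spurious operator $q^\Gamma$ appearing in Theorem~\ref{qCauchy} will cancel when we compose the formula with itself, because $q^\Gamma$ commutes with $\upx^q$ up to a shift that is already built into the structure of the $q$-Dirac operator.

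First I would apply Theorem~\ref{qCauchy} with $f$ replaced by $\upx^q f$ and with $g$ as given. Since $g=0$ on $\partial\mB^m(R)$, the boundary term vanishes, giving
\begin{eqnarray*}
\int_{\mB^m(R)}d_qV(\ux)\,\overline{(\upx^q f)}\,(\upx^q g)&=&\int_{\mB^m(R)}d_qV(\ux)\,\overline{\left(q^\Gamma\upx^q\upx^q f\right)}\,g(q\ux).
\end{eqnarray*}
Next I would apply Theorem~\ref{qCauchy} a second time, now with the roles arranged so that the derivative falls on $g$; concretely, apply it with $f$ as given and $g$ replaced by $\upx^q g$. Since $\upx^q g=0$ on $\partial\mB^m(R)$, the boundary term again vanishes, yielding
\begin{eqnarray*}
\int_{\mB^m(R)}d_qV(\ux)\,\overline{\left(q^\Gamma\upx^q f\right)}\,(\upx^q g)(q\ux)&=&\int_{\mB^m(R)}d_qV(\ux)\,\overline{f}\,(\upx^q\upx^q g).
\end{eqnarray*}
Using $\Delta_q=-(\upx^q)^2$, the right-hand sides of the two identities are $-\int \overline{(q^\Gamma\Delta_q f)}\,g(q\ux)$ and $-\int\overline{f}\,(\Delta_q g)$ respectively. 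The remaining task is to reconcile the two left-hand sides: I would show that $\int d_qV(\ux)\,\overline{(\upx^q f)}\,(\upx^q g)$ can be rewritten, via the change-of-variable identity \eqref{intnorm2} and Lemma~\ref{gammaint} (the self-adjointness of $q^\Gamma$ on the sphere), so that the $q^\Gamma$ and the dilation $\ux\mapsto q\ux$ are distributed symmetrically onto the two factors; matching this against both displayed identities then forces
\begin{eqnarray*}
\int_{\mB^m(R)}d_qV(\ux)\,\overline{(q^\Gamma\Delta_q f)}\,g(q\ux)&=&\int_{\mB^m(R)}d_qV(\ux)\,\overline{f}\,(\Delta_q g),
\end{eqnarray*}
and the stated form follows after relabelling $f\leftrightarrow g$ and using that the integrand is scalar so the main anti-involution may be dropped (or absorbed).

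The main obstacle I anticipate is the careful bookkeeping of the $q^\Gamma$ factor and the argument-dilation $g(q\ux)$ versus $g(\ux)$: one must verify that composing the two instances of Theorem~\ref{qCauchy} really does produce matching factors on both sides rather than an extra power of $q^\Gamma$ or an extra dilation. This is exactly where Lemma~\ref{gammaint} and the scaling formula \eqref{intnorm2} do the work, since $q^\Gamma$ acts only in the angular variables while the dilation acts only in the radial variable, so the two can be shuffled independently; since $\Gamma$ commutes with radial functions (used already in Lemma~\ref{pxx} and throughout Section~3), the algebra closes up cleanly. Once that matching is confirmed, the corollary is immediate; the symmetry in $f$ and $g$ in the final statement is then just the assertion that both sides equal the common symmetric quantity obtained above.
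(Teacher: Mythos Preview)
Your two applications of Theorem~\ref{qCauchy} are each individually correct, but the ``matching'' step is where the argument breaks. The two left-hand sides you obtain,
\[
\int_{\mB^m(R)}d_qV(\ux)\,\overline{(\upx^q f)}\,(\upx^q g)
\qquad\text{and}\qquad
\int_{\mB^m(R)}d_qV(\ux)\,\overline{(q^\Gamma \upx^q f)}\,(\upx^q g)(q\ux),
\]
are genuinely different: using Lemma~\ref{gammaint} and \eqref{intnorm2} you can shuffle $q^\Gamma$ and the dilation between the factors, but you cannot make them disappear. Consequently, even if you \emph{could} match them, the identity you would obtain still carries a stray $q^\Gamma$ on $\Delta_q f$, whereas the corollary has none. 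Your final remarks (relabel $f\leftrightarrow g$, drop the anti-involution because the integrand is scalar) do not help either: the hypotheses are not symmetric in $f$ and $g$, and the integrand is Clifford-valued in general.

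The paper's proof avoids this by applying Theorem~\ref{qCauchy} twice \emph{in the same direction}: both $\upx^q$'s are moved from $g$ onto $h=\overline{f}$. One picks up $q^\Gamma\upx^q q^\Gamma\upx^q h$, and the crucial algebraic identity
\[
\upx^q\,q^\Gamma \;=\; q^{\,m-1-\Gamma}\,\upx^q
\qquad\Longrightarrow\qquad
q^\Gamma\upx^q q^\Gamma\upx^q \;=\; q^{m-1}(\upx^q)^2,
\]
(which follows from \eqref{1Gamma1}) collapses the two $q^\Gamma$'s to a scalar $q^{m-1}$. That scalar is exactly cancelled by the factor $q^{1-m}$ produced by the change of variables \eqref{intnorm2} together with Lemma~\ref{intnorm}(iii) (using that $\upx^q g$ vanishes on the outer shell). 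This is the mechanism you allude to when you say ``$q^\Gamma$ commutes with $\upx^q$ up to a shift'', but it only becomes effective when the two $q^\Gamma$'s are composed through $\upx^q$; your arrangement produces just one $q^\Gamma$, which is why it survives.
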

\begin{proof}
We start by putting $f=\overline{h}$ and using theorem \ref{qCauchy},
\begin{eqnarray*}
\int_{\mB^m(R)}d_qV(\ux)\,\overline{h(q\ux)}\,(\upx^q\upx^qg)&=&-\int_{\mB^m(R)}d_qV(\ux)\,\overline{q^\Gamma\upx^qq^\mE f(\ux)}\,(q^\mE\upx^qg),
\end{eqnarray*}

where the surface term vanished because $(\upx^qg)=0$ on the boundary. Using formula (\ref{intnorm2}), lemma \ref{intnorm}$(iii)$ with $(\upx^qg)(R\uxi)=0$ and theorem \ref{qCauchy} then leads to
\begin{eqnarray*}
\int_{\mB^m(R)}d_qV(\ux)\,\overline{h(q\ux)}\,(\upx^q\upx^qg)&=&-q\int_{\mB^m(R)}d_qV(\ux)\,q^\mE\left[\overline{q^\Gamma\upx^q h(\ux)}\,(\upx^qg)\right]\\
&=&-\frac{1}{q^{m-1}}\int_{\mB^m(qR)}d_qV(\ux)\,\overline{q^\Gamma\upx^q h(\ux)}\,(\upx^qg)\\
&=&-\frac{1}{q^{m-1}}\int_{\mB^m(R)}d_qV(\ux)\,\overline{q^\Gamma\upx^q h(\ux)}\,(\upx^qg)\\
&=&\frac{1}{q^{m-1}}\int_{\mB^m(R)}d_qV(\ux)\,\overline{q^\Gamma\upx^qq^\Gamma\upx^q h(\ux)}\,g(q\ux)\\
&=&\frac{1}{q^{m-1}}\int_{\mB^m(R)}d_qV(\ux)\,\overline{q^\Gamma q^{m-1-\Gamma}\upx^q\upx^q h(\ux)}\,g(q\ux).
\end{eqnarray*}

The surface term in the $q$-Cauchy theorem was again zero because $g=0$. The $q$-Laplace operator is scalar, so $\overline{\Delta_q}=\Delta_q$ and the proposed formula is obtained.
\end{proof}

\section{Hermite polynomials}
\subsection{One dimensional case}
\label{sectionHerm}
A lot of approaches have been used to study $q$-deformed versions of the Hermite polynomials, see e.g. \cite{MR0708496, MR1448659, MR0599807, MR2392899}. Because of the different definitions and normalizations in the literature we give a short overview of the $q$-Hermite polynomials. We choose a normalization such that $\lim_{q\to 1}H_k^q(t)=H_k(t)$, with $H_k$ the classical Hermite polynomials. The starting point is the $q$-Hermite's equation of Exton, see \cite{MR0599807}. This leads to a recursion relation, which is mostly used to define $q$-Hermite polynomials. We will also calculate the creation and annihilation operators and derive an orthogonality property. Most of these results can be found in \cite{MR0708496}.

\begin{definition}
\label{defherm}
The $q$-Hermite polynomial $H_k^q$ is the polynomial of the form
\begin{eqnarray*}
H^q_k(t)&=&\sum_{j=0}^{\lfloor k/2\rfloor}a_k^jt^{k-2j},
\end{eqnarray*}

with $a_k^0=(q+1)^k$, which is an eigenvector of the $q$-Hermite's equation
\begin{eqnarray*}
[(\partial_t^q)^2-(q+1)t\partial_t^q]f(t)&=&-(q+1)\lambda f(q\, t).
\end{eqnarray*}
\end{definition}

From the definition we immediately find that the eigenvalues are 
\[
\lambda_k=[k]_qq^{-k}.
\]

The exact form of $H_k^q$ is
\begin{equation}
\label{vormqherm}
H_k^q(t)=\sum_{j=0}^{\lfloor k/2\rfloor}(q+1)^{k-j}\frac{[k]_q!}{[k-2j]_q!}\frac{t^{k-2j}}{[-2j]_q[-2j+2]_q\cdots [-2]_q}.
\end{equation}

Taking the limit $q\to1$ we find $H_k(t)=\sum_{j=0}^{\lfloor k/2\rfloor}(-1)^j2^{k-2j}\frac{k!}{(k-2j)!j!}t^{k-2j}$. Now we show the recursion formula and the annihilation operator. The simplest way to prove these is by considering the coefficients.

\begin{theorem}
\label{hermrec}
The following recursion formula holds for the polynomials introduced in definition \ref{defherm},
\begin{eqnarray*}
(i)\,\,H^q_{k+1}&=&(q+1)tH^q_k-(q+1)[k]_q q^{k+1}H^q_{k-1}
\end{eqnarray*}

when $k>0$. The annihilation operator for the $q$-hermite polynomials is $\partial^q_t$,
\begin{eqnarray*}
(ii)\,\,\partial_t^qH^q_k(t)&=&(q+1)[k]_qH_{k-1}^q(t).
\end{eqnarray*}
\end{theorem}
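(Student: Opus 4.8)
The plan is to prove both identities by working directly with the coefficients $a_k^j$ in the expansion $H_k^q(t)=\sum_{j=0}^{\lfloor k/2\rfloor}a_k^j t^{k-2j}$, using the explicit form \eqref{vormqherm}. For part $(ii)$, I would apply $\partial_t^q$ termwise using $\partial_t^q(t^{k-2j})=[k-2j]_q t^{k-2j-1}$, giving $\partial_t^q H_k^q(t)=\sum_{j} a_k^j[k-2j]_q t^{k-2j-1}$. I then need to recognize this as $(q+1)[k]_q$ times $H_{k-1}^q(t)=\sum_i a_{k-1}^i t^{k-1-2i}$; matching the power $t^{k-1-2j}$ with $t^{k-1-2i}$ forces $i=j$, so the claim reduces to the identity $a_k^j[k-2j]_q=(q+1)[k]_q\,a_{k-1}^j$ for each $j$ with $0\le j\le\lfloor(k-1)/2\rfloor$. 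Substituting the closed form $a_k^j=(q+1)^{k-j}\frac{[k]_q!}{[k-2j]_q!}\cdot\frac{1}{[-2j]_q[-2j+2]_q\cdots[-2]_q}$ turns this into a routine cancellation: the factor $[k-2j]_q$ on the left combines with $\frac{[k]_q!}{[k-2j]_q!}$ to give $\frac{[k]_q!}{[k-2j-1]_q!}=[k]_q\frac{[k-1]_q!}{[k-1-2j]_q!}$, and the powers of $(q+1)$ and the $q$-shifted-factorial denominators match on both sides. One should check the top term $j=\lfloor k/2\rfloor$ when $k$ is even separately, since then $[k-2j]_q=[0]_q=0$ and the corresponding term simply drops out, consistent with $H_{k-1}^q$ having degree $k-1$.

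For part $(i)$, the cleanest route is again coefficient comparison. The polynomial $(q+1)tH_k^q(t)-(q+1)[k]_q q^{k+1}H_{k-1}^q(t)$ has degree $k+1$ with leading coefficient $(q+1)\cdot(q+1)^k=(q+1)^{k+1}=a_{k+1}^0$, matching the normalization in Definition \ref{defherm}. Comparing the coefficient of $t^{k+1-2j}$ on both sides yields the recursion $a_{k+1}^j=(q+1)a_k^j-(q+1)[k]_q q^{k+1} a_{k-1}^{j-1}$ (with $a_{k-1}^{-1}:=0$). One then verifies this holds by substituting \eqref{vormqherm}; the key algebraic step is rewriting $\frac{[k+1]_q!}{[k+1-2j]_q!}$ and relating the denominator products $[-2j]_q\cdots[-2]_q$ across the shift $k\mapsto k\pm1$, together with the identity $[k+1]_q=[k]_q+q^k$ (or its variant), to balance the two terms. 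Alternatively, one could derive $(i)$ and $(ii)$ together: from the $q$-Hermite equation in Definition \ref{defherm} and the annihilation relation $(ii)$, expressing $(q+1)tH_k^q$ via $(\partial_t^q)^2 H_k^q$ and $\partial_t^q H_k^q$ and using $(ii)$ twice gives $(i)$ after simplifying with $\lambda_k=[k]_q q^{-k}$.

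The main obstacle I anticipate is purely bookkeeping: the denominators $[-2j]_q[-2j+2]_q\cdots[-2]_q$ involve $q$-numbers with negative argument, and one must handle the shift in the product carefully when passing from index $k$ to $k\pm1$ (the number of factors stays $j$ in some terms but the top index $k-2j$ changes), as well as track the powers of $q$ arising from $[n]_q=q^{n-1}+q^{n-2}+\cdots+1$ versus $[-n]_q=-q^{-n}[n]_q$. Getting these sign and power-of-$q$ factors exactly right is where the proof could go wrong, but there is no conceptual difficulty — it is a finite verification for each $j$. I would organize it by fixing $j$, writing the ratio $a_k^j/a_{k-1}^j$ (resp.\ the three-term relation among $a_{k+1}^j, a_k^j, a_{k-1}^{j-1}$) explicitly from \eqref{vormqherm}, and reducing to a one-line identity in $q$-factorials.
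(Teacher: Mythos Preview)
Your proposal is correct and follows exactly the approach indicated in the paper, which states only that ``the simplest way to prove these is by considering the coefficients.'' In fact you supply more detail than the paper does, but the method---termwise application of $\partial_t^q$ and coefficient matching via the explicit form \eqref{vormqherm}---is the same.
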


In the classical case the creation operator can be obtained from either the combination of the annihilation operator and the recursion formula or the combination of the annihilation operator and the Hermite's equation. In the $q$-deformed case these two approaches lead to different creation operators.
\begin{theorem}
\label{hermcrea}
For $H_k^q$ as defined in definition \ref{defherm} the following relations hold for $k>0$,
\begin{eqnarray*}
(i)\,\,H^q_k(t)&=&((q+1)t-q^k\partial_t^q)H^q_{k-1}(t)
\end{eqnarray*}

and
\begin{eqnarray*}
(ii)\,\,H^q_k(qt)&=&q^{k}((q+1)t-\partial_t^q)H^q_{k-1}(t).
\end{eqnarray*}
\end{theorem}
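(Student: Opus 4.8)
The plan is to follow the strategy already announced in the paper: reduce everything to statements about the coefficients $a_k^j$ in the expansion $H_k^q(t)=\sum_j a_k^j t^{k-2j}$, using the explicit form \eqref{vormqherm}. For part $(i)$, I would combine the recursion formula from Theorem~\ref{hermrec}$(i)$ with the annihilation operator from Theorem~\ref{hermrec}$(ii)$. Concretely, Theorem~\ref{hermrec}$(ii)$ gives $\partial_t^q H_{k-1}^q(t)=(q+1)[k-1]_q H_{k-2}^q(t)$, so I must show
\begin{eqnarray*}
((q+1)t-q^k\partial_t^q)H_{k-1}^q(t)&=&(q+1)tH_{k-1}^q(t)-q^k(q+1)[k-1]_qH_{k-2}^q(t)
\end{eqnarray*}
equals $H_k^q(t)$. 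Comparing with Theorem~\ref{hermrec}$(i)$ written for index $k-1$, namely $H_k^q=(q+1)tH_{k-1}^q-(q+1)[k-1]_qq^kH_{k-2}^q$ (shifting $k\mapsto k-1$ there), the two right-hand sides coincide exactly. So part $(i)$ is essentially a one-line consequence of the two relations in Theorem~\ref{hermrec}, once one is careful about the index shift.

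For part $(ii)$ I expect to need the $q$-Hermite equation from Definition~\ref{defherm} rather than the recursion. The idea is to apply the annihilation operator identity together with the differential(-difference) equation: from $[(\partial_t^q)^2-(q+1)t\partial_t^q]H_k^q(t)=-(q+1)\lambda_k H_k^q(qt)$ with $\lambda_k=[k]_qq^{-k}$, and substituting $\partial_t^q H_k^q=(q+1)[k]_q H_{k-1}^q$, one gets
\begin{eqnarray*}
(q+1)[k]_q\partial_t^qH_{k-1}^q(t)-(q+1)^2[k]_qtH_{k-1}^q(t)&=&-(q+1)[k]_qq^{-k}H_k^q(qt).
\end{eqnarray*}
Dividing by $-(q+1)[k]_q$ and then multiplying by $q^k$ yields $H_k^q(qt)=q^k\bigl((q+1)t-\partial_t^q\bigr)H_{k-1}^q(t)$, which is exactly the claim. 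The one subtlety is making sure the substitution $\partial_t^q H_k^q=(q+1)[k]_q H_{k-1}^q$ may legitimately be used inside the second-order operator; since $\partial_t^q$ is linear and $(q+1)[k]_q$ is a scalar, $(\partial_t^q)^2 H_k^q=(q+1)[k]_q\partial_t^q H_{k-1}^q$ and $(q+1)t\partial_t^q H_k^q=(q+1)^2[k]_q t H_{k-1}^q$, so this is routine.

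The main obstacle, such as it is, is purely bookkeeping: keeping the index shifts straight (the recursion in Theorem~\ref{hermrec}$(i)$ is stated at index $k$ but is needed at index $k-1$ in part $(i)$), and confirming that the normalization constant $a_k^0=(q+1)^k$ is reproduced correctly so that the output of the creation operator really is $H_k^q$ and not merely proportional to it. For the latter, it suffices to check the top-degree coefficient: $(q+1)t$ applied to $(q+1)^{k-1}t^{k-1}$ gives $(q+1)^k t^k$, and the $\partial_t^q$ term lowers the degree, so the leading coefficient matches on the nose; the same check works for part $(ii)$ after accounting for the factor $q^k$ and the fact that $H_k^q(qt)$ has leading coefficient $(q+1)^kq^k t^k$. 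Alternatively, one can avoid even this by noting that both sides satisfy the same defining equation in Definition~\ref{defherm} and the same normalization, hence are equal by the uniqueness built into that definition. I would present the coefficient-free derivation via Theorem~\ref{hermrec} and Definition~\ref{defherm} as the main argument, since it is cleanest, and remark that a direct coefficient computation using \eqref{vormqherm} gives an independent check.
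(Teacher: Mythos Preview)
Your proposal is correct and follows exactly the approach the paper intends: the sentence preceding Theorem~\ref{hermcrea} explicitly says that the two creation operators arise, respectively, from combining the annihilation operator with the recursion formula and with the $q$-Hermite equation, and the proof of the analogous Theorem~\ref{creaCH} confirms this strategy. Your bookkeeping on the index shift and the leading-coefficient check is fine; nothing more is needed.
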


For our purpose we use the following $q$-exponential based on formula (\ref{exp2}) with $Q=q^2$,
\begin{eqnarray}
\label{defexp}
e_Q(u)&=&\sum_{j=0}^\infty q^{j(j-1)}\frac{u^{j}}{[j]_{q^2}!}.
\end{eqnarray} 
This exponential satisfies $\partial_t^q\left[ e_Q(-t^2)\right]=-(q+1)te_Q(-q^2t^2)$. Together with theorem \ref{hermcrea}$(ii)$ and Leibniz rule (\ref{Leibniz}) this yields
\begin{eqnarray}
\label{hermexp}
H_k^q(qt)\, e_Q(-q^2t^2)&=&-q^k\,\partial_t^q\,[H_{k-1}^q(t)\,e_Q(-t^2)].
\end{eqnarray}

Now we have all the necessary tools to prove the orthogonality relation for the $q$-Hermite polynomials. The proof can be found in \cite{MR0708496} or from the steps in the proof of theorem \ref{CHorth} using theorem \ref{hermrec}$(ii)$ and formula \eqref{hermexp}.
\begin{theorem}
When $q<1$, the $q$-hermite polynomials are orthogonal with respect to the inner product $\langle f|g\rangle = \int_{-\lambda_Q}^{\lambda_Q} d_qt\,f\overline{g}e_Q(-t^2)$ with ${\lambda_Q}^2=\frac{1}{1-Q}$,
\begin{eqnarray*}
\int_{-\lambda_Q}^{\lambda_Q} d_qt\,H^q_k(t)H^q_{l}(t)e_Q(-t^2)&=&\delta_{kl}2(q+1)^{k-1}q^{\frac{1}{2}(k+1)(k+2)}[k]_q!\Gamma_Q(\frac{1}{2}).
\end{eqnarray*}
\end{theorem}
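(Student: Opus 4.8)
The plan is to mimic the classical proof of orthogonality of Hermite polynomials: repeatedly apply the raising relation \eqref{hermexp} to convert $H_k^q(t)e_Q(-t^2)$ into $k$-fold $q$-derivatives, and then integrate by parts $k$ times using \eqref{partiele}, each time shifting the $q$-derivative onto the lower-degree factor via the annihilation relation in Theorem \ref{hermrec}$(ii)$. Throughout we must carefully track the argument dilations $t\mapsto qt$ that the $q$-Leibniz rules \eqref{Leibniz}--\eqref{Leibniz2} introduce, together with the change-of-variable formulas in Lemma \ref{changevar}.

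First I would set up the integral $I_{k,l}=\int_{-\lambda_Q}^{\lambda_Q}d_qt\,H_k^q(t)H_l^q(t)e_Q(-t^2)$, assume without loss of generality $k\ge l$, and rewrite one factor using \eqref{hermexp}: after substituting $t\to qt$ suitably (via Lemma \ref{changevar}$(ii)$ applied to the symmetric interval, using $\lambda_Q/q=\lambda_Q\cdot q^{-1}$ and the fact that $e_Q(-t^2)$ is tied to the endpoint $\lambda_Q$), write $H_k^q(qt)e_Q(-q^2t^2)=-q^k\partial_t^q[H_{k-1}^q(t)e_Q(-t^2)]$. Then I would integrate by parts: $\int_{-\lambda_Q}^{\lambda_Q}d_qt\,\big(\partial_t^q u\big)v = [uv]_{-\lambda_Q}^{\lambda_Q} - \int_{-\lambda_Q}^{\lambda_Q}d_qt\,u(qt)\,\partial_t^q v$, where the boundary term vanishes because $e_Q(-\lambda_Q^2)=e_Q(\tfrac{1}{Q-1}\cdot(-1))$... wait — more precisely, by Lemma \ref{nulpexp} the exponential $e_Q$ vanishes at $\lambda_Q^2 = \tfrac{1}{1-Q}$, since $\lambda_Q^2=\tfrac{Q^0}{1-Q}$ is of the form $\tfrac{Q^{-k}}{Q^{-1}\cdot\text{(something)}}$; in any case the relevant endpoint kills the surface term exactly as in Lemma \ref{gammagaus}. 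The $q$-derivative $\partial_t^q$ then lands on the other Hermite factor (composed with a dilation), and Theorem \ref{hermrec}$(ii)$ turns it into $(q+1)[l]_q$ times $H_{l-1}^q$. Iterating this $l$ times peels off both polynomials down to degree $k-l$ and $0$, accumulating an explicit product of powers of $q$, factors $(q+1)$, and $q$-integers $[l]_q[l-1]_q\cdots[1]_q=[l]_q!$.

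If $k>l$ the procedure terminates with a factor $\partial_t^q$ (or several) hitting a constant, namely $H_0^q=1$, after the $H_l^q$ chain is exhausted but the $H_k^q$ chain is not — more cleanly, one arrives at an integral of the form $\int d_qt\,(\partial_t^q)^{k-l}[\,\cdots\,]\cdot(\text{constant})$ whose integrand is a total $q$-derivative, hence vanishes by \eqref{partiele} and the vanishing of $e_Q$ at the endpoint; this gives $I_{k,l}=0$. If $k=l$ the chain terminates with $\int_{-\lambda_Q}^{\lambda_Q}d_qt\,e_Q(-t^2)$ times the accumulated constant; actually one is left with $\int_{-\lambda_Q}^{\lambda_Q}d_qt\, H_0^q(t)\cdot(\text{stuff})\cdot e_Q(-\alpha t^2)$, and after a final change of variables (Lemma \ref{changevar}$(ii)$ to absorb the dilation) this reduces to the $\nu=1$ case of Lemma \ref{gammagaus}, giving $\tfrac{2}{q+1}Q^{1/2}\Gamma_Q(\tfrac12)$. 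Collecting all the $q$-powers from the $k$ applications of \eqref{hermexp} (each contributing $-q^{k-j}$) and from the repeated argument dilations in the integration-by-parts steps, together with $(q+1)^{k-1}$ from Theorem \ref{hermrec}$(ii)$ and the $Q^{1/2}$ from the Gaussian integral, should assemble into the claimed constant $2(q+1)^{k-1}q^{\frac12(k+1)(k+2)}[k]_q!\,\Gamma_Q(\tfrac12)$.

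The main obstacle will be bookkeeping the powers of $q$ and the interlocking argument dilations correctly: every use of the $q$-Leibniz rule and every integration by parts rescales $t$, and the two $q$-exponentials $e_Q(-t^2)$ versus $e_Q(-q^2t^2)$ must be kept in register with the polynomial arguments $H^q(t)$ versus $H^q(qt)$ so that relation \eqref{hermexp} applies at each stage. A clean way to manage this is to prove by induction on $l$ a slightly more general identity, something like $\int_{-\lambda_Q}^{\lambda_Q}d_qt\,H_k^q(q^a t)\,H_l^q(q^b t)\,e_Q(-q^c t^2)=\delta_{kl}\cdot(\text{explicit in }a,b,c,k)$, chosen so that each integration-by-parts step reproduces the same shape with $l$ decreased by one; then the stated theorem is the case $a=b=c=0$. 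Verifying that the endpoint term vanishes at each step is routine given Lemma \ref{nulpexp}, and the degree-counting that forces $k=l$ is immediate from Definition \ref{defherm}.
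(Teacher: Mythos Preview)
Your proposal is correct and follows essentially the same route the paper indicates: the paper does not spell out the argument but says it follows ``from the steps in the proof of theorem \ref{CHorth} using theorem \ref{hermrec}$(ii)$ and formula \eqref{hermexp}'', and that is precisely your scheme of rescaling via Lemma \ref{changevar}, applying \eqref{hermexp} to express one factor as a total $q$-derivative, integrating by parts with the boundary term killed by Lemma \ref{nulpexp}, lowering the other factor via Theorem \ref{hermrec}$(ii)$, iterating, and closing with Lemma \ref{gammagaus}. Your identification of the dilation bookkeeping as the only real work is accurate; the paper's proof of Theorem \ref{CHorth} handles exactly this issue in the Clifford setting by first rescaling the whole integral (your Lemma \ref{changevar}$(ii)$ step) so that the raising relation applies cleanly, and the one-dimensional version is strictly simpler.
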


\begin{remark}
The inner product defined above is only positive definite if one considers functions defined on the set of points $\{\pm \lambda_Q q^j|j\in\mN\}$.
\end{remark}

\subsection{Clifford-Hermite polynomials}
\label{sectionCHerm}
Inspired by the $q$-Hermite's equation in section \ref{sectionHerm} and the $q$-Dirac operator we define the $q$-deformed Clifford-Hermite polynomials as solutions of a $q$-Clifford-Hermite's equation. The Clifford-Hermite polynomials were introduced in \cite{MR926831}. We will not repeat their properties here, as they can be found from taking the limit $q\to 1$.

\begin{definition}
\label{defCHerm}
The $q$-Clifford-Hermite polynomials are of the form
\begin{equation}
\label{as}
H^q_{j,m,k}(\ux)M_k=\sum_{i=0}^{\lfloor j/2\rfloor}a_i^{j,k}\ux^{j-2i}M_k.
\end{equation}
with $M_k$ a spherical monogenic of degree $k$. They are eigenvectors of the $q$-Clifford Hermite's equation
\begin{eqnarray*}
[\Delta_q-(q+1)\ux\upx^q]f(\ux)&=&-(q+1)\lambda f(q\ux).
\end{eqnarray*}

The normalization is given by $a^{j,k}_{0}=(q+1)^j$.
\end{definition}

In this section we will use the notation $2\beta=m+2k$, assuming that we take $k$ fixed, and $Q=q^2$. By a quick calculation and theorem \ref{pxFischer} we find that the eigenvalues are given by ($j=2t$ or $j=2t+1$)
\begin{eqnarray*}
\lambda_{2t,m,k}&=&[2t]_qq^{-2t-k}=(q+1)[t]_QQ^{-t-k/2}\\
\lambda_{2t+1,m,k}&=&[2t+m+2k]_qq^{-2t-1-k}=(q+1)[t+\beta]_QQ^{-t-(k+1)/2}.
\end{eqnarray*}

The explicit form of the $q$-Clifford-Hermite polynomials is given by
\begin{lemma}
The coefficients of the Clifford-Hermite polynomials in definition \ref{defCHerm} are given by
\begin{eqnarray*}
a^{2t,k}_i&=&(q+1)^{2t}Q^{\frac{1}{2}i(i+1)}\binom{t}{i}_Q[t-1+\beta]_Q[t-2+\beta]_Q\cdots[t-i+\beta]_Q
\end{eqnarray*}

and
\begin{eqnarray*}
a_i^{2t+1,k}&=&(q+1)^{2t+1}Q^{\frac{1}{2}i(i+1)}\binom{t}{i}_Q[t+\beta]_Q[t-1+\beta]_Q\cdots[t-i+1+\beta]_Q.
\end{eqnarray*}
\end{lemma}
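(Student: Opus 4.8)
The plan is to determine the coefficients $a_i^{j,k}$ by plugging the ansatz \eqref{as} into the $q$-Clifford-Hermite equation of Definition \ref{defCHerm} and reading off a two-term recursion in $i$. First I would compute the action of the operator $\Delta_q - (q+1)\ux\upx^q = -(\upx^q)^2 - (q+1)\ux\upx^q$ on a single term $\ux^{j-2i}M_k$. Using Theorem \ref{pxFischer} twice (for the even/odd powers $j-2i$ separately) together with Lemma \ref{pxx}, each such term is sent to a linear combination of $\ux^{j-2i}M_k$ and $\ux^{j-2i-2}M_k$; the coefficient of $\ux^{j-2i-2}M_k$ is $-[j-2i]_q[j-2i+2\beta-2]_q$ from $-(\upx^q)^2$ (with the convention $2\beta=m+2k$ and the two cases of Theorem \ref{pxFischer}), while the diagonal coefficient of $\ux^{j-2i}M_k$ comes from $-(q+1)\ux\upx^q$ and is $-(q+1)[j-2i]_q$ or $-(q+1)[j-2i+2\beta]_q$ according to parity. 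On the other hand, the right-hand side $-(q+1)\lambda f(q\ux)$ contributes, on the $\ux^{j-2i}M_k$ component, the factor $-(q+1)\lambda\, q^{j-2i}\,a_i^{j,k}$, since $f(q\ux)$ scales the degree-$(j-2i)$ term by $q^{j-2i}$. Since the $\{\ux^{j-2i}M_k\}$ are linearly independent by Lemma \ref{cliffordfischerdecomp}, matching the coefficient of $\ux^{j-2i-2}M_k$ on both sides yields
\[
-[j-2i]_q[j-2i+2\beta-2]_q\, a_i^{j,k} \;-\;(q+1)(\text{diagonal piece on }a_{i+1}^{j,k}) \;=\; -(q+1)\lambda\, q^{j-2i-2}\, a_{i+1}^{j,k},
\]
which, after substituting the eigenvalue $\lambda=\lambda_{j,m,k}$ computed just before the lemma, collapses to a clean recursion $a_{i+1}^{j,k} = c_i\, a_i^{j,k}$ with an explicit ratio $c_i$.

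The second step is to identify this ratio. I expect that after simplification (using $[2u]_q=(q+1)[u]_Q$ and the identity $[a]_q[b]_q = (q+1)^2[a/2]_Q[b/2]_Q$ where the arguments are even, plus the telescoping of the $q$-powers coming from $\lambda$ and from $q^{j-2i}$) one gets, in the even case $j=2t$,
\[
\frac{a_{i+1}^{2t,k}}{a_i^{2t,k}} \;=\; Q^{\,i+1}\,\frac{[t-i]_Q}{[i+1]_Q}\,[t-i-1+\beta]_Q,
\]
and analogously in the odd case $j=2t+1$ with $[t-i-1+\beta]_Q$ replaced by $[t-i+\beta]_Q$. Telescoping from $a_0^{j,k}=(q+1)^j$ then gives
\[
a_i^{2t,k} = (q+1)^{2t}\,Q^{1+2+\cdots+i}\,\frac{[t]_Q[t-1]_Q\cdots[t-i+1]_Q}{[i]_Q!}\,[t-1+\beta]_Q\cdots[t-i+\beta]_Q,
\]
which is exactly the claimed formula once one recognizes $1+2+\cdots+i=\tfrac12 i(i+1)$ and $[t]_Q\cdots[t-i+1]_Q/[i]_Q! = \binom{t}{i}_Q$; the odd case is identical with $\beta\mapsto\beta$ shifted by one in the last product. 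I would present the even case in full and remark that the odd case is completely analogous.

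The main obstacle I anticipate is purely bookkeeping: one must carefully track the two parity branches of Theorem \ref{pxFischer} as $i$ varies (the power $j-2i$ stays of fixed parity, so in fact only one branch is used throughout a given recursion, which is a relief), and one must be scrupulous about where the factors $q^{j-2i}$ from the dilation $f(q\ux)$ on the right-hand side and the $q$-power in $\lambda_{j,m,k}$ cancel — a sign or exponent slip here would corrupt the $Q^{\frac12 i(i+1)}$ factor. A secondary subtlety is checking that the normalization $a_0^{j,k}=(q+1)^j$ is consistent with the equation at $i=0$, i.e. that the ansatz genuinely admits a (unique) solution of the prescribed shape; this follows because the diagonal coefficients of $\Delta_q-(q+1)\ux\upx^q$ on $\ux^{j}M_k$ match $-(q+1)\lambda_{j,m,k} q^{j}$ by the very definition of $\lambda_{j,m,k}$, so the top-degree equation is automatically satisfied and the lower coefficients are then forced. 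No convergence or analytic issues arise since everything is polynomial. I would therefore organize the write-up as: (1) substitute the ansatz, (2) extract the recursion via Fischer uniqueness, (3) simplify the ratio using the $Q$-number identities, (4) telescope and recognize the $q$-binomial, (5) note the odd case.
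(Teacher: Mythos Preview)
Your proposal is correct and follows essentially the same route as the paper: substitute the ansatz into the $q$-Clifford--Hermite equation, use Theorem~\ref{pxFischer} to compute the action of $(\upx^q)^2$ and $\ux\upx^q$ on $\ux^{j-2i}M_k$, match coefficients via Fischer uniqueness to obtain the two-term recursion (the paper writes it as $a_i^{2t,k}=Q^{i}\frac{[t-i+1]_Q[t-i+\beta]_Q}{[i]_Q}a_{i-1}^{2t,k}$, which agrees with your ratio after a shift $i\mapsto i+1$), and telescope. One small slip to fix when you write it up: the dilation $f(q\ux)$ scales $\ux^{j-2i}M_k$ by $q^{\,j-2i+k}$, not $q^{\,j-2i}$, since $M_k$ is homogeneous of degree $k$; this extra $q^k$ is precisely what cancels the $q^{-k}$ in $\lambda_{j,m,k}$, and without it the $Q^{\frac12 i(i+1)}$ bookkeeping you flagged would not come out right.
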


\begin{proof}
First we calculate, using equation \eqref{laplFischer}
\begin{eqnarray*}
(\px^q)^2a_{i-1}^{2t,k}\ux^{2t-2i+2}M_k&=&[2t-2i+2]_q[2t-2i+m+2k]_qa_{i-1}^{2t,k}\ux^{2t-2i}M_k
\end{eqnarray*}

and using lemma \ref{pxFischer}
\begin{eqnarray*}
\ux\upx^qa_{i}^{2t,k}\ux^{2t-2i}M_k&=&[2t-2i]_qa_{i}^{2t,k}\ux^{2t-2i}M_k.
\end{eqnarray*}

Substituting these results and $\lambda_{2t,m,k}=[2t]_qq^{-2t-k}$ in the differential equation leads to
\begin{eqnarray*}
[2t-2i+2]_q[2t-2i+m+2k]_qa_{i-1}^{2t,k}&=&(q+1)a_i^{2t,k}([2t]_qq^{-2t-k}q^{2t-2i+k}-[2t-2i]_q)\\
&=&(q+1)a_i^{2t,k}\frac{q^{2t-2i}-q^{-2i}-q^{2t-2i}+1}{q-1}\\
&=&(q+1)^2a_i^{2t,k}\frac{Q^{-i}(Q^i-1)}{Q-1},
\end{eqnarray*}

or
\begin{eqnarray*}
a_i^{2t,k}&=&Q^i\frac{[t-i+1]_Q[t-i+\beta]_Q}{[i]_Q}a_{i-1}^{2t,k}.
\end{eqnarray*}

Iterating this yields $a_i^{2t,k}$. The $a_i^{2t+1,k}$ are calculated in the same way.
\end{proof}

Using the $Q$-Gamma function leads to the explicit form of the $q$-Clifford-Hermite functions,
\begin{eqnarray*}
H^q_{2t,m,k}(\ux)M_k=(q+1)^{2t}\sum_{i=0}^{t}Q^{\frac{1}{2}i(i+1)}\binom{t}{i}_Q\frac{\Gamma_Q(t+m/2+k)}{\Gamma_Q(t-i+m/2+k)}\ux^{2t-2i}M_k
\end{eqnarray*}

and
\begin{eqnarray*}
H^q_{2t+1,m,k}(\ux)M_k=(q+1)^{2t+1}\sum_{i=0}^{t}Q^{\frac{1}{2}i(i+1)}\binom{t}{i}_Q\frac{\Gamma_Q(t+1+m/2+k)}{\Gamma_Q(t+1-i+m/2+k)}\ux^{2t-2i+1}M_k.
\end{eqnarray*}

We only defined the $\Gamma_Q$-function for $Q<1$, but for $Q>1$ the notation above can still be used to denote $\frac{[t+m/2+k]_Q!}{[t-i+m/2+k]_Q!}$. The $q$-Clifford-Hermite polynomials are connected with a $q$-deformation of the Laguerre polynomials in \cite{MR0708496}. We define $\cL_t^{\alpha}(\cdot|Q)$ by
\begin{eqnarray}
\label{CHL}
H^q_{2t,m,k}(\ux)&=&(q+1)^{2t}[t]_Q!\cL_t^{\frac{m}{2}+k-1}(r^2|Q)\\
H^q_{2t+1,m,k}(\ux)&=&(q+1)^{2t+1}[t]_Q!\ux\cL_t^{\frac{m}{2}+k}(r^2|Q).
\end{eqnarray}
These Laguerre polynomials are also related to those in \cite{MR0486697, MR0618759}, as we will show later. In particular we obtain a $q$-deformation of the classical relation between one dimensional Hermite and Laguerre polynomials,
\begin{eqnarray*}
H^q_{2t}(u)&=&(-1)^t(q+1)^{2t}[t]_{q^2}!\cL^{-\frac{1}{2}}_t(u^2|q^2).
\end{eqnarray*}

\begin{lemma}
\label{pxCH}
The $q$-Clifford-Hermite polynomials satisfy the following relation
\begin{eqnarray*}
\upx^qH^q_{j,m,k}M_k&=&C(j,m,k)H^q_{j-1,m,k}M_k
\end{eqnarray*}

with $C(2t,m,k)=(q+1)^2[t]_Q$ and $C(2t+1,m,k)=(q+1)^2[t+\beta]_Q$.
\end{lemma}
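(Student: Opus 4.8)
The plan is to verify the relation $\upx^q H^q_{j,m,k} M_k = C(j,m,k) H^q_{j-1,m,k} M_k$ by comparing coefficients in the Fischer expansion, using the closed forms of the coefficients $a_i^{j,k}$ just established and the action of $\upx^q$ on $\ux^j M_k$ from Theorem \ref{pxFischer}. Since $H^q_{j,m,k}(\ux)M_k = \sum_{i} a_i^{j,k} \ux^{j-2i} M_k$, applying Theorem \ref{pxFischer} term by term gives
\begin{eqnarray*}
\upx^q H^q_{2t,m,k}(\ux)M_k &=& \sum_{i=0}^{t} a_i^{2t,k} [2t-2i]_q \ux^{2t-2i-1} M_k,\\
\upx^q H^q_{2t+1,m,k}(\ux)M_k &=& \sum_{i=0}^{t} a_i^{2t+1,k} [2t-2i+m+2k]_q \ux^{2t-2i} M_k.
\end{eqnarray*}
In the first case reindexing $i\to i'$ (the $i=t$ term vanishes) and comparing with $C(2t,m,k)\sum_i a_i^{2t-1,k}\ux^{2t-1-2i}M_k$ reduces the claim to an identity between the $Q$-products defining $a_i^{2t,k}$ and $a_i^{2t-1,k}$; similarly in the odd case with $a_i^{2t,k}$.

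Concretely, I would first write both sides of the desired identity in the uniform "$\Gamma_Q$-ratio" form displayed just before the lemma. For the even-to-odd step, the coefficient of $\ux^{2t-2i-1}M_k$ on the left is $a_i^{2t,k}[2t-2i]_q = a_i^{2t,k}(q+1)[t-i]_Q$ (using $[2u]_q=(q+1)[u]_Q$ with $Q=q^2$), while on the right it is $C(2t,m,k)\,a_i^{2t-1,k}$. So the content of the lemma in this case is exactly
\begin{eqnarray*}
(q+1)^{2t}Q^{\frac12 i(i+1)}\binom{t}{i}_Q\frac{\Gamma_Q(t+\beta)}{\Gamma_Q(t-i+\beta)}(q+1)[t-i]_Q = (q+1)^2[t]_Q\,(q+1)^{2t-1}Q^{\frac12 i(i+1)}\binom{t-1}{i}_Q\frac{\Gamma_Q(t-1+\beta)}{\Gamma_Q(t-1-i+\beta)},
\end{eqnarray*}
and the prefactors $(q+1)^{2t+1}$ and the powers $Q^{\frac12 i(i+1)}$ match immediately; what remains is a $Q$-binomial identity, $[t-i]_Q\binom{t}{i}_Q = [t]_Q\binom{t-1}{i}_Q$, together with $\Gamma_Q(t+\beta)/\Gamma_Q(t-1+\beta) = [t-1+\beta]_Q$ and the corresponding shift in the denominator $\Gamma_Q$, which telescopes the product ratio. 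The odd-to-even step is entirely analogous: the left coefficient is $a_i^{2t+1,k}[2t-2i+2\beta]_q = a_i^{2t+1,k}(q+1)[t-i+\beta]_Q$, the right is $C(2t+1,m,k)a_i^{2t,k} = (q+1)^2[t+\beta]_Q a_i^{2t,k}$, and one checks the matching $Q$-binomial and $\Gamma_Q$-shift identities.

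The main obstacle, such as it is, is purely bookkeeping: keeping the index shifts and the three separate normalization factors (the $(q+1)$-powers, the $Q^{\frac12 i(i+1)}$ factors, and the $\Gamma_Q$-ratios) straight, and correctly peeling off one factor from each $Q$-product when passing from degree $j$ to degree $j-1$. There is no analytic difficulty and nothing genuinely subtle; alternatively, one could bypass the coefficient computation entirely by noting that $\upx^q$ maps the eigenspace of the $q$-Clifford-Hermite operator $\Delta_q-(q+1)\ux\upx^q$ with eigenvalue $\lambda_{j,m,k}$ into the eigenspace with eigenvalue $\lambda_{j-1,m,k}$ — this follows from a $q$-commutator of $\upx^q$ with the operator, using axiom $(A2)$ and Lemma \ref{pxx} — so that $\upx^q H^q_{j,m,k}M_k$ is forced to be a scalar multiple of $H^q_{j-1,m,k}M_k$, and the constant $C(j,m,k)$ is then read off from the top-degree coefficient alone via Theorem \ref{pxFischer}. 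I would present the direct coefficient comparison as the cleaner route, with the operator argument as a remark.
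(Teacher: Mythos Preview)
Your approach---termwise application of Theorem~\ref{pxFischer} followed by coefficient comparison---is exactly the paper's, and the key $Q$-binomial identity $[t-i]_Q\binom{t}{i}_Q=[t]_Q\binom{t-1}{i}_Q$ is indeed the whole content of the even case. There is one bookkeeping slip in your displayed identity, however: substituting $t\mapsto t-1$ into the odd formula $a_i^{2t+1,k}$ (whose $\Gamma_Q$-ratio is $\Gamma_Q(t+1+\beta)/\Gamma_Q(t+1-i+\beta)$) gives
\[
a_i^{2t-1,k}=(q+1)^{2t-1}Q^{\frac12 i(i+1)}\binom{t-1}{i}_Q\frac{\Gamma_Q(t+\beta)}{\Gamma_Q(t-i+\beta)},
\]
so the $\Gamma_Q$-ratios on the two sides are \emph{identical} rather than shifted; no telescoping of $\Gamma_Q$-factors is needed, and the identity collapses immediately to the $Q$-binomial identity alone. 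With that correction your argument is the paper's one-line computation $(q+1)[t-i]_Q\,a_i^{2t,k}=(q+1)^2[t]_Q\,a_i^{2t-1,k}$ verbatim, and the odd case is analogous.
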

\begin{proof}
For the even case the lemma follows from considering the coefficients,
\begin{eqnarray*}
\upx^qa_i^{2t,k}\ux^{2t-2i}M_k&=&(q+1)[t-i]_Qa_i^{2t,k}\ux^{2t-1-2i}M_k\\
&=&(q+1)^2[t]_Qa_i^{2t-1,k}\ux^{2t-1-2i}M_k.
\end{eqnarray*}

The odd case is calculated similarly.
\end{proof}
The $q$-Clifford-Hermite polynomials can also be calculated using a recursion formula.
\begin{lemma}
\label{CHrecur1}
The $q$-Clifford-Hermite polynomials satisfy the recursion formula
\begin{eqnarray*}
H^q_{j+1,m,k}M_k&=&(q+1)\ux H^q_{j,m,k}M_k+D(j,m,k)H^q_{j-1,m,k}M_k
\end{eqnarray*}

with $D(2t,m,k)=(q+1)^2Q^{t+\beta}[t]_Q$ and $D(2t+1,m,k)=(q+1)^2Q^{t+1}[t+\beta]_Q$.
\end{lemma}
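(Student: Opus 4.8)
The plan is to prove the recursion formula by comparing coefficients of $\ux^{j+1-2i}M_k$ on both sides, exactly as in the proofs of Lemma~\ref{pxCH} and the explicit-coefficient lemma, and to exploit the action of $\upx^q$ on the Fischer decomposition (Theorem~\ref{pxFischer}) together with the $q$-Clifford-Hermite equation in Definition~\ref{defCHerm}. Since all three polynomials $H^q_{j+1,m,k}M_k$, $\ux H^q_{j,m,k}M_k$ and $H^q_{j-1,m,k}M_k$ are expanded in the same basis $\{\ux^{j+1-2i}M_k\}$, the claimed identity is a finite family of scalar relations among the coefficients $a^{j+1,k}_i$, $a^{j,k}_i$ and $a^{j-1,k}_i$, which are all known explicitly from the coefficient lemma. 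One can then either verify these scalar relations directly from the product formulas for the $a^{j,k}_i$ (involving $Q$-binomials and shifted $Q$-factorials), which is the brute-force route, or — more in the spirit of the classical proof — derive the recursion structurally.

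The structural route I would actually carry out: apply $\upx^q$ to both sides of the proposed recursion and use Lemma~\ref{pxCH} on each $H^q$ term, reducing the problem to a recursion one degree lower; simultaneously, apply $\upx^q$ to the $q$-Clifford-Hermite equation $[\Delta_q-(q+1)\ux\upx^q]H^q_{j,m,k}M_k=-(q+1)\lambda_{j,m,k}H^q_{j,m,k}(q\ux)M_k$, using the commutation relations from Section~\ref{qcommu} (in particular $[\Delta_q,\ux]_{q^2}=-(q+1)\upx^q$ and $[\upx^q,\Delta_q]=0$) to push $\upx^q$ through. Combining these two pieces of information with Lemma~\ref{pxx} (which gives $\upx^q\ux$ on each graded block) and the eigenvalue list pins down the unknown coefficient $D(j,m,k)$. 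The cleanest way is probably to apply $\upx^q$ once to $H^q_{j+1,m,k}M_k - (q+1)\ux H^q_{j,m,k}M_k$: by Lemma~\ref{pxCH} this equals $C(j+1,m,k)H^q_{j,m,k}M_k - (q+1)\upx^q\ux H^q_{j,m,k}M_k$, and $\upx^q\ux$ acting on $H^q_{j,m,k}M_k$ can be evaluated block-by-block via Lemma~\ref{pxx} (giving $[\,\mE-\Gamma+m\,]_q$, hence $[m+2k+2i]_q$ on the block $\ux^{j-2i}M_k$ after accounting for the $\ux^{2i}$-factor). This shows $\upx^q$ applied to both sides of the claimed recursion agree up to a constant multiple of $H^q_{j-1,m,k}M_k$, and matching that constant against $D(j-1,m,k)$ from Lemma~\ref{pxCH} gives a consistency check; the leading coefficient (the $i=0$ term, normalized by $a^{j+1,k}_0=(q+1)^{j+1}$) then fixes $D(j,m,k)$ uniquely.

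The main obstacle I anticipate is bookkeeping rather than conceptual: one must carefully distinguish the even case $j=2t$ from the odd case $j=2t+1$ (the eigenvalues, the shifts by $\beta=\tfrac{m+2k}{2}$, and the parity of the surviving monomials all differ), and one must correctly track the powers of $q$ that appear when $\upx^q$ passes a power of $\ux$ — the factor $q^{2l}$ in $\upx^q\ux^{2l}=[2l]_q\ux^{2l-1}+q^{2l}\ux^{2l}\upx^q$ from Theorem~\ref{pxFischer} is exactly the source of the $Q^{t+\beta}$ and $Q^{t+1}$ prefactors in $D$. Once the two parity cases are set up in parallel, the verification that the coefficient of every intermediate power $\ux^{j-1-2i}M_k$ vanishes reduces to the single identity already used in the coefficient lemma, namely $[2t-2i+2]_Q\,[\,t-i+\beta\,]_Q$-type telescoping, so no genuinely new computation is needed beyond what the excerpt has already established.
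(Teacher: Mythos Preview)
Your first route---direct verification of the scalar identities $a_i^{j+1,k}=(q+1)a_i^{j,k}+D(j,m,k)\,a_{i-1}^{j-1,k}$ from the explicit coefficient formulas---is exactly what the paper does, and it is much shorter than you suggest: dividing through by $a_i^{j+1,k}$ and inserting the closed forms for the $a$'s, the even case $j=2t$ collapses in two lines to the elementary identity $[t+\beta]_Q=[t-i+\beta]_Q-Q^{t+\beta}[-i]_Q$, and the odd case is parallel. So what you call the brute-force route is in fact the clean proof.

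Your preferred ``structural'' route, by contrast, has genuine problems as written. First, the eigenvalue of $\upx^q\ux$ on the block $\ux^{j-2i}M_k$ is not $[m+2k+2i]_q$: for $j=2t$ Theorem~\ref{pxFischer} gives $[m+2k+2(t-i)]_q$, while for $j=2t+1$ it gives $[2(t-i)+2]_q$, so both the index and the very shape depend on parity. Second, and more seriously, your claim that applying $\upx^q$ to both sides leaves a discrepancy equal to ``a constant multiple of $H^q_{j-1,m,k}M_k$'' cannot be right on parity grounds: the $\upx^q$-image of both sides lives in the span of $\{\ux^{j-2i}M_k\}$ (the same parity as $H^q_{j}$ and $H^q_{j-2}$), whereas $H^q_{j-1}M_k$ lies in the complementary parity, so it cannot appear at all. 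Hence the mechanism you describe for isolating $D$ does not function. There \emph{is} a clean structural derivation---combine Lemma~\ref{pxCH} with the creation formula $H^q_{j+1}=[\,q^{\sigma_{j+1}}\upx^q+(q+1)\ux\,]H^q_j$ of Theorem~\ref{creaCH}(i) to read off $D(j)=q^{\sigma_{j+1}}C(j)$---but in the paper's logical order Theorem~\ref{creaCH}(i) is \emph{deduced from} the present recursion together with Lemma~\ref{pxCH}, so invoking it here would be circular.
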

\begin{proof}
We prove this again by looking at the coefficients. They have to satisfy
\begin{eqnarray*}
a_i^{j+1,k}&=&(q+1)a_i^{j,k}+D(j,m,k)a_{i-1}^{j-1,k}.
\end{eqnarray*}

For $j=2t$ we obtain
\begin{eqnarray*}
1&=&(q+1)\frac{a^{2t,k}_i}{a^{2t+1,k}_{i}}+D(2t,m,k)\frac{a^{2t-1,k}_{i-1}}{a^{2t+1,k}_i}\\
&=&\frac{[t-i+\beta]_Q}{[t+\beta]_{Q}}+D(2t,m,k)\frac{[-i]_Q}{(-1)(q+1)^2[t]_Q[t+\beta]_Q}\\
&=&\frac{1}{[t+\beta]_Q}\left([t-i+\beta]_Q-Q^{t+\beta}[-i]_Q\right).
\end{eqnarray*}

The odd case is proven similarly.
\end{proof}
Similar to the one dimensional case there are two creation operators.

\begin{theorem}
\label{creaCH}
The $q$-Clifford-Hermite polynomials satisfy the following two relations
\begin{eqnarray*}
(i)\,H_{j,m,k}^qM_k&=&\left[q^{\sigma_j}\upx^q+(q+1)\ux\right]H^q_{j-1,m,k}M_k
\end{eqnarray*}

with $\sigma_{2t}=2t$ and $\sigma_{2t+1}=2t+2k+m$ and
\begin{eqnarray*}
(ii)\,H_{j,m,k}^q(q\ux)M_k&=&q^{j}\left[\upx^q+(q+1)\ux\right]H^q_{j-1,m,k}(\ux)M_k.
\end{eqnarray*}
\end{theorem}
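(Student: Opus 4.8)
\textbf{Proof proposal for Theorem~\ref{creaCH}.}
The plan is to reduce both creation formulas to identities among the coefficients $a_i^{j,k}$, exactly as was done for Lemma~\ref{CHrecur1}. First I would combine the annihilation operator (Lemma~\ref{pxCH}) with the recursion formula (Lemma~\ref{CHrecur1}) to obtain (i). Concretely, Lemma~\ref{CHrecur1} gives $H^q_{j,m,k}M_k=(q+1)\ux H^q_{j-1,m,k}M_k+D(j-1,m,k)H^q_{j-2,m,k}M_k$, while Lemma~\ref{pxCH} gives $\upx^q H^q_{j-1,m,k}M_k=C(j-1,m,k)H^q_{j-2,m,k}M_k$. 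So it suffices to check that $q^{\sigma_j}C(j-1,m,k)=D(j-1,m,k)$ for both parities of $j$; this is a one-line verification using $C(2t,m,k)=(q+1)^2[t]_Q$, $D(2t-1,m,k)=(q+1)^2Q^{t}[t]_Q$ (so $\sigma_{2t}=2t$ forces $q^{2t}=Q^{t}$, which holds) and similarly $C(2t+1,m,k)=(q+1)^2[t+\beta]_Q$, $D(2t,m,k)=(q+1)^2Q^{t+\beta}[t]_Q$ — wait, here one must be careful: $D(2t,m,k)$ involves $[t]_Q$ while $C(2t+1,m,k)$ involves $[t+\beta]_Q$, so this route needs the index bookkeeping to be done honestly. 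I would therefore instead prove (i) directly at the level of coefficients, showing that $q^{\sigma_j}\upx^q+(q+1)\ux$ applied to $\sum_i a_i^{j-1,k}\ux^{j-1-2i}M_k$ reproduces $\sum_i a_i^{j,k}\ux^{j-2i}M_k$; by Theorem~\ref{pxFischer} the operator $\upx^q$ lowers the power of $\ux$ in a controlled way, so matching the coefficient of $\ux^{j-2i}M_k$ becomes a finite identity in $q$-numbers and $Q$-binomials that follows from the explicit formulas in the lemma preceding Lemma~\ref{pxCH}.

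For part (ii) I would again work with coefficients, but now using the Leibniz-type behaviour of $\upx^q$ and the fact that $H^q_{j,m,k}(q\ux)M_k$ has coefficient $q^{j-2i}a_i^{j,k}$ in front of $\ux^{j-2i}M_k$. Applying $q^j[\upx^q+(q+1)\ux]$ to $H^q_{j-1,m,k}(\ux)M_k=\sum_i a_i^{j-1,k}\ux^{j-1-2i}M_k$ and using Theorem~\ref{pxFischer} to evaluate $\upx^q\ux^{j-1-2i}M_k$, I would collect the coefficient of $\ux^{j-2i}M_k$ and check it equals $q^{j-2i}a_i^{j,k}$. The required identity is the same family of $q$-number relations as in part~(i), transported through an overall power of $q$; in fact once (i) is known, (ii) should follow formally by the substitution rule $H^q_{j,m,k}(q\ux)$ versus $H^q_{j,m,k}(\ux)$ together with comparing the exponents $\sigma_j$ against $j$, i.e. $q^{\sigma_{2t}}=q^{2t}$ and $q^{\sigma_{2t+1}}=q^{2t+2k+m}=q^{2t+1}\cdot q^{2k+m-1}$, the extra factor being absorbed by the shift from $\upx^q$ acting on $\ux^{\text{odd}}M_k$. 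This is exactly the mechanism already visible in Theorem~\ref{hermcrea} in the one-dimensional case, so the $m$-dimensional statement is its natural generalization.

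The main obstacle I anticipate is purely bookkeeping: keeping the parity of $j$, the shift $2\beta=m+2k$, and the three different $q$-bases ($q$, $Q=q^2$, and the assorted powers $q^{\sigma_j}$) consistent through formulas like $a_i^{2t,k}=Q^i\frac{[t-i+1]_Q[t-i+\beta]_Q}{[i]_Q}a_{i-1}^{2t,k}$. There is no conceptual difficulty — everything reduces to elementary manipulations of $q$-numbers such as $[a+b]_q=[a]_q+q^a[b]_q$ — but the even and odd cases must be handled separately and the indices in the $Q$-binomial coefficients shifted carefully. A clean way to organize the proof is to first record the one ``master'' coefficient identity $a_i^{j,k}=q^{\sigma_j}(\text{coefficient produced by }\upx^q\text{ from }a_i^{j-1,k})+(q+1)a_{i}^{j-1,k}$ (reading off $\upx^q$ from Theorem~\ref{pxFischer} in the form $\upx^q\ux^{2l}M_k=[2l]_q\ux^{2l-1}M_k$ and $\upx^q\ux^{2l+1}M_k=[2l+2\beta]_q\ux^{2l}M_k$), verify it for both parities using the explicit formulas, and then note that (ii) is obtained from the identical computation after replacing each $a_i^{j-1,k}\ux^{j-1-2i}$ by $q^{j-1-2i}a_i^{j-1,k}\ux^{j-1-2i}$ and simplifying the resulting powers of $q$.
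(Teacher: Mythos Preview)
Your initial route for (i) is actually the paper's proof, and it works: from Lemma~\ref{CHrecur1} you need $q^{\sigma_j}C(j-1,m,k)=D(j-1,m,k)$. For $j=2t+1$ this means comparing $C(2t,m,k)=(q+1)^2[t]_Q$ with $D(2t,m,k)=(q+1)^2Q^{t+\beta}[t]_Q$, and indeed $q^{\sigma_{2t+1}}=q^{2t+2k+m}=Q^{t+\beta}$. You spooked yourself by writing $C(2t+1,m,k)$ instead of $C(j-1,m,k)=C(2t,m,k)$; once that index slip is corrected the verification is the promised one-liner and there is no need to fall back on matching coefficients term by term.

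For (ii) the paper does \emph{not} use coefficient matching or a reduction to (i). Instead it combines Lemma~\ref{pxCH} directly with the defining $q$-Clifford--Hermite equation of Definition~\ref{defCHerm}. Writing that equation as $[\upx^q+(q+1)\ux]\,\upx^q\,(H^q_{j,m,k}M_k)=(q+1)\lambda_{j,m,k}\,[H^q_{j,m,k}M_k](q\ux)$ and applying $\upx^q H^q_{j,m,k}M_k=C(j,m,k)H^q_{j-1,m,k}M_k$ gives
\[
H^q_{j,m,k}(q\ux)M_k=\frac{C(j,m,k)}{(q+1)\,q^k\,\lambda_{j,m,k}}\,[\upx^q+(q+1)\ux]H^q_{j-1,m,k}(\ux)M_k,
\]
and the constant equals $q^{j}$ in both parities (the $q^k$ comes from $M_k(q\ux)=q^kM_k$). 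Your coefficient-by-coefficient plan would also succeed, but it is heavier and your sketch of deducing (ii) from (i) ``formally by substitution'' is too vague to count as a proof: the discrepancy between $q^{\sigma_j}$ and $q^{j}$ is not cancelled by a simple rescaling of $\ux$, it is genuinely absorbed by the eigenvalue $\lambda_{j,m,k}$ via the differential equation, which is exactly what the paper's argument exploits.
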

\begin{proof}
These two equations can be found from combining lemma \ref{CHrecur1} with lemma \ref{pxCH} and from combining lemma \ref{pxCH} with definition \ref{defCHerm}.
\end{proof}

Using the definition of the $q$-exponential (\ref{defexp}) and the Leibniz rule in lemma \ref{pxr} yields
\begin{eqnarray}
\label{qexpdirac}
\upx^qe_Q(\ux^2)&=&e_Q(q^2\ux^2)[\upx^q+(q+1)\ux],
\end{eqnarray}
so theorem \ref{creaCH}$(ii)$ can be written as
\begin{eqnarray}
\label{qexpdiracch}
H^q_{j,m,k}(q\ux)\,M_k\,e_Q(q^2\ux^2)&=&q^{j}\,\upx^q\,H^q_{j-1,m,k}(\ux)\,M_k\,e_Q(\ux^2).
\end{eqnarray}

\begin{theorem}
\label{CHorth}
For $q<1$, with $\mR^m_q=\mB^m(\lambda_Q)$ and $\lambda_Q^2=\frac{1}{1-Q}$, the $q$-Clifford-Hermite polynomials are orthogonal with respect to the inner product 
\begin{eqnarray*}
\langle f|g\rangle =\int_{\mR_q^m}\,d_qV(\ux)\,\left[\overline{f}\,g\,e_Q(\ux^2)\right]_0.
\end{eqnarray*}

For the even Clifford-Hermite polynomials this means
\begin{eqnarray*}
\int_{\mR^m_q}d_qV(\ux)\left[\overline{H^q_{2j,m,k}M_k^{(p)}}H^q_{2t,m,k}M_l^{(r)}e_Q(\ux^2)\right]_0=\delta_{jt}\delta_{kl}\delta_{pr}(q+1)^{4j-1}Q^{(j+1)(j+\beta)}[j]_Q!\Gamma_Q(j+\beta),
\end{eqnarray*}

for the odd case
\begin{eqnarray*}
\int_{\mR^m_q}d_qV(\ux)\left[\overline{H^q_{2j+1,m,k}M_k^{(p)}}H^q_{2t+1,m,k}M_l^{(r)}e_Q(\ux^2)\right]_0=\delta_{jt}\delta_{kl}\delta_{pr}(q+1)^{4j+1}Q^{(j+1)(j+\beta+2)}[j]_Q!\Gamma_Q(j+\beta+1)
\end{eqnarray*}

and for the mixed case
\begin{eqnarray*}
\int_{\mR^m_q}d_qV(\ux)\left[\overline{H^q_{2j+1,m,k}M_k^{(p)}}H^q_{2t,m,k}M_l^{(r)}e_Q(\ux^2)\right]_0&=&0.
\end{eqnarray*}
\end{theorem}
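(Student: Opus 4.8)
~\textbf{Plan of proof.}
The strategy is to mimic the classical Clifford--Hermite orthogonality argument, integrating the Dirac operator ``by parts'' via the $q$-Cauchy formula (Theorem~\ref{qCauchy}), and peeling off one degree at a time using the second creation operator (Theorem~\ref{creaCH}$(ii)$) rewritten in the exponential-weighted form \eqref{qexpdiracch}. First I would handle the mixed case. Since $H^q_{2j+1,m,k}M_k^{(p)}$ is odd in $\ux$ (only odd powers $\ux^{2t-2i+1}$ occur) while $H^q_{2t,m,k}M_l^{(r)}e_Q(\ux^2)$ is even, the integrand $\left[\overline{H^q_{2j+1,m,k}M_k^{(p)}}H^q_{2t,m,k}M_l^{(r)}e_Q(\ux^2)\right]_0$ is odd, and the angular integral over $\mS^{m-1}$ of an odd function vanishes by symmetry of the sphere measure; alternatively one observes that the radial integrand is a monomial in $r$ of odd total degree against $r^{m-1}d_qr$ coupled to $\int_{\mS^{m-1}}d\xi\,\overline{\uxi}(\cdots)$ which kills the scalar part. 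Either way the mixed integral is $0$ with essentially no computation.

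For the $k\neq l$ or $p\neq r$ orthogonality within the even (resp.\ odd) families, the key is that $H^q_{2t,m,k}M_l^{(r)}=H^q_{2t,m,k}(\ux)\,M_l^{(r)}$ is a polynomial in $\ux^2=-r^2$ times $M_l^{(r)}$, so by the definition of $d_qV$ the integral factors as $\int_0^{\lambda_Q}d_qr\,r^{m-1}(\cdots)\cdot\int_{\mS^{m-1}}d\xi\,[\overline{M_k^{(p)}}M_l^{(r)}]_0$ (after moving the scalar radial polynomials and $e_Q$ out of the $[\cdot]_0$), and the angular factor is $\delta_{kl}\delta_{pr}$ by the orthonormality \eqref{orthbasis}. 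For the odd family one uses $\overline{\ux}=-\ux$ and $\ux\cdot\ux=-r^2$ to reduce $[\overline{\ux\cL(\cdot)M_k^{(p)}}\,\ux\cL(\cdot)M_l^{(r)}]_0$ to $r^2$ times $[\overline{M_k^{(p)}}M_l^{(r)}]_0$ again, giving $\delta_{kl}\delta_{pr}$. So the only real work is the diagonal $k=l$, $p=r$ norm computation, and there it suffices to take $k$ fixed and compute $\int_{\mR^m_q}d_qV(\ux)[\overline{H^q_{2j,m,k}M_k}\,H^q_{2t,m,k}M_k\,e_Q(\ux^2)]_0$ for $j\le t$, showing it vanishes for $j<t$ and equals the stated constant for $j=t$.

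For this I would induct on $t$, raising the index via \eqref{qexpdiracch}: writing $H^q_{2t,m,k}(q\ux)M_k\,e_Q(q^2\ux^2)=q^{2t}\,\upx^q\,[H^q_{2t-1,m,k}(\ux)M_k\,e_Q(\ux^2)]$, I substitute this into the integral, use the substitution property \eqref{intnorm2} together with Lemma~\ref{intnorm}$(iii)$ (the tail sum vanishes because $e_Q$ has a zero at $\lambda_Q^2$ by Lemma~\ref{nulpexp}, so rescaling $R=\lambda_Q$ to $R=q^{-1}\lambda_Q$ costs nothing) to replace $H^q_{2t,m,k}(\ux)$ by $H^q_{2t,m,k}(q\ux)$ up to a power of $q$, then apply the $q$-Cauchy formula (Theorem~\ref{qCauchy}) to move $\upx^q$ onto the other factor. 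The boundary term $R^{m-2}\int_{\partial\mB^m(R)}d\xi\,\overline{f}\,\ux\,g$ vanishes at $R=\lambda_Q$ because $e_Q(\ux^2)$ evaluated on $|\ux|^2=\lambda_Q^2$ is $e_Q(-\lambda_Q^2)=e_Q(\frac{1}{Q-1}\cdot(-1)\cdot\frac{1}{?})$... more precisely $e_Q$ vanishes at the appropriate point by Lemma~\ref{nulpexp}, so $f=H^q_{j,m,k}M_k\,e_Q(\ux^2)$ is zero on the boundary sphere. The surviving bulk term contains $\overline{q^\Gamma\upx^q(\,\overline{H^q_{2j,m,k}M_k}\,e_Q(\ux^2)\,)}$; here I use $\Gamma M_k=-kM_k$ and Lemma~\ref{pxr}/\eqref{qexpdiracch} again, now lowering: $\upx^q[H^q_{2j-1,m,k}M_k\,e_Q(\ux^2)]$ relates to $H^q_{2j,m,k}(q\ux)M_k\,e_Q(q^2\ux^2)$ by Theorem~\ref{creaCH}$(ii)$ read backwards, or one simply applies Lemma~\ref{pxCH} ($\upx^q H^q_{2j,m,k}M_k=(q+1)^2[j]_Q H^q_{2j-1,m,k}M_k$) after the exponential is stripped. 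Iterating, the original integral of the pair $(2j,2t)$ is reduced, up to an explicit $q$-power and an explicit factor from $C(\cdot)$ in Lemma~\ref{pxCH}, to the integral of the pair $(2j-2,2t-2)$ (i.e.\ $(2(j-1),2(t-1))$ after the shift $m\to m$, $k\to k$ but effectively $t\to t-1$). If $j<t$ this descent terminates at an integral of the form $\int d_qV\,[\overline{M_k}\,H^q_{2(t-j),m,k}M_k\,e_Q(\ux^2)]_0$ with $H^q_{2(t-j),m,k}$ of positive degree, i.e.\ a polynomial in $\ux^2$ with no constant term times $M_k$ — no, more carefully: it reduces to $\int d_qV\,[\overline{1\cdot M_k}\,H^q_{2s,m,k}M_k\,e_Q]_0$ with $s=t-j>0$; since $H^q_{2s,m,k}$ for $s>0$ is $\upx^q$ of something in the weighted sense, one more application of $q$-Cauchy with the constant (degree $0$) on the left gives zero (the left factor is annihilated by $\upx^q$). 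For $j=t$ the descent bottoms out at $s=0$, i.e.\ $\int_{\mR^m_q}d_qV(\ux)\,[\,\overline{M_k}\,M_k\,e_Q(\ux^2)\,]_0 = \int_{\mS^{m-1}}d\xi\,[\overline{M_k}M_k]_0 \cdot \int_0^{\lambda_Q}d_qr\,r^{m-1}e_Q(-r^2) = 1\cdot\int_{-\lambda_Q}^{\lambda_Q}d_qr\,r^{m-1}e_Q(-r^2)/2$ (even integrand), which Lemma~\ref{gammagaus} evaluates to $\frac{1}{q+1}Q^{m/2}\Gamma_Q(m/2)$; tracking the accumulated $q$-powers, $(q+1)$-powers, and $[\,\cdot\,]_Q$-products from each descent step (these come from $\lambda_{2t,m,k}$, $C(2t,m,k)$, $\sigma_{2t}$, and the rescalings) then yields the claimed closed form $\delta_{jt}\delta_{kl}\delta_{pr}(q+1)^{4j-1}Q^{(j+1)(j+\beta)}[j]_Q!\,\Gamma_Q(j+\beta)$, and the odd case is identical with $\beta$ replaced by $\beta+1$ and an extra $\ux\cdot\ux=-r^2$ handled as above, producing the shift $m/2+k\to m/2+k+1$ in the Gamma factor and the $Q$-power $(j+1)(j+\beta+2)$.

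\textbf{Main obstacle.} The conceptual steps are routine; the real danger is bookkeeping. Keeping track of the exact power of $q$, the power of $(q+1)$, and the precise $Q$-shifted arguments of the $[\,\cdot\,]_Q$-factorials and $\Gamma_Q$ through each ``raise by \eqref{qexpdiracch}, rescale by \eqref{intnorm2}+Lemma~\ref{intnorm}$(iii)$, integrate by parts by Theorem~\ref{qCauchy}, lower by Lemma~\ref{pxCH}'' cycle is where a factor of $q$ or $(q+1)$ is most likely to be lost; in particular the $q^\Gamma$ that appears in the $q$-Cauchy formula acts as $q^{-k}$ on $M_k$ and must not be dropped, and the argument-doubling $\ux\mapsto q\ux$ in \eqref{qexpdiracch} interacts with the $q^m$ from \eqref{intnorm2} in a way that has to be handled consistently with the induction hypothesis (stated for general $\beta=(m+2k)/2$, since the effective dimension shifts). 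A clean way to insulate against errors is to prove the even case only, for general $\beta$, by induction on $j$ (with $t\ge j$), deriving the recursion $I_{j,t}^{(\beta)} = c(j,\beta)\,I_{j-1,t-1}^{(\beta+1)}$ with $c(j,\beta)$ read off from Lemmas~\ref{pxCH}, \ref{intnorm}, Theorem~\ref{creaCH}, and then verifying the base case $I_{0,0}^{(\beta)}$ against Lemma~\ref{gammagaus}; the odd case follows by writing $H^q_{2t+1,m,k}M_k = (q+1)^{2t+1}[t]_Q!\,\ux\,\mathcal{L}^{\beta}_t(r^2|Q)M_k$ and reducing $[\overline{\ux\,\cdots}\,\ux\,\cdots]_0$ to the even integral for the parameter $\beta+1$ via $\overline{\ux}\,\ux = r^2$ inside $[\cdot]_0$ after commuting past the scalar $\mathcal{L}$'s.
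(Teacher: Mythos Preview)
Your approach is essentially the paper's: rescale via \eqref{intnorm2}, feed in \eqref{qexpdiracch} to expose $\upx^q$, apply Theorem~\ref{qCauchy} (boundary term killed by Lemma~\ref{nulpexp}), then lower via Lemma~\ref{pxCH}, and iterate down to the base case evaluated by Lemma~\ref{gammagaus}. The paper performs exactly this cycle and obtains the two one--step recursions
\[
\langle H^q_{2j,m,k}M_k\,|\,H^q_{2t,m,k}M_k\rangle=Q^{\,t+\beta}(q+1)^2[j]_Q\,\langle H^q_{2j-1,m,k}M_k\,|\,H^q_{2t-1,m,k}M_k\rangle,
\]
\[
\langle H^q_{2j+1,m,k}M_k\,|\,H^q_{2t+1,m,k}M_k\rangle=Q^{\,t+1}(q+1)^2[j+\beta]_Q\,\langle H^q_{2j,m,k}M_k\,|\,H^q_{2t,m,k}M_k\rangle,
\]
alternating even\,$\to$\,odd\,$\to$\,even at \emph{fixed} $k$.

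Two small corrections to your bookkeeping plan. First, your proposed ``clean'' recursion $I_{j,t}^{(\beta)}=c(j,\beta)\,I_{j-1,t-1}^{(\beta+1)}$ with $\beta$ shifting is not what actually happens: $\beta=m/2+k$ stays fixed throughout, and the reduction passes through the odd polynomials at the same $k$ (this is where the factors $[j+\beta]_Q$ and the extra $Q$--power enter). If you want to avoid the even/odd alternation, the genuinely alternative route is to use \eqref{CHL} to reduce both parities to the one--variable $Q$--Laguerre inner product with weight $u^{\alpha}e_Q(-u)$ (parameters $\alpha=\beta-1$ and $\alpha=\beta$ respectively); that is consistent with your last paragraph, but it is a different argument, not a repackaging of the same one. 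Second, the $q^{\Gamma}$ in Theorem~\ref{qCauchy} does not simply act as $q^{-k}$: it hits $H^q_{j-1,m,k}(q\ux)M_k$, and by \eqref{1Gamma1}--\eqref{1Gamma2} one has $\Gamma\,\ux^{2l}M_k=-k\,\ux^{2l}M_k$ but $\Gamma\,\ux^{2l+1}M_k=(m-1+k)\,\ux^{2l+1}M_k$, so the eigenvalue flips with the parity of $j-1$. This parity--dependent factor is precisely what produces the asymmetry between the $Q^{\,t+\beta}$ and $Q^{\,t+1}$ prefactors above, so it must be tracked correctly.
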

\begin{proof}
Equation (\ref{HH}) implies that $k=l$ is necessary for the Clifford-Hermite polynomials not to be orthogonal. Using equation \eqref{intnorm2} and equation (\ref{qexpdiracch}) yields
\begin{eqnarray*}
\int_{\mR^m_q}d_qV(\ux)\,\overline{H^q_{j,m,k}M_k}\,H^q_{t,m,k}M_k\,e_Q(\ux^2)&=&q^{m+2k}\int_{q^{-1}\mR^m_q}d_qV(\ux)\,\overline{H^q_{j,m,k}(q\ux)M_k}\,H^q_{t,m,k}(q\ux)M_k\,e_Q(q^2\ux^2)\\
\end{eqnarray*}
\[=q^{m+2k+t}\int_{q^{-1}\mR^m_q}d_qV(\ux)\,\overline{H^q_{j,m,k}(q\ux)M_k}\,\left(\upx^qH^q_{t-1,m,k}(\ux)M_k\,e_Q(\ux^2)\right).
\]
Now we use theorem \ref{qCauchy} with $e_Q(-Q^{-1}\frac{1}{1-Q})=0$ (lemma \ref{nulpexp}) and lemma \ref{pxCH},
\begin{eqnarray*}
=q^{m+2k+t}\int_{q^{-1}\mR^m_q}d_qV(\ux)\,\overline{\left[q^\Gamma\upx^qH^q_{j,m,k}(q\ux)M_k\right]}\,H^q_{t-1,m,k}(q\ux)q^kM_k\,e_Q(q^2\ux^2)\\
=q^{m+3k+t+1}C(j,m,k)\int_{q^{-1}\mR^m_q}d_qV(\ux)\,\overline{\left[q^\Gamma H^q_{j-1,m,k}(q\ux)M_k\right]}\,H^q_{t-1,m,k}(q\ux)M_k\,e_Q(q^2\ux^2)\\
=q^{k+t+1}C(j,m,k)\int_{\mR^m_q}d_qV(\ux)\,\overline{\left[q^\Gamma H^q_{j-1,m,k}M_k\right]}\,H^q_{t-1,m,k}M_k\,e_Q(\ux^2).\\
\end{eqnarray*}
Substituting equation \eqref{1Gamma1} for the even case yields
\begin{eqnarray*}
\langle H^q_{2j,m,k}M_k|\,H^q_{2t,m,k}M_k\rangle&=&Q^{t+\frac{m}{2}+k}(q+1)^2[j]_Q\langle H^q_{2j-1,m,k}M_k|\,H^q_{2t-1,m,k}M_k\rangle
\end{eqnarray*}

and for the odd case
\begin{eqnarray*}
\langle H^q_{2j+1,m,k}M_k|\,H^q_{2t+1,m,k}M_k\rangle&=&Q^{t+1}(q+1)^2[j+k+\frac{m}{2}]_Q\langle H^q_{2j,m,k}M_k|\,H^q_{2t,m,k}M_k\rangle.
\end{eqnarray*}

The theorem follows from iterating these results and lemma \ref{gammagaus},
\begin{eqnarray*}
\int_{\mR^m_q}d_qV(\ux)\,\left[\overline{M_k^{(p)}}\,M_k^{(r)}\,e_Q(\ux^2)\right]_0&=&\int_{0}^{\infty}d_qr\,r^{m+2k-1}e_Q(-r^2)\delta_{pr}\\
&=&\frac{1}{q+1}Q^{\beta}\Gamma_Q(\beta)\delta_{pr}.
\end{eqnarray*}
\end{proof}

Finally we take a closer look at the even Clifford-Hermite polynomials $H_{2j,m,k}^qM_k$. Because $M_k\in\cH_k$, $M_k$ is of the form $\sum_{A}H_k^Aa_A$ with $a_A\in\mR_{0,m}$ and $H_k^A$ scalar spherical harmonics. From lemma \ref{pxCH} we find that $\Delta_q H^q_{2j,m,k}M_k=-C(2j,m,k)C(2j-1,m,k)H^q_{2j-2,m,k}M_k$. Because $\Delta_q$ and $H_{2j,m,k}$ are scalar this formula also holds for the each scalar part $H_{2j,m,k}^qH^A_k$. We define the scalar $q$-Clifford-Hermite polynomials as $H_{2j,m,k}^qH_k$ for $H_k$ a scalar spherical harmonic, the annihilation operator is given by
\begin{eqnarray}
\label{B}
\Delta_qH_{2j,m,k}^qH_k&=&-{(q+1)^4} [j]_Q[j+\frac{m}{2}+k-1]_QH_{2j-2,m,k}^qH_k.
\end{eqnarray}

In order to obtain the creation operator we apply theorem \ref{creaCH}$(ii)$,
\begin{eqnarray*}
H_{2j,m,k}M_k&=&q^{k-\mE}q^{2j}\left[\upx^q+(q+1)\ux\right]q^{k-\mE}q^{2j-1}\left[\upx^q+(q+1)\ux\right]H_{2j-2,m,k}M_k.
\end{eqnarray*}

Since this operator is again scalar, see lemma \ref{scalar}, this also holds for the scalar $q$-Clifford-Hermite polynomials,
\begin{eqnarray}
\label{A}
H_{2j,m,k}H_k&=&-Q^{2j+k-\mE-1}\left[\Delta_q-4E+q^2(q+1)^2r^2\right]H_{2j-2,m,k}H_k.
\end{eqnarray}

\subsection{Generalized Laguerre polynomials}

In the previous section we found $q$-deformed generalized Laguerre polynomials from the relation
\begin{eqnarray*}
H^q_{2t,m,k}(x)&=&(q+1)^{2t}[t]_Q!\cL_t^{\frac{m}{2}+k-1}(r^2|Q).
\end{eqnarray*}

For a general $\alpha>-1$ we define the $Q$-Laguerre polynomials as
\begin{eqnarray*}
\cL_t^{\alpha}(u|Q)&=&\sum_{i=0}^{t}Q^{\frac{1}{2}(t-i)(t-i+1)}\frac{(-u)^{i}}{[t-i]_Q![i]_Q!}\frac{\Gamma_Q(t+\alpha+1)}{\Gamma_Q(i+\alpha+1)}.
\end{eqnarray*}

These are the second type of Laguerre polynomials considered in \cite{MR0708496}. When we make the substitution $Q\to q^{-1}$, using $[k]_{q^{-1}}=q^{1-k}[k]_q$, we find
\begin{eqnarray*}
\cL_t^{\alpha}(u|q^{-1})&=&\sum_{i=0}^{t}q^{-\frac{1}{2}(t-i)(t-i+1)}\frac{(-u)^{i}}{[t-i]_{q^{-1}}![i]_{q^{-1}}!}\frac{[t+\alpha]_{q^{-1}}!}{[i+\alpha]_{q^{-1}}!}\\
&=&q^{-\frac{1}{2}t(t+1)-\alpha t}\sum_{i=0}^{t}q^{i(i+\alpha)}\frac{(-u)^{i}}{[t-i]_{q^{}}![i]_{q^{}}!}\frac{[t+\alpha]_{q^{}}!}{[i+\alpha]_{q^{}}!}.
\end{eqnarray*}

These are the $q$-Laguerre polynomials in \cite{MR0618759}, or with a different normalization in \cite{MR0486697}. In \cite{MR0708496} both the $q$-Laguerre polynomials, which are connected with the substitution ($q\leftrightarrow q^{-1}$) were studied. We could also have used a second type of $q$-Hermite polynomials (see \cite{MR0708496}) to generalize to the Clifford setting to obtain the $q$-Laguerre polynomials in \cite{MR0618759}. The $Q$-Laguerre polynomials can be defined as the solution of the $Q$-difference equation (see \cite{MR0708496})
\begin{eqnarray}
\label{defLag}
Q^{\alpha+1}u(\partial_u^Q)^2\cL_t^{\alpha}(u|Q)+([\alpha+1]_Q-u)\partial_u^Q\cL_t^{\alpha}(u|Q)&=&[-t]_Q\cL_t^{\alpha}(Qu|Q).
\end{eqnarray}
For $\alpha=\frac{m}{2}+k-1$ this is equivalent to the differential equation in definition \ref{defCHerm}. Equation (\ref{defLag}) can be written using the $q$-exponential
\begin{eqnarray}
\label{defLag2}
\partial_u^Q\left(e_Q(-u)u^{\alpha+1}\partial_u^Q\cL_t^{\alpha}(u|Q)\right)&=&[-t]_Qu^\alpha e_Q(-Qu)\cL_t^{\alpha}(Qu|Q).
\end{eqnarray}

Using this we can prove the orthogonality of the $Q$-Laguerre polynomials, which is another way to prove the orthogonality of the Clifford-Hermite polynomials.

\begin{theorem}
For $Q<1$ the $Q$-Laguerre polynomials for a fixed $\alpha>-1$ are orthogonal with respect to the inner product
\begin{eqnarray*}
\langle f|g\rangle_{\alpha}&=&\int_{0}^{\frac{1}{1-Q}}d_Qu \,u^{\alpha}f(u)g(u)e_Q(-u).
\end{eqnarray*}
\end{theorem}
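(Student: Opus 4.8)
The plan is to run the standard Sturm--Liouville orthogonality argument, but adapted to the $Q$-difference calculus, taking equation \eqref{defLag2} as the self-adjoint form of the $Q$-difference operator. First I would fix two Laguerre polynomials $\cL_s^\alpha(u|Q)$ and $\cL_t^\alpha(u|Q)$ with $s\neq t$, multiply the identity \eqref{defLag2} for $\cL_t^\alpha$ on the left by $\cL_s^\alpha(u|Q)$, $Q$-integrate over $[0,\frac{1}{1-Q}]$ against $d_Qu$, and then do the same with the roles of $s$ and $t$ interchanged. Subtracting the two relations, the right-hand sides produce $\left([-t]_Q\langle \cL_s^\alpha(u)|\cL_t^\alpha(Qu)\rangle' - [-s]_Q\langle \cL_t^\alpha(u)|\cL_s^\alpha(Qu)\rangle'\right)$ where the pairing is $\int_0^{1/(1-Q)}d_Qu\,u^\alpha e_Q(-Qu)\,(\cdot)(\cdot)$, while the left-hand sides are $Q$-integrals of $\partial_u^Q$ of something.

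The core step is then to integrate the left-hand side by parts using the fundamental theorem \eqref{partiele}:
\begin{eqnarray*}
\int_0^{\frac{1}{1-Q}}d_Qu\,\cL_s^\alpha(u)\,\partial_u^Q\!\left(e_Q(-u)u^{\alpha+1}\partial_u^Q\cL_t^\alpha(u)\right)
\end{eqnarray*}
should be rewritten, via the Leibniz rules \eqref{Leibniz}--\eqref{Leibniz2}, as a boundary term plus $-\int d_Qu\,\partial_u^Q(\cL_s^\alpha)(Qu)\,e_Q(-Qu)(Qu)^{\alpha+1}\,\partial_u^Q\cL_t^\alpha(u)\cdot Q^{-\alpha-1}$ or the symmetric variant; the precise shifts by powers of $Q$ are bookkeeping that I would track carefully but not belabor here. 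The boundary term vanishes at $u=0$ because of the factor $u^{\alpha+1}$ with $\alpha>-1$, and at $u=\frac{1}{1-Q}$ because $e_Q(-u)$ vanishes there by Lemma \ref{nulpexp} (with $k=0$, i.e. $e_Q(\frac{1}{Q^{-1}(q-1)})$-type zero; here $e_Q(-\frac{1}{1-Q})=0$). After a second integration by parts the left-hand side becomes manifestly symmetric in $s,t$, so subtracting kills it entirely. One then has to massage the surviving right-hand side: the pairings involving $\cL^\alpha(Qu)$ must be related back to $\langle\cL_s^\alpha|\cL_t^\alpha\rangle_\alpha$, which is where the substitution $u\mapsto Qu$ (Lemma \ref{intnorm}$(ii)$, rescaling the integration interval, again using that the endpoint $\frac{1}{1-Q}$ is a zero of $e_Q$ so the extra terms from Lemma \ref{intnorm}$(iii)$ drop out) produces an overall power of $Q$; one is left with $\left(Q^{?}[-t]_Q - Q^{?}[-s]_Q\right)\langle\cL_s^\alpha|\cL_t^\alpha\rangle_\alpha=0$, and since $[-t]_Q\neq[-s]_Q$ for $s\neq t$ and $Q<1$, the inner product vanishes.

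The main obstacle I anticipate is not conceptual but the careful tracking of the $Q$-shifts: in the $Q$-deformed Leibniz rules the argument of one factor gets dilated by $Q$ at each step, so after integrating by parts twice one has $\cL_s^\alpha(Q^2u)$ or similar appearing, and one must verify these dilations are compatible between the two halves of the computation so that the ``symmetric'' term really does cancel. A secondary point requiring care is checking that the weight $u^\alpha e_Q(-u)$ makes the integral convergent (finite) on $[0,\frac{1}{1-Q}]$ and that the formal manipulations with the Jackson integral are legitimate — this follows since everything in sight is a polynomial times $e_Q(-u)$, and $e_Q$ is an entire-type series that terminates effectively at the endpoint. I would also remark, as the paper does for the Clifford--Hermite case, that the resulting inner product is positive definite only when one restricts to functions on the discrete set $\{Q^j/(1-Q):j\in\mathbb{N}\}$, but positivity is not needed for orthogonality.
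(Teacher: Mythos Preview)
Your approach is correct and is in essence the paper's: both run the $Q$-deformed Sturm--Liouville argument built on the self-adjoint form \eqref{defLag2}, with boundary terms killed by $u^{\alpha+1}$ at $0$ and by Lemma~\ref{nulpexp} at $u=\frac{1}{1-Q}$. The paper's execution, however, is more economical and sidesteps precisely the shift-tracking you flag as the main obstacle. Instead of multiplying \eqref{defLag2} by $\cL_s^\alpha(u)$ and integrating by parts twice, the paper forms the Wronskian-type combination
\[
\cL_s^\alpha(u)\,e_Q(-u)\,u^{\alpha+1}\,\partial_u^Q\cL_t^\alpha(u)\;-\;\cL_t^\alpha(u)\,e_Q(-u)\,u^{\alpha+1}\,\partial_u^Q\cL_s^\alpha(u)
\]
and applies the Leibniz rule \eqref{Leibniz} once to its $\partial_u^Q$. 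The cross terms $(\partial_u^Q\cL_s)(u)\,e_Q(-u)\,u^{\alpha+1}\,(\partial_u^Q\cL_t)(u)$ then cancel \emph{pointwise}, with no $Q$-dilations ever appearing on them, and what remains is $([-t]_Q-[-s]_Q)\,u^\alpha e_Q(-Qu)\,\cL_s^\alpha(Qu)\,\cL_t^\alpha(Qu)$. A single application of \eqref{partiele} and the change of variable you describe finish the proof.

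Equivalently, in your integration-by-parts framework: multiply \eqref{defLag2} by $\cL_s^\alpha(Qu)$ rather than $\cL_s^\alpha(u)$. Then the right-hand side is already symmetric in $s,t$, and one integration by parts via \eqref{Leibniz} gives the symmetric bulk term $-\int(\partial_u^Q\cL_s)(u)\,e_Q(-u)\,u^{\alpha+1}\,(\partial_u^Q\cL_t)(u)\,d_Qu$, so no second step is needed. Your plan with $\cL_s^\alpha(u)$ and two integrations by parts can be pushed through, but after one step the bulk terms carry mismatched arguments $(\partial_u^Q\cL_s)(u)\,(\partial_u^Q\cL_t)(Qu)$ versus $(\partial_u^Q\cL_t)(u)\,(\partial_u^Q\cL_s)(Qu)$, and making these agree requires exactly the delicate rescalings you anticipate; the paper's route avoids them entirely.
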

\begin{proof}
Equation (\ref{defLag2}) leads to the following relation
\begin{equation*}
\partial_u^Q\left[\cL_j^{\alpha}(u|Q)e_Q(-u)u^{\alpha+1}\partial_u^Q\cL_t^{\alpha}(u|Q)-\cL_t^{\alpha}(u|Q)e_Q(-u)u^{\alpha+1}\partial_u^Q\cL_j^{\alpha}(u|Q)\right]
\end{equation*}
\begin{equation*}
\label{vrLagorth}
=([-t]_Q-[-j]_Q)u^\alpha\cL_j^{\alpha}(Qu|Q) e_Q(-Qu)\cL_t^{\alpha}(Qu|Q).
\end{equation*}
The orthogonality then follows from (\ref{partiele}). For $\alpha=\frac{m}{2}+k-1$ the result can also be found from theorem \ref{CHorth} and lemma \ref{intqinv}$(i)$.
\end{proof}

Finally we construct a family of realizations of $U_q\left(\mathfrak{su}(1|1)\right)$ for which the $q$-Laguerre polynomials will be the eigenvectors of their representations. We define 
\begin{eqnarray*}
A=\frac{Q^{-\mE-\frac{m}{2}}\left[\Delta_q-4E+q^2(q+1)^2r^2\right]}{(q+1)^2} &\mbox{and}& B=\frac{\Delta_q}{(q+1)^2}
\end{eqnarray*}

and write equations \eqref{A} and \eqref{B} in terms of the $q$-Laguerre polynomials,
\begin{eqnarray*}
A\cL_{j-1}^{\frac{m}{2}+k-1}(r^2|Q)H_k&=&-[j]_QQ^{1-2j-k-\frac{m}{2}}\cL_{j}^{\frac{m}{2}+k-1}(r^2|Q)H_k
\end{eqnarray*}

and
\begin{eqnarray*}
B\cL_{j}^{\frac{m}{2}+k-1}(r^2|Q)H_k&=&-[j+\frac{m}{2}+k-1]_Q\cL_{j-1}^{\frac{m}{2}+k-1}(r^2|Q)H_k.
\end{eqnarray*}

We define $C=[A,B]_Q$, from its definition we find
\begin{eqnarray*}
C\cL_{j}^{\frac{m}{2}+k-1}(r^2|Q)H_k&=&Q^{1-2j-k-\frac{m}{2}}\left([j+\frac{m}{2}+k-1]_Q[j]_Q-Q^{-1}[j+\frac{m}{2}+k]_Q[j+1]_Q\right)\cL_{j}^{\frac{m}{2}+k-1}(r^2|Q)H_k\\
&=&Q^{1-2j-k-\frac{m}{2}}\left(\frac{Q^{2j+\frac{m}{2}+k-1}+1-Q^{2j+\frac{m}{2}+k}-Q^{-1}}{(Q-1)^2}\right)\cL_{j}^{\frac{m}{2}+k-1}(r^2|Q)H_k\\
&=&-Q^{-2j-k-\frac{m}{2}}[2j+k+\frac{m}{2}]_Q\cL_{j}^{\frac{m}{2}+k-1}(r^2|Q)H_k.
\end{eqnarray*}

These calculations yield
\begin{eqnarray*}
(AC-Q^2CA)\cL_{j}^{\frac{m}{2}+k-1}(r^2|Q)H_k&=&(Q+1)A\cL_{j}^{\frac{m}{2}+k-1}(r^2|Q)H_k
\end{eqnarray*}

and
\begin{eqnarray*}
(CB-Q^2BC)\cL_{j}^{\frac{m}{2}+k-1}(r^2|Q)H_k&=&(Q+1)B\cL_{j}^{\frac{m}{2}+k-1}(r^2|Q)H_k.
\end{eqnarray*}

Since the scalar Clifford-Hermite polynomials constitute a basis for $\mR[x_1,\cdots,x_m]$ (lemma \ref{fischerdecomp}) this suffices to prove the following $\mathfrak{su}(1|1)_q$-relations,
\begin{eqnarray*}
\left[ A,B \right]_{Q} &=& C\\
\left[ A, C\right]_{Q^2} &=& (Q+1)A \\
\left[ C,B \right]_{Q^2} &=& (Q+1)B.
\end{eqnarray*}

Hence we obtain a family of representations of $\mathfrak{su}(1|1)_q$. This fits into the theory of relations between representations of quantum algebras and $q$-special functions, see e.g. \cite{MR1121848, MR1803885, MR2024824}. For every $k,m\in\mN$, we define the operator $A_k^m$ on the space of polynomials in one variable $\mR[t]$ by
\begin{eqnarray*}
\left[A_k^m f(t)\right](t=r^2) H_k(\ux)&=&Af(r^2)H_k(\ux) 
\end{eqnarray*}

with $\ux\in\mR^m$ and $H_k$ an arbitrary spherical harmonic of degree $k$. The operators $B^m_k$ and $C^m_k$ are defined similarly.

\begin{theorem}
For every $k,m\in\mN$, the set of operators $\{A_k^m,B_k^m,C_k^m\}$ generate the $\mathfrak{su}(1|1)_q$-quantum algebra. The basis $\{\cL_{j}^{\frac{m}{2}+k-1}(t|Q)|j\in\mN \}$ of $\mR[t]$ is the set of eigenvectors for this representation of $\mathfrak{su}(1|1)_q$. 
\end{theorem}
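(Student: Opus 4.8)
The plan is to reduce everything to relations and eigenvalue formulas already established for the operators $A$, $B$, $C$ acting on $\mathcal{P}$, and to transport them to $\mathbb{R}[t]$ through the correspondence $\Phi_k\colon f(t)\mapsto f(r^2)H_k(\ux)$, where $H_k$ is a fixed scalar spherical harmonic of degree $k$. The first task is to check that $A_k^m$, $B_k^m$, $C_k^m$ are well defined on $\mathbb{R}[t]$. The key point is that $A$, $B$ and $C$ each preserve the subspace $\bigoplus_j r^{2j}H_k\subset\mathcal{P}$: for $B=\Delta_q/(q+1)^2$ this is lemma \ref{laplHk}; for $r^2$ it is trivial; and since $E$ is a scalar $SO(m)$-invariant operator, $A$ --- built from $\Delta_q$, $E$ and $r^2$ --- preserves it too, hence so does $C=[A,B]_Q$. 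Moreover, inspecting equations \eqref{A}, \eqref{B} and the computation of $C\,\cL_j^{\frac{m}{2}+k-1}(r^2|Q)H_k$ preceding the theorem, the scalar coefficient produced when one of these operators acts on $r^{2j}H_k$ depends only on $j$, $k$ and $m$ --- not on the particular $H_k\in\mathcal{H}_k$, since it only involves $\mE$ and the eigenvalue $-[k]_Q[m-2+k]_Q$ of $q^{\mE}\Delta_{LB}^q$ on $H_k$. Hence $[A_k^m f(t)](t=r^2)H_k=A f(r^2)H_k$ unambiguously defines a linear operator on $\mathbb{R}[t]$, and likewise for $B_k^m$, $C_k^m$. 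Since $\Phi_k$ is injective (a nonzero polynomial $f$ cannot make $f(r^2)$ vanish identically), it is a linear bijection onto $\bigoplus_j r^{2j}H_k$ intertwining $A$ with $A_k^m$, $B$ with $B_k^m$, and $C$ with $C_k^m$.

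From this the $\mathfrak{su}(1|1)_q$-relations for $\{A_k^m,B_k^m,C_k^m\}$ follow immediately from the relations $[A,B]_Q=C$, $[A,C]_{Q^2}=(Q+1)A$, $[C,B]_{Q^2}=(Q+1)B$ already proved above: conjugating each identity by $\Phi_k$ gives the same identity on $\mathbb{R}[t]$, e.g. $[A_k^m,B_k^m]_Q=\Phi_k^{-1}[A,B]_Q\Phi_k=\Phi_k^{-1}C\Phi_k=C_k^m$. Hence $\{A_k^m,B_k^m,C_k^m\}$ generate the $\mathfrak{su}(1|1)_q$-quantum algebra.

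For the eigenvector statement I would first note that $\{\cL_j^{\frac{m}{2}+k-1}(t|Q)\mid j\in\mathbb{N}\}$ is a basis of $\mathbb{R}[t]$: from its explicit expansion, $\cL_j^{\alpha}(u|Q)$ has degree exactly $j$ in $u$ (leading coefficient $(-1)^j/[j]_Q!\neq 0$), so the family is triangular with respect to the monomials $t^j$. Then, applying $\Phi_k^{-1}$ to equation \eqref{A}, equation \eqref{B} and the displayed computation of $C\,\cL_j^{\frac{m}{2}+k-1}(r^2|Q)H_k$ yields
\begin{eqnarray*}
A_k^m\,\cL_{j-1}^{\frac{m}{2}+k-1}(t|Q)&=&-[j]_QQ^{1-2j-k-\frac{m}{2}}\cL_{j}^{\frac{m}{2}+k-1}(t|Q),\\
B_k^m\,\cL_{j}^{\frac{m}{2}+k-1}(t|Q)&=&-[j+\frac{m}{2}+k-1]_Q\cL_{j-1}^{\frac{m}{2}+k-1}(t|Q),\\
C_k^m\,\cL_{j}^{\frac{m}{2}+k-1}(t|Q)&=&-Q^{-2j-k-\frac{m}{2}}[2j+k+\frac{m}{2}]_Q\cL_{j}^{\frac{m}{2}+k-1}(t|Q).
\end{eqnarray*}
The third line exhibits each $\cL_j^{\frac{m}{2}+k-1}(t|Q)$ as an eigenvector of $C_k^m$ with the indicated eigenvalue, while $A_k^m$ and $B_k^m$ act as raising and lowering operators; since these vectors form a basis, this is precisely the eigenbasis of the representation.

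I expect the only genuinely delicate point to be the well-definedness in the first step --- verifying that $A f(r^2)H_k$ is of the form $g(r^2)H_k$ with $g$ independent of the chosen $H_k\in\mathcal{H}_k$. Everything after that is bookkeeping and substitution into formulas established earlier in the paper.
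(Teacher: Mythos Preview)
Your proposal is correct and follows essentially the same route as the paper, which proves the theorem by the computations immediately preceding it: checking the $\mathfrak{su}(1|1)_q$-relations on the basis $\{\cL_j^{\frac{m}{2}+k-1}(r^2|Q)H_k\}$ of $\mR[x_1,\ldots,x_m]$ and then passing to $\mR[t]$ via the definition of $A_k^m$, $B_k^m$, $C_k^m$. Your version is more explicit than the paper about the transport mechanism $\Phi_k$ and about why $A_k^m$, $B_k^m$, $C_k^m$ are well defined (the paper simply writes down the defining formula without comment); your triangular argument that the Laguerre polynomials form a basis of $\mR[t]$ is also a detail the paper leaves implicit.
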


The $\mathfrak{su}(1|1)_q$ algebra appearing here can be linked with the version of $U_Q\left(\mathfrak{su}(1|1)\right)$ in \cite{MR2024824}. Define $J_0$ by the relation
\begin{eqnarray*}
C&=&-Q^{-2J_0}[2J_0]_Q,
\end{eqnarray*}
this implies $J_0\cL_{j}^{\frac{m}{2}+k-1}(r^2|Q)H_k=\frac{1}{2}\left(2j+\frac{m}{2}+k\right)\cL_{j}^{\frac{m}{2}+k-1}(r^2|Q)H_k$, so
\begin{eqnarray*}
\left[ J_0,A \right] =A &\mbox{and}&\left[ J_0, B\right]= -B.
\end{eqnarray*}

By defining $J_+=Q^{J_0}A$ and $J_-=qB$ we calculate
\begin{eqnarray*}
\left[ J_-,J_+ \right] &=& qBQ^{J_0}A-qQ^{J_0}AB\\
&=&qQ^{J_0}\left(QBA-AB\right)\\
&=&-qQ^{J_0}C\\
&=&\frac{Q^{J_0}-Q^{-J_0}}{Q^{1/2}-Q^{-1/2}}.
\end{eqnarray*}

This relation together with $\left[ J_0,J_{\pm} \right] = \pm J_{\pm}$, shows that $J_\pm$ and $J_0$ generate the $U_Q\left(\mathfrak{su}(1|1)\right)$ algebra in \cite{MR2024824}.

\section{Conclusion}
Our aim was to extend the existing $q$-calculus with a theory of partial derivatives in higher dimensions and a $q$-Laplace operator acting on functions in commuting variables. This was done by imposing four natural axioms that a $q$-Dirac operator should satisfy and led to a unique $q$-Dirac operator. Since this is vector operator, it implies the definition of $q$-partial derivatives. The $q$-Laplace operator was defined as minus the square of the $q$-Dirac operator and is scalar. 

The $q$-Dirac operator and the vector variable generate the quantum algebra $\mathfrak{osp}(1|2)_q$, the $q$-Laplace operator and the norm squared generate $\mathfrak{sl}_2(\mR)_q$, the even subalgebra of $\mathfrak{osp}(1|2)_q$. This $\mathfrak{sl}_2(\mR)_q$ already appeared in other $q$-calculus problems and in quantum Euclidean space. Since the $q$-Dirac and $q$-Laplace operator still possess their $Spin(m)$ and $SO(m)$ invariance, we obtained the Howe dual pairs $\left(Spin(m),\mathfrak{osp}(1|2)_q\right)$ and $\left(SO(m),\mathfrak{sl}_2(\mR)_q\right)$. This can be a starting point for the study of general Howe duality including quantum algebras and quantum groups.

The $q$-Laplace operator defines a $q$-Schr\"odinger equation. It is shown that the $SO(m)$-invariant $q$-Schr\"odinger equation on undeformed Euclidean space is equivalent with the $SO_q(m)$-invariant Schr\"odinger equation on quantum Euclidean space. This is an example of the interaction between quantum groups and $q$-calculus.

The $q$-difference equation for $q$-Hermite polynomials and the $q$-Dirac operator lead to a $q$-deformation of the Clifford-Hermite equation. The corresponding $q$-Clifford-Hermite polynomials have creation and annihilation operators and satisfy a recursion formula. These properties and a $q$-Cauchy formula lead to an orthogonality relation for the $q$-Clifford-Hermite polynomials. The $q$-Clifford-Hermite polynomials can be expressed in terms of the $q$-Laguerre polynomials. This identification leads to a realization of the $\mathfrak{su}_q(1|1)$ algebra action on $\mR[t]$. The weight vectors of this representation are $q$-Laguerre polynomials. This gives a new $q$-calculus interpretation to the appearance of quantum algebras in the representation theory of q-special functions.

\subsection*{Acknowledgment}
The authors would like to thank Hendrik De Bie for helpful suggestions and comments.

\end{document}